\definecolor{Linen}{rgb}{0.9803,0.9411,0.9019} 
\definecolor{White}{rgb}{1,1,1}
\definecolor{Coral}{rgb}{1,0.4980,0.3137}
\definecolor{Grayblue}{rgb}{0.9411,0.9411,0.9803}
\definecolor{DarkLinen}{rgb}{0.729,0.7176,0.635}
\definecolor{Linen}{rgb}{0.9803,0.9411,0.9019}
\definecolor{White}{rgb}{1,1,1}
\definecolor{Coral}{rgb}{1,0.4980,0.3137}
\definecolor{Grayblue}{rgb}{0.9411,0.9411,0.9803}
\definecolor{DarkLinen}{rgb}{0.729,0.7176,0.635}
\newcommand*{\Strut}[1][0.1em]{\vrule\@width\z@\@height#1\@depth\z@\relax}
\def\BibTeX{{\rm B\kern-.05em{\sc i\kern-.025em b}\kern-.08em
    T\kern-.1667em\lower.7ex\hbox{E}\kern-.125emX}}
\newtheorem{theorem}{Theorem}
\newtheorem{Proposition}[theorem]{Proposition}
\def\Cline#1#2{\@Cline#1#2\@nil}
\def\@Cline#1-#2#3\@nil{%
  \omit
  \@multicnt#1%
  \advance\@multispan\m@ne
  \ifnum\@multicnt=\@ne\@firstofone{&\omit}\fi
  \@multicnt#2%
  \advance\@multicnt-#1%
  \advance\@multispan\@ne
  \leaders\hrule\@height#3\hfill
  \cr}
\begin{document}

\begin{acronym}

\acro{5G-NR}{5G New Radio}
\acro{3GPP}{3rd Generation Partnership Project}
\acro{AC}{address coding}
\acro{ACF}{autocorrelation function}
\acro{ACR}{autocorrelation receiver}
\acro{ADC}{analog-to-digital converter}
\acrodef{aic}[AIC]{Analog-to-Information Converter}     
\acro{AIC}[AIC]{Akaike information criterion}
\acro{aric}[ARIC]{asymmetric restricted isometry constant}
\acro{arip}[ARIP]{asymmetric restricted isometry property}

\acro{ARQ}{Automatic Repeat Request}
\acro{AUB}{asymptotic union bound}
\acrodef{awgn}[AWGN]{Additive White Gaussian Noise}     
\acro{AWGN}{additive white Gaussian noise}

\acro{APSK}[PSK]{asymmetric PSK} 

\acro{waric}[AWRICs]{asymmetric weak restricted isometry constants}
\acro{warip}[AWRIP]{asymmetric weak restricted isometry property}
\acro{BCH}{Bose, Chaudhuri, and Hocquenghem}        
\acro{BCHC}[BCHSC]{BCH based source coding}
\acro{BEP}{bit error probability}
\acro{BFC}{block fading channel}
\acro{BG}[BG]{Bernoulli-Gaussian}
\acro{BGG}{Bernoulli-Generalized Gaussian}
\acro{BPAM}{binary pulse amplitude modulation}
\acro{BPDN}{Basis Pursuit Denoising}
\acro{BPPM}{binary pulse position modulation}
\acro{BPSK}{Binary Phase Shift Keying}
\acro{BPZF}{bandpass zonal filter}
\acro{BSC}{binary symmetric channels}              
\acro{BU}[BU]{Bernoulli-uniform}
\acro{BER}{bit error rate}
\acro{BS}{base station}

\acro{CAPEX}{CAPital EXpenditures}
\acro{CP}{Cyclic Prefix}
\acrodef{cdf}[CDF]{cumulative distribution function}   
\acro{CDF}{Cumulative Distribution Function}
\acrodef{c.d.f.}[CDF]{cumulative distribution function}
\acro{CCDF}{complementary cumulative distribution function}
\acrodef{ccdf}[CCDF]{complementary CDF}               
\acrodef{c.c.d.f.}[CCDF]{complementary cumulative distribution function}
\acro{CD}{cooperative diversity}

\acro{CDMA}{Code Division Multiple Access}
\acro{ch.f.}{characteristic function}
\acro{CIR}{channel impulse response}
\acro{cosamp}[CoSaMP]{compressive sampling matching pursuit}
\acro{CR}{cognitive radio}
\acro{cs}[CS]{compressed sensing}                   
\acrodef{cscapital}[CS]{Compressed sensing} 
\acrodef{CS}[CS]{compressed sensing}
\acro{CSI}{channel state information}
\acro{CCSDS}{consultative committee for space data systems}
\acro{CC}{convolutional coding}
\acro{Covid19}[COVID-19]{Coronavirus disease}

\acro{DAA}{detect and avoid}
\acro{DAB}{digital audio broadcasting}
\acro{DCT}{discrete cosine transform}
\acro{dft}[DFT]{discrete Fourier transform}
\acro{DR}{distortion-rate}
\acro{DS}{direct sequence}
\acro{DS-SS}{direct-sequence spread-spectrum}
\acro{DTR}{differential transmitted-reference}
\acro{DVB-H}{digital video broadcasting\,--\,handheld}
\acro{DVB-T}{digital video broadcasting\,--\,terrestrial}
\acro{DL}{DownLink}
\acro{DSSS}{Direct Sequence Spread Spectrum}
\acro{DFT-s-OFDM}{Discrete Fourier Transform-spread-Orthogonal Frequency Division Multiplexing}
\acro{DAS}{Distributed Antenna System}
\acro{DNA}{DeoxyriboNucleic Acid}

\acro{EC}{European Commission}
\acro{EED}[EED]{exact eigenvalues distribution}
\acro{EIRP}{Equivalent Isotropically Radiated Power}
\acro{ELP}{equivalent low-pass}
\acro{eMBB}{enhanced Mobile BroadBand}
\acro{EMF}{ElectroMagnetic Field}
\acro{EU}{European union}

\acro{FC}[FC]{fusion center}
\acro{FCC}{Federal Communications Commission}
\acro{FEC}{forward error correction}
\acro{FFT}{fast Fourier transform}
\acro{FH}{frequency-hopping}
\acro{FH-SS}{frequency-hopping spread-spectrum}
\acrodef{FS}{Frame synchronization}
\acro{FSsmall}[FS]{frame synchronization}  
\acro{FDMA}{Frequency Division Multiple Access}

\acro{GA}{Gaussian approximation}
\acro{GF}{Galois field }
\acro{GG}{Generalized-Gaussian}
\acro{GIC}[GIC]{generalized information criterion}
\acro{GLRT}{generalized likelihood ratio test}
\acro{GPS}{Global Positioning System}
\acro{GMSK}{Gaussian Minimum Shift Keying}
\acro{GSMA}{Global System for Mobile communications Association}

\acro{HAP}{high altitude platform}

\acro{IDR}{information distortion-rate}
\acro{IFFT}{inverse fast Fourier transform}
\acro{iht}[IHT]{iterative hard thresholding}
\acro{i.i.d.}{independent, identically distributed}
\acro{IoT}{Internet of Things}                      
\acro{IR}{impulse radio}
\acro{lric}[LRIC]{lower restricted isometry constant}
\acro{lrict}[LRICt]{lower restricted isometry constant threshold}
\acro{ISI}{intersymbol interference}
\acro{ITU}{International Telecommunication Union}
\acro{ICNIRP}{International Commission on Non-Ionizing Radiation Protection}
\acro{IEEE}{Institute of Electrical and Electronics Engineers}
\acro{ICES}{IEEE international committee on electromagnetic safety}
\acro{IEC}{International Electrotechnical Commission}
\acro{IARC}{International Agency on Research on Cancer}
\acro{IS-95}{Interim Standard 95}

\acro{LEO}{low earth orbit}
\acro{LF}{likelihood function}
\acro{LLF}{log-likelihood function}
\acro{LLR}{log-likelihood ratio}
\acro{LLRT}{log-likelihood ratio test}
\acro{LOS}{Line-of-Sight}
\acro{LRT}{likelihood ratio test}
\acro{wlric}[LWRIC]{lower weak restricted isometry constant}
\acro{wlrict}[LWRICt]{LWRIC threshold}
\acro{LPWAN}{Low Power Wide Area Network}
\acro{LoRaWAN}{Low power long Range Wide Area Network}
\acro{NLOS}{Non-Line-of-Sight}
\acro{LiFi}[Li-Fi]{light-fidelity}
 \acro{LED}{light emitting diode}

\acro{MB}{multiband}
\acro{MC}{multicarrier}
\acro{MDS}{mixed distributed source}
\acro{MF}{matched filter}
\acro{m.g.f.}{moment generating function}
\acro{MI}{mutual information}
\acro{MIMO}{Multiple-Input Multiple-Output}
\acro{MILP}{Mixed Integer Linear Programming}
\acro{MINLP}{Mixed Integer Non-Linear Programming}
\acro{MISO}{multiple-input single-output}
\acrodef{maxs}[MJSO]{maximum joint support cardinality}                       
\acro{ML}[ML]{maximum likelihood}
\acro{MMSE}{minimum mean-square error}
\acro{MMV}{multiple measurement vectors}
\acrodef{MOS}{model order selection}
\acro{M-PSK}[${M}$-PSK]{$M$-ary phase shift keying}                       
\acro{M-APSK}[${M}$-PSK]{$M$-ary asymmetric PSK} 

\acro{M-QAM}[$M$-QAM]{$M$-ary quadrature amplitude modulation}
\acro{MRC}{maximal ratio combiner}                  
\acro{maxs}[MSO]{maximum sparsity order}                                      
\acro{M2M}{Machine-to-Machine}                                                
\acro{MUI}{multi-user interference}
\acro{mMTC}{massive Machine Type Communications}      
\acro{mm-Wave}{millimeter-wave}
\acro{MP}{mobile phone}
\acro{MPE}{maximum permissible exposure}
\acro{MAC}{media access control}
\acro{NB}{narrowband}
\acro{NBI}{narrowband interference}
\acro{NLA}{nonlinear sparse approximation}
\acro{NLOS}{Non-Line of Sight}
\acro{NTIA}{National Telecommunications and Information Administration}
\acro{NTP}{National Toxicology Program}
\acro{NHS}{National Health Service}

\acro{OC}{optimum combining}                             
\acro{OC}{optimum combining}
\acro{ODE}{operational distortion-energy}
\acro{ODR}{operational distortion-rate}
\acro{OFDM}{Orthogonal Frequency-Division Multiplexing}
\acro{omp}[OMP]{orthogonal matching pursuit}
\acro{OSMP}[OSMP]{orthogonal subspace matching pursuit}
\acro{OQAM}{offset quadrature amplitude modulation}
\acro{OQPSK}{offset QPSK}
\acro{OFDMA}{Orthogonal Frequency-division Multiple Access}
\acro{OPEX}{Operating Expenditures}
\acro{OQPSK/PM}{OQPSK with phase modulation}

\acro{PAM}{pulse amplitude modulation}
\acro{PAR}{peak-to-average ratio}
\acrodef{pdf}[PDF]{probability density function}                      
\acro{PDF}{probability density function}
\acrodef{p.d.f.}[PDF]{probability distribution function}
\acro{PDP}{power dispersion profile}

\acro{PMF}{probability mass function}                             
\acrodef{p.m.f.}[PMF]{probability mass function}
\acro{PN}{pseudo-noise}
\acro{PPM}{pulse position modulation}
\acro{PRake}{Partial Rake}
\acro{PSD}{power spectral density}
\acro{PSEP}{pairwise synchronization error probability}
\acro{PSK}{phase shift keying}
\acro{PD}{power density}
\acro{8-PSK}[$8$-PSK]{$8$-phase shift keying}

\acro{FSK}{Frequency Shift Keying}

\acro{QAM}{Quadrature Amplitude Modulation}
\acro{QPSK}{Quadrature Phase Shift Keying}
\acro{OQPSK/PM}{OQPSK with phase modulator }

\acro{RD}[RD]{raw data}
\acro{RDL}{"random data limit"}
\acro{ric}[RIC]{restricted isometry constant}
\acro{rict}[RICt]{restricted isometry constant threshold}
\acro{rip}[RIP]{restricted isometry property}
\acro{ROC}{receiver operating characteristic}
\acro{rq}[RQ]{Raleigh quotient}
\acro{RS}[RS]{Reed-Solomon}
\acro{RSC}[RSSC]{RS based source coding}
\acro{r.v.}{random variable}                               
\acro{R.V.}{random vector}
\acro{RMS}{root mean square}
\acro{RFR}{radiofrequency radiation}
\acro{RIS}{Reconfigurable Intelligent Surface}
\acro{RNA}{RiboNucleic Acid}

\acro{SA}[SA-Music]{subspace-augmented MUSIC with OSMP}
\acro{SCBSES}[SCBSES]{Source Compression Based Syndrome Encoding Scheme}
\acro{SCM}{sample covariance matrix}
\acro{SEP}{symbol error probability}
\acro{SG}[SG]{sparse-land Gaussian model}
\acro{SIMO}{single-input multiple-output}
\acro{SINR}{signal-to-interference plus noise ratio}
\acro{SIR}{signal-to-interference ratio}
\acro{SISO}{Single-Input Single-Output}
\acro{SMV}{single measurement vector}
\acro{SNR}[\textrm{SNR}]{signal-to-noise ratio} 
\acro{sp}[SP]{subspace pursuit}
\acro{SS}{spread spectrum}
\acro{SW}{sync word}
\acro{SAR}{specific absorption rate}
\acro{SSB}{synchronization signal block}

\acro{TH}{time-hopping}
\acro{ToA}{time-of-arrival}
\acro{TR}{transmitted-reference}
\acro{TW}{Tracy-Widom}
\acro{TWDT}{TW Distribution Tail}
\acro{TCM}{trellis coded modulation}
\acro{TDD}{Time-Division Duplexing}
\acro{TDMA}{Time Division Multiple Access}
\acro{TMC}{Torrino MezzoCammino}

\acro{UAV}{Unmanned Aerial Vehicle}
\acro{uric}[URIC]{upper restricted isometry constant}
\acro{urict}[URICt]{upper restricted isometry constant threshold}
\acro{UWB}{ultrawide band}
\acro{UWBcap}[UWB]{Ultrawide band}   
\acro{URLLC}{Ultra Reliable Low Latency Communications}
         
\acro{wuric}[UWRIC]{upper weak restricted isometry constant}
\acro{wurict}[UWRICt]{UWRIC threshold}                
\acro{UE}{User Equipment}
\acro{UL}{UpLink}

\acro{WiM}[WiM]{weigh-in-motion}
\acro{WLAN}{wireless local area network}
\acro{wm}[WM]{Wishart matrix}                               
\acroplural{wm}[WM]{Wishart matrices}
\acro{WMAN}{wireless metropolitan area network}
\acro{WPAN}{wireless personal area network}
\acro{wric}[WRIC]{weak restricted isometry constant}
\acro{wrict}[WRICt]{weak restricted isometry constant thresholds}
\acro{wrip}[WRIP]{weak restricted isometry property}
\acro{WSN}{wireless sensor network}                        
\acro{WSS}{Wide-Sense Stationary}
\acro{WHO}{World Health Organization}
\acro{Wi-Fi}{Wireless Fidelity}

\acro{sss}[SpaSoSEnc]{sparse source syndrome encoding}

\acro{VLC}{Visible Light Communication}
\acro{VPN}{Virtual Private Network} 
\acro{RF}{Radio Frequency}
\acro{FSO}{Free Space Optics}
\acro{IoST}{Internet of Space Things}

\acro{GSM}{Global System for Mobile Communications}
\acro{2G}{Second-generation cellular network}
\acro{3G}{Third-generation cellular network}
\acro{4G}{Fourth-generation cellular network}
\acro{5G}{Fifth-generation cellular network}	
\acro{gNB}{next-generation Node-B}
\acro{NR}{New Radio}
\acro{UMTS}{Universal Mobile Telecommunications Service}
\acro{LTE}{Long Term Evolution}

\acro{QoS}{Quality of Service}
\end{acronym}

\theoremstyle{remark}
\theoremstyle{definition}
\newtheorem{defin}{Definition}
\newtheorem{assum}{Assumption}
\newtheorem{rem}{Remark}
\newtheorem{ins}{Insight}

\title{5G Network Planning under Service and EMF Constraints: Formulation and Solutions}
\author{Luca Chiaraviglio$^{(1,2)}$, Cristian Di Paolo,$^{(2)}$, Nicola Blefari-Melazzi$^{(1,2)}$\\
1) Department of Electronic Engineering, University of Rome Tor Vergata, Rome, Italy,\\ email: luca.chiaraviglio@uniroma2.it, crisdp95@gmail.com, blefari@uniroma2.it\\
2) Consorzio Nazionale Interuniversitario per le Telecomunicazioni (CNIT), Italy}

\IEEEcompsoctitleabstractindextext{
\begin{abstract}
We target the planning of a 5G cellular network under 5G service and ElectroMagnetic Fields (EMFs) constraints. We initially model the problem with a Mixed Integer Linear Programming (MILP) formulation. The pursued objective is a weighed function of  {next-generation Node-B} (gNB) installation costs and 5G service coverage level  {from a massive Multiple Input Multiple Output (MIMO) system}. In addition, we precisely model restrictive EMF constraints and we integrate scaling parameters to estimate the power radiated by 5G gNBs. Since the considered planning problem is NP-Hard, and therefore very challenging to be solved even for small problem instances, we design an efficient heuristic, called \textsc{PLATEA}, to practically solve it. Results, obtained over a realistic scenario that includes EMF exposure from pre-5G technologies (e.g., 2G, 3G, 4G), prove that  {the cellular planning selected by \textsc{PLATEA}} ensures 5G service and restrictive EMF constraints. However, we demonstrate that the results are strongly affected by: \textit{i}) the relative weight between gNB installation costs and 5G service coverage level, \textit{ii}) the scaling parameters to estimate the exposure generated by 5G gNBs, \textit{iii}) the amount of exposure from pre-5G technologies  {and \textit{iv}) the adopted frequency reuse scheme}.  
\end{abstract}

\begin{IEEEkeywords}
5G Mobile Networks, 5G Network Planning, Base Station Deployment, Service and EMF constraints, EMF regulations
\end{IEEEkeywords}}

\maketitle

\IEEEdisplaynontitleabstractindextext

\IEEEpeerreviewmaketitle

\IEEEraisesectionheading{\section{Introduction}}
The provisioning of the 5G service inevitably requires the installation of new 5G equipment, called \ac{gNB}, over the territory. The task of selecting and configuring the set of sites hosting 5G equipment is often referred as 5G \text{cellular planning} \cite{chiaraviglio2018planning}, a complex problem that involves costs, service coverage and \ac{EMF} constraints. In general, the planning of a cellular network is a critical step that has a huge impact on the \ac{CAPEX} costs incurred by the operator \cite{OugKatEne:19}, as well as on the \ac{QoS} perceived by users \cite{amaldi2003planning,mishra2004fundamentals}. From an operator perspective, network planning should minimize the costs for deploying new 5G sites and installing 5G equipment. In addition, the operator aims at maximizing the performance (e.g., throughput, delay) that is experienced by 5G \ac{UE}. 

 {Apart from economic and service goals, another important aspect that should be considered during the planning phase is the \ac{EMF} exposure from 5G \acp{gNB}. More in depth, the radiation from 5G \acp{gNB} is a matter of strong debates among the population, including a supposed increase of exposure w.r.t. legacy technologies, as well as possible health effects associated with adoption of a wide range of frequencies (including mm-Waves). In this context, a clear causal-correlation between exposure levels from 5G \acp{gNB} operating below the limits defined by laws and emergence of health diseases has not been scientifically proven so far} \cite{chiaraviglio2020health} {. Therefore, it is of utmost importance to plan the network by ensuring \ac{EMF} exposure constraints.}

Under realistic settings, the planning problem is strongly affected by the regulations governing the \ac{EMF} levels radiated by 5G equipment \cite{chiaraviglio2018not}. In general, many countries in the world ensure that the \ac{EMF} levels radiated by 5G \acp{gNB} are lower than maximum values (often referred as \ac{EMF} limits) \cite{madjar2016human}, which depend on the frequency exploited by the 5G \ac{gNB}. Traditionally, international/federal bodies like \ac{ICNIRP} and \ac{FCC} define \ac{EMF} limits for all the cellular frequencies, including the ones used by 5G equipment \cite{icnirp,FCC:97}.  
 As a result, the constraints introduced by \ac{EMF} regulations have to be carefully taken into account during the installation and then the operation of 5G equipment. Intuitively, the \ac{EMF} constraints tend to limit the number of 5G sites installed over the territory and/or the amount of radiated power by each 5G \ac{gNB}. Therefore, the \ac{EMF} regulations have a large impact on the 5G \ac{gNB} installation costs and the 5G service received by \ac{UE} \cite{itutksupplement14,chiaraviglio2018planning,chiaraviglio2018not}.

The picture is further complicated in different countries (such as Italy) \cite{madjar2016human}, which introduce \ac{EMF} regulations more restrictive that the ones defined by \ac{ICNIRP} and \ac{FCC}, on the basis of the application of a precautionary principle, in order to preserve the population from (still unknown) long-term health effects triggered by \ac{EMF} exposure.  {Such restrictive} rules include \ac{EMF} limits strongly lower that the ones defined by \ac{ICNIRP}/\ac{FCC} \cite{madjar2016human,italianreg}.  {For example, \ac{ICNIRP} regulations allow a maximum \ac{EMF} exposure up to 61~[V/m] for frequencies above 2~[GHz], while this limit is decreased to only 6~[V/m] in the Italian \ac{EMF} regulations that are applied in residential areas. In addition, other stringent rules include} minimum distances between sensitive places (e.g., schools, hospitals public parks) and the installed 5G sites \cite{romereg}.  {For example, a minimum distance of 100~[m] is applied} in the city of Rome, an area of 1287 square kilometers inhabited by almost 3 million people. 

As sketched in Fig.~\ref{fig:sensitive_areas_and_saturation_levels}, the introduction of strict \ac{EMF} limits strongly affects  {5G network planning}. For example, the installation of new 5G sites is prevented within a minimum distance from the center of the sensitive area (e.g., the school in the figure). In addition, the enforcement of very low \ac{EMF} limits tends to generate \ac{EMF} saturation areas (e.g., the one shown on top right of the figure), where the \ac{EMF} levels from pre-5G technologies are already close to the maximum limits. In such zones, therefore, the installation of new 5G sites is denied. As a result, the 5G sites have to be installed in other locations  {(i.e., not the optimal ones)}, thus further impacting the installation costs and the service level offered by the 5G network.

\begin{figure}[t]
	\centering
	\includegraphics[width=8cm]{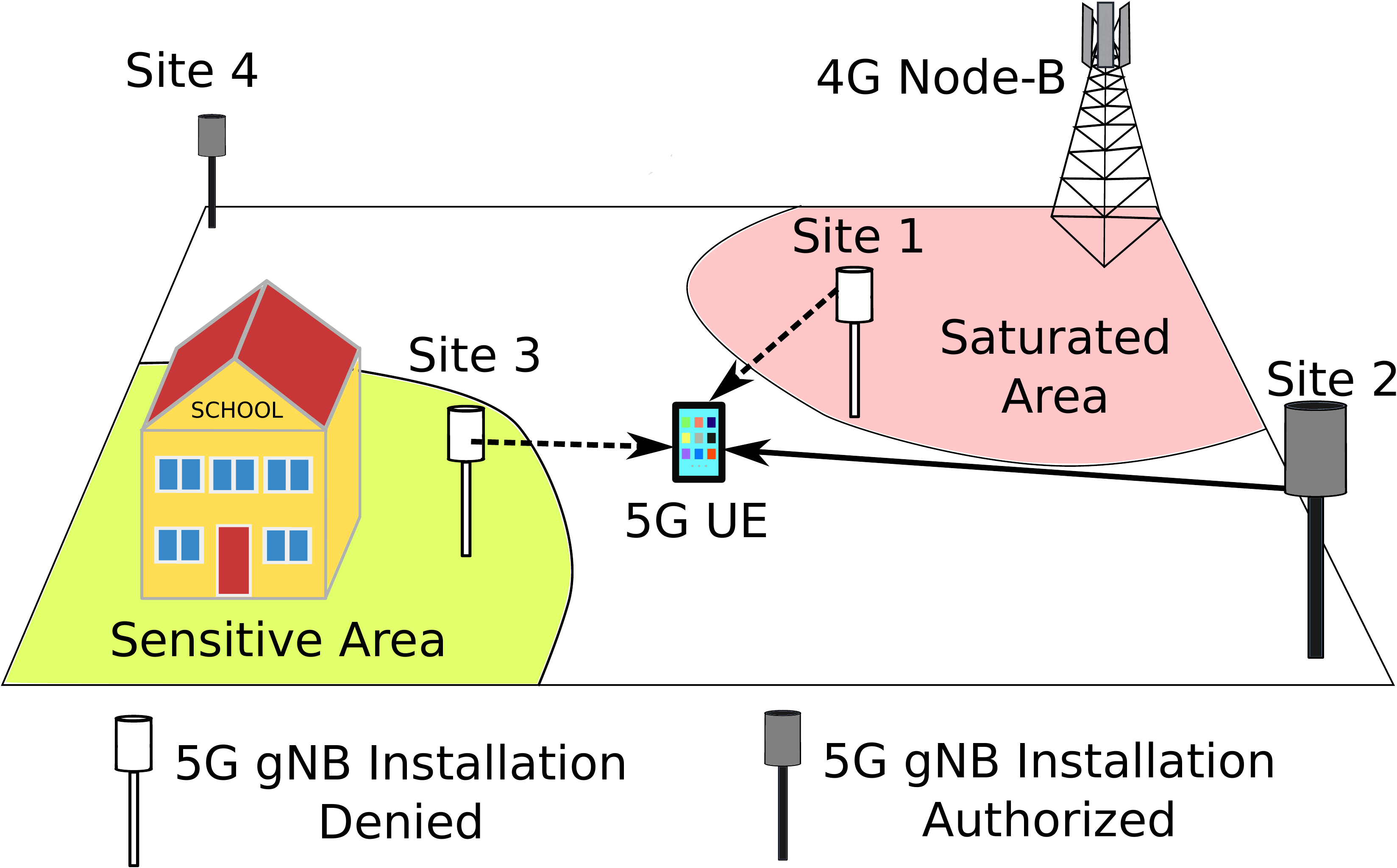}
	\caption{The presence of sensitive areas and \ac{EMF} saturation zones heavily impacts 5G cellular planning.}
         \label{fig:sensitive_areas_and_saturation_levels}
\end{figure}

In this context, a natural question emerges: Is it possible to deploy a heterogeneous 5G network, while ensuring 5G service and restrictive \ac{EMF} constraints? The ambitions goal of the paper is to tackle such interesting - and challenging - problem. Our innovative contributions can be summarized as follows. First, we take into account restrictive \ac{EMF} regulations affecting the 5G planning phase, i.e., \ac{EMF} limits stricter than \ac{ICNIRP}/\ac{FCC} ones, as well as enforcement of minimum distances between 5G sites and sensitive places. Second, we optimally model the 5G planning problem under service and \ac{EMF} constraints {, by assuming a 5G system based on massive \ac{MIMO} functionalities.} The presented problem integrates the widely used model of Marzetta \cite{marzetta} to compute the service level from a massive \ac{MIMO} system, as well as the EMF point source described in \ac{ITU} K.70 recommendation \cite{itutk70}, which is enriched by a set of scaling parameters to take into account temporal and statistical variations of radiated power from 5G \ac{MIMO} systems with beamforming capabilities. We also show that the complete formulation falls within the class of \ac{MILP} problems and it is NP-Hard. Third, we design \textsc{PLanning Algorithm Towards EMF Emissions Assessment (PLATEA)}, a novel heuristic that is able to efficiently solve the 5G planning problem while ensuring adherence to strict \ac{EMF} limits and 5G service for the set of pixels belonging to the area under consideration. Fourth, we evaluate \textsc{PLATEA} and two reference algorithms in a realistic scenario, whose parameters have been measured on the field (e.g., the \ac{EMF} levels radiated by pre-5G sites already deployed in the scenario). Results prove that \textsc{PLATEA} outperforms the reference algorithms, by efficiently balancing between 5G gNB installation costs and amount of 5G service coverage. In addition, we demonstrate that the scaling parameters used to compute the power radiated by 5G \acp{gNB} play a critical role in determining the selected planning and the \ac{EMF} levels over the territory. Eventually, we show that the exposure levels generated by pre-5G technologies  {and the adopted frequency reuse factor} have an impact on 5G planning.

To the best of our knowledge, previous works in the literature are focused on orthogonal aspects w.r.t the ones investigated in this paper. For example, Oughton \textit{et al.} \cite{OugKatEne:19} target the solution of the 5G planning problem by means of techno-economic approaches, with little emphasis on the impact of \ac{EMF} constraints. On the other hand, Matalatala \textit{et al.} \cite{MatDerTanJoseph:18,MatMarSer:19} design heuristics targeting the reduction of the radiated power, \ac{EMF} and/or \ac{SAR}, without considering: \textit{i)} the linearization of the problem constraints,\footnote{The authors of \cite{MatMarSer:19} introduce an optimal formulation, which is however not linear w.r.t. the \ac{SNR} computation, the electric field computation and the \ac{SAR} computation.}  \textit{ii)} the impact of the variation of the scaling parameters to compute the \ac{EMF} levels from 5G \acp{gNB}, \textit{iii)} the introduction of constraints to ensure a minimum distance between sensitive places and 5G gNBs, \textit{iv)}  {the \ac{EMF} and throughput evaluation over the whole territory (i.e., not only on single users)}. In this work, we show that both \textit{ii)} and \textit{iii)} are fundamental to determine the actual planning. In addition, we  {tackle both \textit{i}) and \textit{iv}) by adopting a pixel tessellation and a set of linearized constraints, which are included in an innovative formulation and a new algorithm}.

The rest of the paper is organized as follows. Sec.~\ref{sec:rel_works} reviews the related work. The main building blocks of the considered 5G framework are highlighted in Sec.~\ref{sec:building_blocks}. Sec.~\ref{sec:planning} reports the problem formulation. The \textsc{PLATEA} algorithm is thoroughly described in Sec.~\ref{sec:algorithm}. The scenario under consideration is detailed in Sec.~\ref{sec:scenario}. Results are analyzed in Sec.~\ref{sec:results}. Finally, Sec.~\ref{sec:conclusions} concludes our work.

\begin{table*}[t]
    \caption{Work positioning w.r.t. the related literature.}
    \label{tab:positioning}
    \scriptsize
    \centering
    \begin{tabular}{|p{0.25cm}|c|p{1.5cm}|c|c|c|c|c|}
\hline
\rowcolor{Coral}  & & \textbf{5G Equip.} & \textbf{5G Service} & & & & \\[-0.05em]
\rowcolor{Coral} \multirow{-2}{*}{\textbf{Ref}} & \multirow{-2}{*}{\textbf{Goal}} & \textbf{Type} & \textbf{Metrics} & \multirow{-2}{*}{\textbf{EMF Features}} & \multirow{-2}{*}{\textbf{EMF Regulations}} & \multirow{-2}{*}{\textbf{Methodology}} & \multirow{-2}{*}{\textbf{Scenario Complexity}} \\
\hline

\hline
 & & & & & & & \\[-0.8em]
\cite{OugKatEne:19}  & \begin{minipage}{2cm}Capacity and costs assessment\end{minipage}  & \begin{minipage}{1.5cm}Micro \& macro 5G \acp{gNB}\end{minipage} & \begin{minipage}{2cm}\ac{SINR}, network spectral efficiency\end{minipage} & - & - & \begin{minipage}{2cm}Model assessment\end{minipage}& \begin{minipage}{3cm}Hexagonal coverage layouts with 7 candidate sites, service area of few square kilometers, evaluation done on pixels.\end{minipage} \\[-0.8em]
 & & & & & & & \\
\hline
\rowcolor{Linen} & & & & & & & \\[-0.8em]
\rowcolor{Linen} \cite{MatDerTanJoseph:18}  & \begin{minipage}{2cm}Power consumption reduction, \ac{EMF} exposure reduction\end{minipage} & \begin{minipage}{1.5cm}Generic \acp{gNB} operating at 3.7~[GHz]\end{minipage} & \begin{minipage}{2cm}Throughput, \ac{SINR}\end{minipage} & \begin{minipage}{2cm}Presence of exclusion zones\end{minipage}  & \begin{minipage}{2cm}Strict \ac{EMF} limits\end{minipage} & \begin{minipage}{2cm}Heuristic\end{minipage} & \begin{minipage}{3cm}Dozens of candidate sites with irregular coverage layout, service area of several square kilometers, evaluation done on users (not on a pixel base).\end{minipage} \\ 
\hline
 & & & & & & & \\[-0.8em]
 \cite{MatMarSer:19} & \begin{minipage}{2cm}Power consumption reduction, \ac{EMF} exposure reduction, \ac{SAR} exposure reduction, dose exposure reduction \end{minipage}  & \begin{minipage}{1.5cm}Generic \acp{gNB} operating at 3.7~[GHz]\end{minipage} & \begin{minipage}{2cm}Throughput, \ac{SINR}\end{minipage} & \begin{minipage}{2cm}Statistical models (with fixed parameters), presence of exclusion zones\end{minipage} & \begin{minipage}{2cm}\ac{ICNIRP}-based \ac{EMF} limits\end{minipage} & \begin{minipage}{2cm}\ac{MINLP} optimization model, heuristic\end{minipage} &  \begin{minipage}{3cm}Dozens of candidate sites with irregular coverage layout, service area of several square kilometers, evaluation done on users (not on a pixel base).\end{minipage} \\[-0.8em] 
 & & & & & & & \\
\hline
\rowcolor{Linen}  & & & & & & & \\[-0.8em]
\rowcolor{Linen} \multirow{-2}{*}{\begin{sideways}This work\end{sideways}} & \begin{minipage}{2cm}Installation costs reduction, maximization of the number of served pixels\end{minipage} & \begin{minipage}{1.5cm}Micro \& macro 5G \acp{gNB}\end{minipage} & \begin{minipage}{2cm}Throughput, minimum \ac{SIR} threshold, maximum coverage distance\end{minipage} & \begin{minipage}{2cm} {Temporal} and statistical models (with variation of parameters), presence of exclusion zones\end{minipage} & \begin{minipage}{2cm}Strict \ac{EMF} limits, minimum site distance from sensitive places\end{minipage} & \begin{minipage}{2cm}\ac{MILP} optimization model, heuristic\end{minipage} & \begin{minipage}{3cm}Dozens of candidate sites with irregular coverage layout, service area of different square kilometers, evaluation done on pixels.\end{minipage}\\
\hline
\end{tabular}
\end{table*}

\section{Related Works}
\label{sec:rel_works}

 {In the literature, different works investigate the cellular network planning problem under various angles, including e.g., application of Artificial Intelligence (AI)-based tools} \cite{perez2016knowledge} {, identification of opportunities and challenges} \cite{taufique2017planning} {, optimal design response in smart grids} \cite{saxena2017efficient} {, service-based network dimensioning} \cite{khan2020service} {. Despite we recognize the importance of such previous works, none of them investigate the impact of \ac{EMF} regulations on the planning.}

Tab~\ref{tab:positioning} reports the positioning of this paper w.r.t. \cite{OugKatEne:19,MatDerTanJoseph:18,MatMarSer:19} {, which we believe are the closest contributions to our work}. More in depth, we consider the following features to classify \cite{OugKatEne:19,MatDerTanJoseph:18,MatMarSer:19}  {and our work}: \textit{i)} pursued goal(s) (e.g., cost reduction, power consumption reduction), \textit{ii)} targeted 5G equipment type (e.g., generic \acp{gNB}, micro and macro \acp{gNB}), \textit{iii)} modeled 5G service (e.g., \ac{SINR}, network spectral efficiency, throughput, maximum coverage distance), \textit{iv)} \ac{EMF} features (e.g., temporal and statistical models, presence of exclusion zones in proximity to the \ac{gNB}), \textit{v)} considered \ac{EMF} regulations (e.g., \ac{ICNIRP}-based \ac{EMF} limits, strict \ac{EMF} limits, minimum site distance from sensitive places), \textit{vi)} pursued methodology (e.g., model assessment, optimal formulation, heuristic) and \textit{vii)} scenario complexity (e.g., number of candidate sites, regular or irregular coverage layout, size of the service area, pixel or user evaluation). 

Compared to \cite{OugKatEne:19,MatDerTanJoseph:18,MatMarSer:19}, our work moves one step further by: \textit{i)} explicitly targeting a weighed function of installation costs and service coverage, \textit{ii)} considering a heterogeneous 5G network composed of micro and macro \acp{gNB},  {in order to control the service level provided by micro \acp{gNB} and the one by macro \acp{gNB}
}, \textit{iii)} precisely modeling multiple service metrics, including throughput, minimum \ac{SIR} and maximum coverage distance, \textit{iv)} performing the variation of the scaling parameters to compute the \ac{EMF}, as well as including exclusion zones in proximity to the \acp{gNB}, \textit{v)} integrating \ac{EMF} regulations more restrictive than \ac{ICNIRP}/\ac{FCC}, both in terms of maximum limits and in terms of minimum distance between a 5G site and a sensitive place, \textit{vi)} defining a linear formulation (\ac{MILP}) and exploiting the linearized constraints to design the heuristic,  \textit{vii)} analyzing a large scenario composed of dozens of candidate sites with irregular coverage layouts, a service area in the order of different square kilometers, and service/\ac{EMF} evaluations performed in each pixel of the territory.

\section{Building Blocks}
\label{sec:building_blocks}

 {The goal of the planning problem considered in this work is to install a 5G network of Non Standalone type for a single operator, in order to provide an \ac{eMBB} 5G service} \cite{3gppservice}.  {More specifically, the operator can install \acp{gNB} operating up to the 6~[GHz] band, which is the currently available option for implementing 5G in many countries in the world (including Italy).}  {Basic coverage is provided by macro \acp{gNB} operating on sub-GHz frequencies, while micro \acp{gNB} operate on mid-band frequencies to provide hot-spot capacity.} \footnote{ {5G frequencies also include mm-Waves, which are however not being used in Italy at the time of preparing our study. Actually, the \acp{gNB} under installation include, in fact, the same equipment types assumed in this work (i.e., operating on sub-Ghz and mid-band). Clearly, the adoption of equipment operating on mm-Waves may require service and/or \ac{EMF} models potentially different than those ones employed in our work. This aspect is left as a future research activity, in parallel with the deployment of mm-Waves \acp{gNB}.}}   {Clearly, micro and macro \acp{gNB} operate on different portions of the spectrum, which are also separated by those ones used by previous technologies and/or other operators from the same country.}\footnote{ {The investigation of spectrum sharing approaches, including e.g., carrier aggregation, channel bonding, licensed shared access, licensed assisted access, and/or location-based licensing} \cite{patwary2020potential} {, is left for future work.}}

 {We then present the main building blocks} that are integrated in our 5G framework, namely: \textit{i)} the model to assess 5G performance, \textit{ii}) the model to estimate \ac{EMF} radiated by a set of \acp{gNB} and \textit{iii}) the \ac{EMF} regulations for the installation of 5G sites. In the following, we provide more details about each building block.

\subsection{5G Performance Model}

We adopt the widely known \ac{MIMO} model of Marzetta \cite{marzetta} to evaluate the 5G performance for a set of installed \acp{gNB}. We refer to \cite{marzetta} for the details, while we report  {below} the salient features. In brief, the model assumes that each site is equipped with arrays composed of a very large number of antenna elements.  {Moreover, the same power is radiated
by all arrays operating at the same frequency.}\footnote{ {In case the arrays operate with multiple power levels (e.g., due to different power budget settings), a different model has to be used. We leave the investigation of this aspect as future work.}} In the work of Marzetta \cite{marzetta}, the number of antennas is higher than the number of users. Since the number of antennas is very large, the downlink \ac{SINR} is dominated by the interference from neighboring \acp{gNB} rather than by the noise floor. More formally, the \ac{SIR} of the $k$-th user served by $l$-th \ac{gNB} operating on frequency $f$ is defined as:\footnote{In the original model of \cite{marzetta} a set of \acp{gNB} operating at the same central frequency is assumed. In this work, instead, we consider a heterogeneous network composed of multiple tiers of \acp{gNB} operating at different frequencies. However, the extension of the model of Marzetta  \cite{marzetta} to the multiple frequencies case is straightforward, as only \acp{gNB} operating on the same frequency have to be counted in the \ac{SIR} computation of Eq.~(\ref{eq:sir}).  {Moreover, when spectrum-based approaches are introduced, the denominator has to include the interference terms from \acp{gNB} that compete on the same spectrum used by $f$.}}

\begin{equation}
\label{eq:sir}
\text{S}_{(k,l,f)} = \frac{\beta_{(l,k,l,f)}^2}{\sum_{j \neq l}{\beta_{(l,k,j,f)}^2}}
\end{equation}

In the previous equation, the $\beta$ terms are expressed as: 

\begin{equation}
\beta_{(l,k,j,f)} = \frac{z_{(l,k,j,f)}}{D_{(k,l,f)}^{\gamma_f}}
\label{eq:beta}
\end{equation}

where $D_{(k,l,f)}$ is the distance between 5G \ac{UE} $k$ and \ac{gNB} operating on frequency $f$ and installed at location $l$, $\gamma_f$ is the path-loss exponent for frequency $f$ and $z_{(l,k,j,f)}$ is a log-normal random variable  {modeling the shadow fading component}, i.e., the quantity $10 \cdot \text{log}_{10}(z_{(l,k,j,f)})$  {(i.e., the shadow fading term in dB scale)} is a distributed zero-mean Gaussian with a standard deviation $\sigma^{\text{SHAD}}_f$ \cite{marzetta}. 

\begin{figure}[t]
	\centering
	\includegraphics[width=7cm]{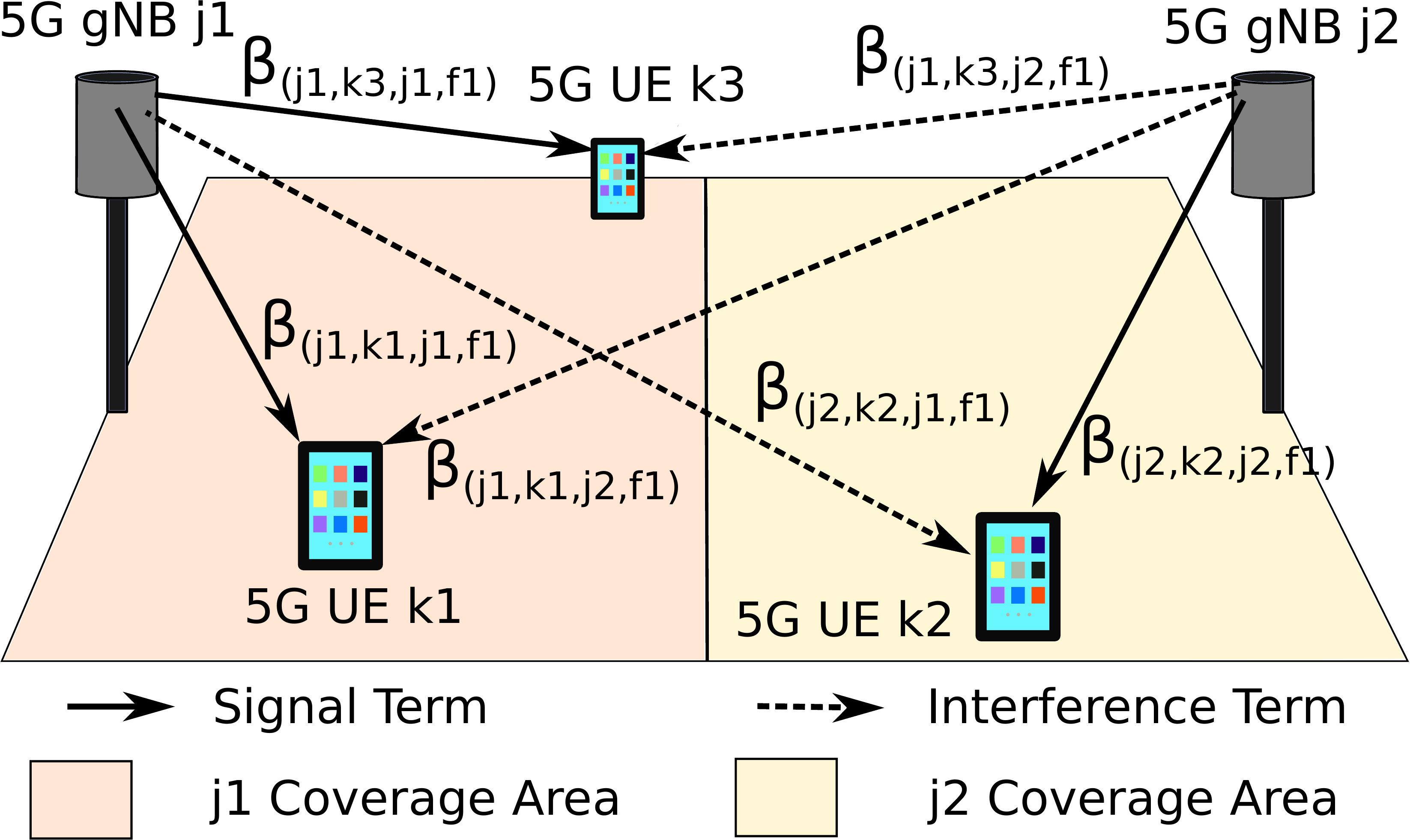}
	\caption{Signal and interference terms in the 5G service model of \cite{marzetta} for a toy case scenario with two \acp{gNB} and three \acp{UE}.}
	\label{fig:signal_interference_terms}
\end{figure}

The $\beta$ terms appearing in Eq.~(\ref{eq:sir}) are also sketched in the toy-case scenario of Fig.~\ref{fig:signal_interference_terms}, which is composed by two \acp{gNB} and three \ac{UE}. Intuitively, each 5G UE is served by a single \ac{gNB}, while the other \ac{gNB} contributes to the interference experienced by the 5G UE. It is important to remark that, in the downlink direction, the contributions of interference are solely due to the neighboring \ac{gNB}, and not to the simultaneous transmissions to other \ac{UE} in the same cell (e.g., terminals $k1$ and $k3$ in the figure), due to the fact that an \ac{OFDM} technology is assumed.

The downlink throughput received by user $k$ from \ac{gNB} installed at location $l$ and operating on frequency $f$ is then expressed as:
\begin{equation}
\label{eq:capacity}
T_{(k,l,f)}  =  \frac{B_f \cdot \Gamma_f}{\epsilon^{\text{F-REUSE}}_f}  \cdot   \log_2 \left(1 + \text{S}_{(k,l,f)} \right)  
\end{equation}
where: $B_f$ is the \ac{gNB} bandwidth, $\epsilon^{\text{F-REUSE}}_f\leq1$ is a parameter  {governing the frequency reuse factor} over frequency $f$ (equal to 1 when  {a unity frequency reuse is adopted over frequency $f$}), $\Gamma_f$ is a shaping factor, formally expressed as: 
\begin{equation}
\label{eq:gamma_term}
\Gamma_f= \frac{(\tau^{\text{SLOT}}_f - \tau^{\text{PILOT}}_f) \cdot \tau^{\text{USEFUL}}_f}{\tau^{\text{SLOT}}_f \cdot \tau^{\text{SYMBOL}}_f}
\end{equation}
where $\tau^{\text{SLOT}}_f$ is the slot duration over $f$, $\tau^{\text{PILOT}}_f$ is the pilot duration over $f$, $\tau^{\text{SYMBOL}}_f$ is the symbol interval over $f$ and $\tau^{\text{USEFUL}}_f$ is the useful symbol duration over $f$. The expressions for $\tau^{\text{SLOT}}_f$, $\tau^{\text{PILOT}}_f$, $\tau^{\text{SYMBOL}}_f$ and $\tau^{\text{USEFUL}}_f$ are reported in Tab.~\ref{tab:marzetta_parameters}, where $N^{\text{OFDM}}_f$ is the number of \ac{OFDM} symbols over $f$, $\Delta_f$ is the subcarrier spacing over $f$, $\delta_f$ is the cyclic prefix duration over $f$, $N^{\text{OFDM-PILOT}}_f$ is the number of \ac{OFDM} symbols used for pilots over $f$ and $\tau^{\text{COHERENCE}}_f$ is the coherence time over $f$.

\renewcommand*{\arraystretch}{1}
\begin{table}[t]
    \caption{Expressions for the time-related parameters of \cite{marzetta}.}
    \label{tab:marzetta_parameters}
    \scriptsize
    \centering
    \begin{tabular}{|c|c|}
\hline
 \rowcolor{Coral} \textbf{Parameter} & \textbf{Expression}\\
\hline
& \\[-0.75em]
    $\tau^{\text{SLOT}}_f$ & $\left[N^{\text{OFDM}}_f\cdot(\frac{1}{\Delta_f})\right] + \delta_f$\\[0.6em]
\hline
 \rowcolor{Linen}& \\[-0.75em]
 \rowcolor{Linen}   $\tau^{\text{PILOT}}_f$ & $N^{\text{OFDM-PILOT}}_f\cdot \tau^{\text{SYMBOL}}_f$\\[0.6em]
\hline
& \\[-0.75em]
    $\tau^{\text{SYMBOL}}_f$ & $\frac{\tau^{\text{COHERENCE}}_f}{N^{\text{OFDM}}_f}$\\[0.6em]
\hline
 \rowcolor{Linen} & \\[-0.75em]
 \rowcolor{Linen}   $\tau^{\text{USEFUL}}_f$ & $\frac{1}{\Delta_f}$\\[0.6em]
\hline
    \end{tabular}
\end{table}
\renewcommand*{\arraystretch}{1}

Summarizing, the model of Marzetta \cite{marzetta} allows to compute the \ac{SIR} and the maximum throughput for each \ac{UE}, given the set of installed 5G \acp{gNB} and the \ac{UE}-\ac{gNB} association. 

\subsection{5G EMF Model}

The second building block that is instrumental to our framework is the computation of the \ac{EMF}  that is received by a 5G \ac{UE} from a 5G \ac{gNB}. In the literature, different  \ac{EMF} models have been proposed to this purpose. We refer the interested reader to ITU-T K.70 recommendation \cite{itutk70} for an overview about the \ac{EMF} models. In brief, the available options include point source models, synthetic models and full-wave models. In this work, we select the point source model, due to the following key properties: \textit{i}) the actual \ac{EMF} levels that are measured over the territory in the far-field region are typically lower than the ones estimated through the point source model of \cite{itutk70}. Therefore, when the \ac{EMF} is computed through this model and the obtained level is below the maximum limit, the adherence to the limit is always guaranteed; 
\textit{ii}) a linear set of constraints to compute the total \ac{EMF} levels can be built when the model is integrated in our framework. Ensuring the linearity of the constraints is a desirable property, which, in fact, allows to reduce the complexity of both the optimal formulation and the designed heuristic.  {We also refer the interested reader to Appendix~\ref{app:point-source} for more considerations about the point source model.}

 {More formally}, the point source model allows to compute the power density $P_{(k,l,f)}$ that is received by UE $k$ from a \ac{gNB} located at site $l$ and operating on frequency $f$.\footnote{The power density metric is commonly used to characterize the level of exposure. Other exposure metrics include electric field, magnetic field and \ac{SAR}. We refer the interested reader to \cite{icnirp} for an overview and comparison of the different exposure metrics.} Clearly, the distance $D_{(k,l,f)}$ between \ac{gNB} $l$ and \ac{UE} $k$ is assumed to be in the far-field region \cite{itutk70}. More formally, $P_{(k,l,f)}$ is expressed as:
\begin{equation}
\label{eq:power_density_original}
P_{(k,l,f)} = \frac{\text{EIRP}_{(l,f)}}{4 \pi \cdot D_{(k,l,f)}^2} \cdot F_{(k,l,f)}
\end{equation}
where $\text{EIRP}_{(l,f)}$ is the \ac{EIRP} from \ac{gNB} operating on frequency $f$ and located at site $l$ and $F_{(k,l,f)}\leq 1$ is the normalized numeric gain over user $k$ from an antenna installed at location $l$ operating on frequency $f$. More in depth, $\text{EIRP}_{(l,f)}$ is formally expressed as:
\begin{equation}
\label{eq:eirp}
\text{EIRP}_{(l,f)} = O^{\text{MAX}}_{(l,f)} \cdot \frac{\eta^{\text{GAIN}}_{f}}{\eta^{\text{LOSS}}_{f}}
\end{equation}
where $O^{\text{MAX}}_{(l,f)}$ is the maximum output power for a \ac{gNB} operating on frequency $f$ and located at site $l$, $\eta^{\text{GAIN}}_{f}$ is the transmission gain on frequency $f$, and $\eta^{\text{LOSS}}_{f}$ is the transmission loss on frequency $f$.

\begin{figure}[t]
	\centering
	\includegraphics[width=7cm]{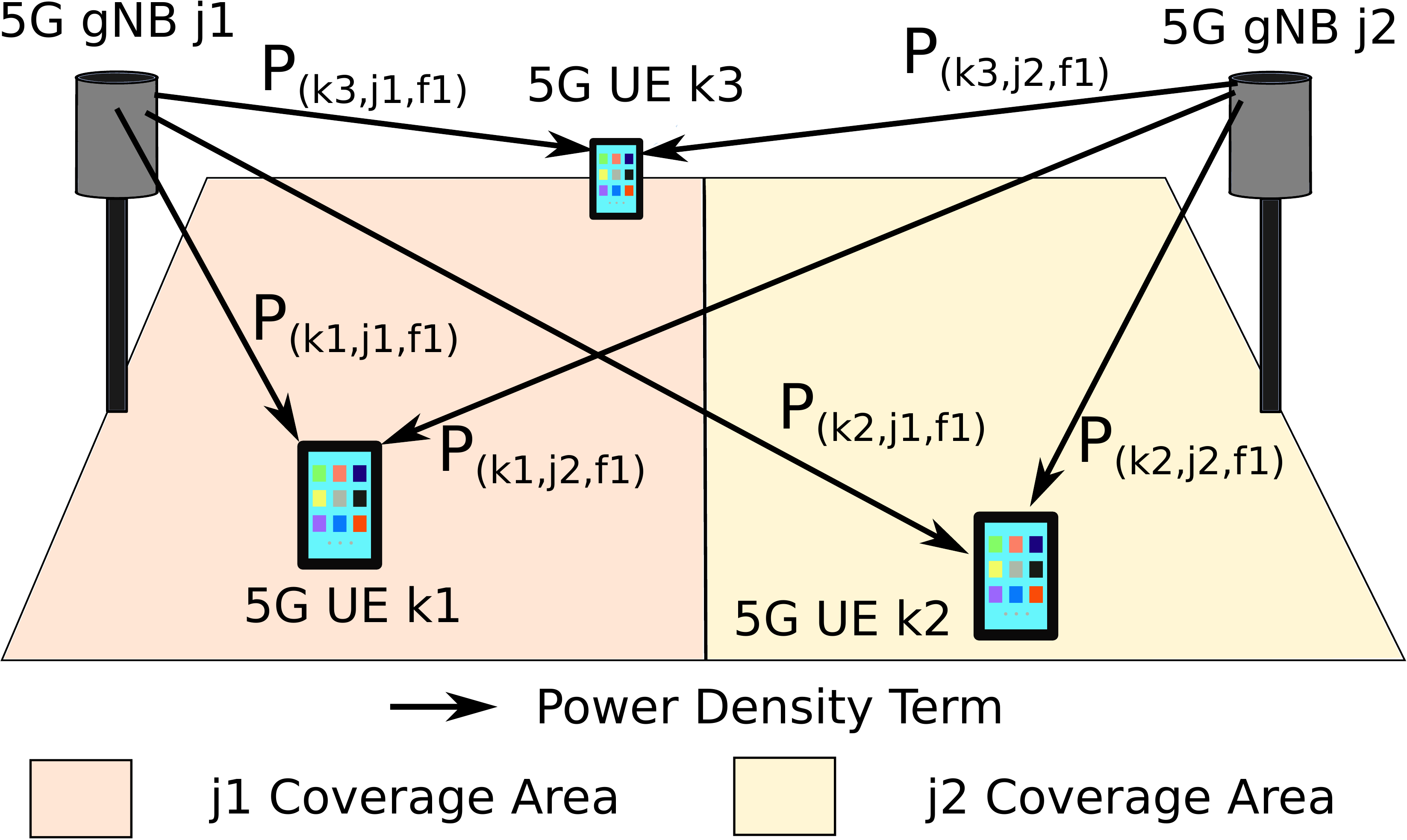}
	\caption{Power density terms in the same toy case scenario of Fig.~\ref{fig:signal_interference_terms}.}
	\label{fig:power_density_term}
\end{figure}

\begin{table*}[t]
\centering
\caption{Comparison across \ac{ICNIRP} guidelines \cite{icnirp,ICNIRPGuidelines:98}, Italian regulations \cite{italianreg,scale4g} and Rome regulations \cite{italianreg,scale4g,romereg}.$^{(a),(b),(c)}$}
\label{tab:regulation_comparison}
     \begin{threeparttable}
\scriptsize
	\begin{tabular}{|c|c|c|c|c|c|c|}
	\hline
\rowcolor{Coral}	& & & & & & \\[-0.8em]
\rowcolor{Coral}  & & \textbf{Frequency} & \textbf{Max. incident} & \textbf{Max. incident} & \textbf{Averaging} &  \textbf{Min. distance $D^{\text{MIN}}$}   \\[-0.05em] 
\rowcolor{Coral} \multirow{-2}{*}{\textbf{ID}}	 & \multirow{-2}{*}{\textbf{Name}}	& \textbf{range} & \textbf{electric field} & \textbf{power density $L_f$} & \textbf{time interval} &  \textbf{from sensitive places}   \\[0.2em]
	\hline 
	& & & & & & \\[-0.8em]
		& & 400 - 2000~[MHz] & 1.375$\cdot f^{0.5}$~[V/m] & $f$/200~[W/m$^2$] & &  \\
	\multirow{-2}{*}{\textit{R1}}& \multirow{-2}{*}{ICNIRP 1998 Guidelines \cite{ICNIRPGuidelines:98}}		& 2 - 300~[GHz] & 61~[V/m] & 10~[W/m$^2$] & \multirow{-2}{*}{6~[min] - up to 10~[GHz]} & \multirow{-2}{*}{-}  \\
		\hline
\rowcolor{Linen} & & & & & & \\[-0.8em]
\rowcolor{Linen}	&	 & 400 - 2000~[MHz] & 1.375$\cdot f^{0.5}$~[V/m] & $f$/200~[W/m$^2$] & &  \\
\rowcolor{Linen}	\multirow{-2}{*}{\textit{R2}} &	\multirow{-2}{*}{ICNIRP 2020 Guidelines \cite{icnirp}}	& 2 - 300~[GHz] & N/A & 10~[W/m$^2$] & \multirow{-2}{*}{30~[min]} &  \multirow{-2}{*}{-} \\ [0.25ex]		
		\hline
& & & & & & \\[-0.8em]
		& Italian Regulation \cite{italianreg,scale4g}  &  3 - 3000~[MHz]  & 20~[V/m] & 1~[W/m$^2$] &  & \\ [0.25ex]
		\multirow{-2}{*}{\textit{R3}} & (General Public Areas)		& 3 - 300~[GHz] & 40~[V/m] & 4~[W/m$^2$] & \multirow{-2}{*}{6~[min]} & \multirow{-2}{*}{-} \\ [0.25ex]
		\hline
\rowcolor{Linen} & & & & & & \\[-0.8em]
\rowcolor{Linen}		& Italian Regulation \cite{italianreg,scale4g} &   &  &  && \\ [0.25ex]
\rowcolor{Linen}		\multirow{-2}{*}{\textit{R4}} & (Residential Areas)		& \multirow{-2}{*}{0.1~[MHz] - 300~[GHz]} & \multirow{-2}{*}{6~[V/m]} & \multirow{-2}{*}{0.1~[W/m$^2$]} & \multirow{-2}{*}{24~[h]} & \multirow{-2}{*}{-}  \\ [0.25ex]
		\hline
& & & & & & \\[-0.8em]
			&	Rome regulation \cite{italianreg,scale4g,romereg} &  3 - 3000~[MHz]  &  20~[V/m] & 1~[W/m$^2$] & &  \\ [0.25ex]
		\multirow{-2}{*}{\textit{R5}} & (General Public Areas)		& 3 - 300~[GHz] & 40~[V/m] & 4~[W/m$^2$] & \multirow{-2}{*}{6~[min]} &  \multirow{-2}{*}{100~[m]} \\ [0.25ex]
\hline
\rowcolor{Linen}			&	Rome regulation \cite{italianreg,scale4g,romereg} &   &  & & &  \\ [0.25ex]
\rowcolor{Linen}		\multirow{-2}{*}{\textit{R6}} & (Residential Areas)		& \multirow{-2}{*}{0.1~[MHz] - 300~[GHz]} & \multirow{-2}{*}{6~[V/m]} & \multirow{-2}{*}{0.1~[W/m$^2$]} & \multirow{-2}{*}{24~[h]} &  \multirow{-2}{*}{100~[m]} \\ [0.25ex]
	\hline
	\end{tabular}
        \begin{tablenotes}
            \item[(a)] N/A stands for Not Applicable, meaning that the related quantity does not have to be taken into account when dealing with a compliance assessment.
            \item[(b)]  {The averaging time interval is the required amount of time to compute the average exposure levels that need to be compared against the maximum limit values}. 
	    \item[(c)] $f$ is the used frequency in MHz.
        \end{tablenotes}
\vspace{-1mm}
     \end{threeparttable}
\end{table*}

By assuming the maximum achievable numeric gain $F_{(k,l,f)}=1$ (as in \cite{itutk70}), Eq.~(\ref{eq:power_density_original}) is simplified into:
\begin{equation}
\label{eq:power_density_new}
P_{(k,l,f)} = \frac{\text{EIRP}_{(l,f)}}{4 \pi \cdot D_{(k,l,f)}^2} 
\end{equation}

Given $P_{(k,l,f)}$, the electric field value $E_{(k,l,f)}$ is then expressed as:
\begin{equation}
\label{eq:tot_emf}
E_{(k,l,f)}=\sqrt{P_{(k,l,f)} \cdot Z_0}
\end{equation}
where $Z_0$ denotes the free space wave impedance.

Fig.~\ref{fig:power_density_term} reports the power density terms $P_{(k,l,f)}$ in the same toy-case scenario of Fig.~\ref{fig:signal_interference_terms}, which is composed by two \acp{gNB} operating at the same frequency $f$ and three \acp{UE}. Actually, the total power density that is received by each \ac{UE} is a linear combination of the single terms radiated by the two \acp{gNB}. For example, the total power density radiated over 5G \ac{UE} $k1$ is simply equal to $P_{(k1,j1,f1)}+P_{(k1,j2,f1)}$. Note that, when considering the total electric field from multiple \acp{gNB}, a root mean square operator has to be applied. By considering the previous example, the total electric field that is radiated over \ac{UE} $k1$ is equal to $\sqrt{E_{(k1,j1,f1)}^2+E_{(k1,j2,f1)}^2}$. This operation introduces a non-linearity in the computation of the total exposure, which also generates non-linear constraints when binary variables are employed to select the set of sites that have to be installed out of the candidate ones. To overcome this issue, in this work we consider the computation of the total exposure through the power density metric, which instead allows to preserve the linearity of the constraints.

When computing the exposure from a 5G \ac{gNB}, a key role is played by the \ac{EIRP} value appearing in Eq.~(\ref {eq:power_density_new}). Clearly, the higher is the \ac{EIRP}, the larger will be also the received power density $P_{(k,l,f)}$.  This fact imposes to precisely estimate \ac{EIRP} values that match real 5G \ac{gNB} exposure patterns. In this context, two important elements that affect the \ac{EIRP} values of a 5G \ac{gNB} are the temporal variation and the statistical variation of the radiated power. We refer the interested reader to the \ac{IEC} standards \cite{iec1,iec2} for an overview about these aspects. In brief, the temporal variation is due to the fact that the number of 5G \ac{UE} (and their  traffic over the cellular network) exhibits a day-night pattern. Actually, the actual output  power levels of the 5G \ac{gNB} match this variation, with radiated power higher during the day and clearly lower during the night.  On the other hand, the application of \ac{MIMO} with beamforming features introduces strong variations in the radiated power over the territory, resulting in an exposure that is concentrated only to the zones where the served users are located. This issue is typically taken into account by solving statistical exposure models, which allow to compute spatially averaged radiated power values, as a consequence of the actual user distribution over the territory. 

In this work, we take into account the  {aforementioned} temporal and the statistical variability of the \ac{EIRP} from 5G \ac{gNB} by introducing two scaling parameters, denoted as $R^{\text{TIME}}_{(l,f)} \in (0,1]$ and $R^{\text{STAT}}_{(l,f)} \in (0,1]$, respectively.  {More in depth, $R^{\text{TIME}}_{(l,f)}$ captures the temporal variation of the radiated power (typically on a daily base), while $R^{\text{STAT}}_{(l,f)}$ models the statistical variation of exposure over the coverage area.}\footnote{ {The application} of  {scaling parameters to compute realistic \ac{EIRP} values} is in line  {with the relevant literature involved in the 5G exposure modeling} \cite{iec2,ThoFurTor:17,ecoscienza,scalingparam}.}  {As a result,} the scaled \ac{EIRP} from \ac{gNB} installed at site $l$ and operating on frequency $f$ is computed as:
\begin{equation}
\label{eq:eirp_ts}
\text{EIRP}^\text{TS}_{(l,f)}=\text{EIRP}_{(l,f)} \cdot R^{\text{TIME}}_{(l,f)} \cdot R^{\text{STAT}}_{(l,f)}
\end{equation}

In addition, let us introduce the power density $P^{\text{TS}}_{(k,l,f)}$ computed from $\text{EIRP}^\text{TS}_{(l,f)}$. By adopting Eq.~(\ref{eq:eirp_ts}) and the left-hand side of Eq.~(\ref{eq:power_density_new}), $P^{\text{TS}}_{(k,l,f)}$ is formally expressed as:
\begin{equation}
\label{eq:power_density}
P^\text{TS}_{(k,l,f)} = P_{(k,l,f)} \cdot R^{\text{TIME}}_{(l,f)} \cdot R^{\text{STAT}}_{(l,f)}
\end{equation}

In this work, we use Eq.~(\ref{eq:power_density_new}),(\ref{eq:power_density}) to characterize the level of exposure from a 5G \ac{gNB} located at site $l$, operating on frequency $f$ and radiating over \ac{UE} $k$. Moreover, we demonstrate that the actual values of $R^{\text{TIME}}_{(l,f)}$ and $R^{\text{STAT}}_{(l,f)}$ have a crucial role in determining the level of exposure and consequently the set of \acp{gNB} that are  installed over the territory. 

\subsection{5G EMF Constraints}

We then consider the third building block of our framework, i.e., the integration of the 5G \ac{EMF} constraints defined in the regulations. To this aim, Tab.~\ref{tab:regulation_comparison} reports the set of regulations \textit{R1})-\textit{R6}),  which include \ac{ICNIRP} guidelines (\textit{R1}-\textit{R2}), Italian national regulations \textit{R3})-\textit{R4}), and the local \ac{EMF} regulations enforced in the city of Rome \textit{R5})-\textit{R6}). For each regulation/guideline, the table reports: \textit{i}) the frequency range relevant to 5G, \textit{ii}) the maximum electric field limit for each frequency $f$, \textit{iii}) the maximum power density limit $L_{f}$ for each $f$, \textit{iv}) the time interval to compute the average \ac{EMF} that has to be compared against the limit value and \textit{v}) the (eventual) minimum distance constraints that have to be ensured between the 5G installations and the sensitive places.
As a side comment, we include in Tab.~\ref{tab:regulation_comparison} the \ac{ICNIRP} 1998 guidelines \cite{ICNIRPGuidelines:98} and \ac{ICNIRP} 2020 ones \cite{icnirp}, due to the fact that the formers are still adopted in many countries in the world, while the latters are the up-to-date regulations which are going to be adopted in the forthcoming month/years, and hence in parallel with the deployment of 5G networks. 

Several considerations hold by analyzing Tab.~\ref{tab:regulation_comparison}. First of all, \textit{R2}) defines a power density limit and not a limit based on electric field strength, for all the frequencies between 2~[GHz] and 300~[GHz]. This fact further corroborates our choice for selecting the power density as the reference metric when performing the compliance assessment against the maximum limits. Second, the Italian regulations \textit{R3})-\textit{R4}) are in general stricter than \textit{R1})-\textit{R2}), both in terms of electric field and in terms of power density. Third, the Italian regulations in \textit{R3})  and \textit{R4}) further differentiate between general public areas (e.g., zones of the territory where the population is not continuously living) and residential areas (e.g., zones where people tend to live and/or work), respectively. Interestingly, \textit{R4}) regulations are more restrictive than \textit{R3}). Fourth, the city of Rome applies a minimum distance $D^{\text{MIN}}$ from sensitive places in addition to the strict \ac{EMF} limits defined in \textit{R3})-\textit{R4}). Therefore, the regulations \textit{R5})-\textit{R6}) further restrict \textit{R3})-\textit{R4}). Fifth, the averaging time interval strongly varies across the different regulations, ranging from  values of few minutes to 24 hours. This interval plays a crucial role in estimating the average \ac{EMF} that has to be compared against the limit thresholds. Clearly, the lower is the time interval, the higher will be the influence of (possible) spikes on the average \ac{EMF}. On the other hand, the higher is the time interval, the lower will be the impact of spikes on the average \ac{EMF}. As a side comment, the instantaneous \ac{EMF} field \textit{can} be higher than the thresholds reported in Tab.~\ref{tab:regulation_comparison}. The actual metric that is meaningful for comparison against the limit is in fact the average \ac{EMF} over the time interval defined in each regulation.

After analyzing the \ac{EMF} regulations, a natural question emerges: How to perform the compliance assessment w.r.t. the maximum limits when multiple Base Stations operating at different frequencies radiate the same area of territory? To answer this question, let us denote with $\sum_{l \in \mathcal{L}} P_{(k,l,f)}$ the composite power density that is radiated over \ac{UE} $k$ by all the Base Stations operating on frequency $f \in \mathcal{F}$, where $\mathcal{F}$ is the set of frequencies in use. The compliance w.r.t. the limits is ensured over $k$ if the following condition holds:
\begin{equation}
\label{eq:compliance}
\sum_{f \in \mathcal{F}} \frac{\sum_{l \in \mathcal{L}} P_{(k,l,f)}}{L_f} \leq 1
\end{equation}
Clearly, the power density terms $P_{(k,l,f)}$ of Eq.~(\ref{eq:compliance}) have to be computed as average values over the time intervals reported in Tab.~\ref{tab:regulation_comparison}. In addition, the actual \ac{EMF} metric that is measured under practical conditions is the electric field strength $E_{(k,l,f)}$, which is then translated into power density $P_{(k,l,f)}$ by applying Eq.~(\ref{eq:tot_emf}).

\subsection{Summary and Next Steps}

The model of Marzetta \cite{marzetta} is used to control the \ac{SIR} and consequently the maximum downlink throughput provided to the \ac{UE}. The point source model of ITU-T K.70 \cite{itutk70}, integrated with scaling parameters that characterize the exposure from 5G \ac{gNB}, is instead used to compute  {conservative estimation of} power density. Finally, the limits defined by international/national bodies and local municipalities are used to ensure that the composite power density is lower than the thresholds. In addition, a minimum distance rule from sensitive places is ensured in accordance to the local regulation. In the next section, we join together these building blocks, in order to build an innovative formulation able to balance between \acp{gNB} installation costs and 5G service coverage level, while ensuring \ac{QoS} and strict \ac{EMF} constraints.

\section{Optimal 5G Planning Formulation}
\label{sec:planning}

We divide our formulation in the following steps: \textit{i}) preliminaries, \textit{ii}) set definition, \textit{iii}) constraint, variables and input parameters, \textit{iv}) overall formulation.

\subsection{Preliminaries}

In the previous section we have provided the models to compute the service coverage and the power density for each \ac{UE} in the scenario under consideration. In this section, we generalize these models by extending the evaluations from a sparse set of \ac{UE} to a tessellation of non-overlapping squared pixels that fully cover the area under interest.  {More formally, the pixel is a small area of territory, in which similar propagation conditions are experienced. We refer the reader to Appendix~\ref{app:pixel} for a detailed discussion about the pixel tessellation.}

Focusing then on the modelling of the \ac{EMF} regulations, we assume to enforce the most restrictive ones, namely \textit{R5})-\textit{R6}) of Tab.~\ref{tab:regulation_comparison}. Therefore, we distinguish between general public areas, residential areas and zones within the minimum distance from sensitive places. However, we point out that the other guidelines presented in Tab.~\ref{tab:regulation_comparison} can be easily implemented in our framework by applying different limit thresholds and/or by setting the minimum distance $D^{\text{MIN}}$ to zero.

\subsection{Set Definition}

Let us denote with $\mathcal{P}$ the set of pixels under consideration. $\mathcal{P}^{\text{RES}} \subset \mathcal{P}$ and  $\mathcal{P}^{\text{GEN}} \subset \mathcal{P}$ are the subsets of pixels in residential areas and in general public areas, respectively. Moreover, $\mathcal{P}^{\text{SENS}} \subset \mathcal{P}$ is the subset of pixels in sensitive areas. In addition, let $\mathcal{L}$ be the set of candidate locations (sites) that can host 5G \ac{gNB} equipment. Eventually, let $\mathcal{F}$ be the set of frequencies that can be exploited by 5G \acp{gNB}.

\subsection{Constraints, Variables and Input Parameters}

We then detail constraints, variables and input parameters to our problem by adopting a step-by-step approach. We also refer the reader to Tab.~\ref{tab:notation}  {of Appendix~\ref{app:notation}} for the main notation that is adopted throughout the section.

\textbf{5G Coverage and Service Constraints.} We initially model the constraint that a pixel $p \in \mathcal{P}$ can be covered by a 5G \ac{gNB} located in $l$ only if the distance $D_{(p,l,f)}$ between the pixel and the installed \ac{gNB} is lower than a maximum one, denoted with $D^{\text{MAX}}_f$, where $f$ is the operating frequency of the 5G \ac{gNB} installed in $l$. More formally, we have:
\begin{equation}
\label{eq:cov_pixel}
D_{(p,l,f)} \cdot x_{(p,l,f)}  \leq D^{\text{MAX}}_f \cdot y_{(l,f)}, \quad \forall p \in \mathcal{P}, l \in \mathcal{L}, f \in \mathcal{F}
\end{equation}
where  $x_{(p,l,f)}$ is a binary variable, set to 1 if $p$ is served by \ac{gNB} operating on frequency $f$ and located at $l$ (0 otherwise). Moreover, $y_{(l,f)}$ is another binary variable, set to 1 if 5G \ac{gNB} operating on frequency $f$ is installed at location $l$ (0 otherwise).

We then impose the constraint that each pixel $p$ can be served by at most $N^{\text{SER}}\geq1$ \acp{gNB} at the same time:\footnote{Although multiple coverage from different \acp{gNB} is a desirable condition, the increase in the number of \acp{gNB} covering the same pixel may introduce side effects, like an increase in the handover rates for \ac{UE}, which may dramatically decrease the perceived \ac{QoS} \cite{chiaraviglio2018not}. Therefore, we introduce a constraint to control the number of \acp{gNB} serving the same pixel. } 
\begin{equation}
\label{eq:maximum_number_of_cells}
\sum_{l \in \mathcal{L}} \sum_{f \in \mathcal{F}} x_{(p,l,f)} \leq N^{\text{SER}}, \quad \forall p \in \mathcal{P}
\end{equation}

In the following, we impose that the \ac{SIR} value in each pixel $p$ that is served by a 5G \ac{gNB} operating on frequency $f$ has to be higher than a minimum value $S^{\text{MIN}}_{f}$. By adopting the \ac{SIR} computation already introduced in Eq.~(\ref{eq:sir}),\footnote{Since we have extended the evaluation from the single \ac{UE} to the whole set of pixels, the $k$ index of Eq.~(\ref{eq:sir}) is replaced with $p \in \mathcal{P}$.} we have:\footnote{ {The adopted \ac{SIR} model assumes a flat terrain, i.e., without considering possible hills/valleys that can affect the \ac{SIR} computation. Intuitively, a possible way to take into account the terrain elevation changes would be to exploit a gray-scale pixel map} \cite{kamar2010optimized} {, with different shades representing different elevation value. We leave the investigation of this aspect as future work.}}
\begin{eqnarray}
\label{eq:sir_not_linear}
\underbrace{\frac{\beta^2_{(l,p,l,f)}\cdot y_{(l,f)}}{\sum_{l_2 \neq l \in \mathcal{L}}\beta^2_{(l,p,l_2,f)}\cdot y_{(l_2,f)}}}_{\text{SIR} \quad S(p,l,f)} \geq \underbrace{S^{\text{MIN}}_{f} \cdot x_{(p,l,f)}}_{\text{Min. SIR Threshold}}, \nonumber \\ \quad \forall p \in \mathcal{P}, l \in \mathcal{L}, f \in \mathcal{F}
\end{eqnarray} 

Clearly, the previous constraint is not linear, due to the optimization variables $y_{(l,f)}$ that appear on both the numerator and the denominator of the left-hand side, coupled with the presence of the $x_{(p,l,f)}$ variables on the right-hand side of the constraint.  {To solve this issue, we apply a linearization procedure. More in depth, we refer the reader to Appendix~\ref{app:linearization} for the details, while we report below the final outcomes. In brief, we replace Eq.~(\ref{eq:sir_not_linear}) with the following linear constraints:}
\begin{equation}
\label{eq:sir_aux_1}
v_{(l,p,l_2,f)} \leq x_{(p,l,f)}, \quad \forall p \in \mathcal{P}, l \in \mathcal{L}, l_2 \in \mathcal{L},  f \in \mathcal{F} 
\end{equation}

\begin{equation}
\label{eq:sir_aux_2}
v_{(l,p,l_2,f)} \leq y_{(l_2,f)}, \quad \forall p \in \mathcal{P}, l \in \mathcal{L}, l_2 \in \mathcal{L},  f \in \mathcal{F} 
\end{equation}

\begin{eqnarray}
\label{eq:sir_aux_3}
v_{(l,p,l_2,f)} \geq x_{(p,l,f)} + y_{(l_2,f)} - 1, \quad \quad \quad \quad \quad \quad \nonumber \\ \quad \forall p \in \mathcal{P}, l \in \mathcal{L}, l_2 \in \mathcal{L},  f \in \mathcal{F} 
\end{eqnarray}

\begin{eqnarray}
\label{eq:sir_linear}
S^{\text{MIN}}_{f} \cdot \left[\sum_{l_2 \in \mathcal{L}}\frac{\beta^2_{(l,p,l_2,f)}}{\beta^2_{(l,p,l,f)}}\cdot v_{(l,p,l_2,f)} - x_{(p,l,f)} \right] \leq 1 \nonumber \\ \quad \forall p \in \mathcal{P}, l \in \mathcal{L}, f \in \mathcal{F} 
\end{eqnarray}

where $v_{(l,p,l_2,f)} \in \{0,1\}$ is a binary auxiliary variable. 

\textbf{Power Density Limits.} We initially select the pixels that fall in the exclusion zones of the installed 5G \acp{gNB} and therefore are not subject to the \ac{EMF} limits defined for the general public. More formally, we introduce the binary variable $w_p$, set to 1 if $p$ is located inside an exclusion zone of an installed \ac{gNB} (0 otherwise). In addition, input parameter $I^{\text{ZONE}}_{(p,l,f)}$ takes value 1 if pixel $p$ is inside the exclusion zone of \ac{gNB} operating on frequency $f$ and located at $l$ (0 otherwise). The value of $w_p$ is then set through the following set of constraints:
\begin{equation}
\label{eq:lower_bound_exclusion}
w_p \geq I^{\text{ZONE}}_{(p,l,f)} \cdot y_{(l,f)}, \quad \forall p \in \mathcal{P}, l \in \mathcal{L}, f \in \mathcal{F}
\end{equation}
\begin{equation}
\label{eq:upper_bound_exclusion}
w_p \leq \sum_{l \in \mathcal{L}} \sum_{f \in \mathcal{F}} I^{\text{ZONE}}_{(p,l,f)} \cdot y_{(l,f)}, \quad \forall p \in \mathcal{P}
\end{equation}
More in depth, constraint (\ref{eq:lower_bound_exclusion}) activates $w_p$ if $p$ is inside at least one exclusion zone of an installed \ac{gNB}. On the other hand, constraint (\ref{eq:upper_bound_exclusion}) forces $w_p$ to 0 if $p$ is outside the exclusion zones for all the installed \acp{gNB}.

In the following, we introduce the constraints to compute the power density received by pixel $p$ over frequency $f$. Let us denote with input parameter $P^{\text{ADD}}_{(p,l,f)}$ the additional power density that is received by pixel $p$ when a \ac{gNB} operating on frequency $f$ is installed in $l$. Let us denote with $P^{\text{ADD-TS}}_{(p,f)}$ the variable storing the additional power density for pixel $p$ over $f$, which is computed from $P^{\text{ADD}}_{(p,l,f)}$ by applying the scaling factors $R^{\text{TIME}}_{(l,f)}$ and $R^{\text{STAT}}_{(l,f)}$. More formally, we include Eq.~(\ref{eq:power_density}) to our problem, thus yielding:
\begin{eqnarray}
\label{eq:p_tot_def}
P^{\text{ADD-TS}}_{(p,f)} = (1-w_p) \sum_{l \in \mathcal{L}} P^{\text{ADD}}_{(p,l,f)} \cdot R^{\text{TIME}}_{(l,f)} \cdot R^{\text{STAT}}_{(l,f)} \cdot y_{(l,f)} &&  \nonumber \\
  \forall p \in \mathcal{P}, f \in \mathcal{F} &&
\end{eqnarray}
In the previous constraint, the term $(1 - w_p)$ ensures that a pixel falling inside the exclusion zone of an installed 5G \ac{gNB} is not considered when the power density is evaluated against the limits.  {However, the presence of the term $(1 - w_p) \cdot y_{(l,f)}$ makes Eq.~(\ref{eq:p_tot_def}) not linear. We then linearize it by:} \textit{i}) introducing the auxiliary variable $z_{(p,l,f)} \in \{0,1\}$, and \textit{ii}) replacing Eq.~(\ref{eq:p_tot_def}) with the following set of constraints:
\begin{equation}
\label{eq:lin1_res}
z_{(p,l,f)} \leq (1-w_p), \quad \forall p \in \mathcal{P}, l \in \mathcal{L}, f \in \mathcal{F}
\end{equation}
\begin{equation}
\label{eq:lin2_res}
z_{(p,l,f)} \leq y_{(l,f)}, \quad \forall p \in \mathcal{P}, l \in \mathcal{L}, f \in \mathcal{F}
\end{equation}
\begin{equation}
\label{eq:lin3_res}
z_{(p,l,f)} \geq y_{(l,f)} - w_p, \quad \forall p \in \mathcal{P}, l \in \mathcal{L}, f \in \mathcal{F}
\end{equation}

\begin{eqnarray}
\label{eq:p_tot_def_lin}
P^{\text{ADD-TS}}_{(p,f)}= \sum_{l \in \mathcal{L}} P^{\text{ADD}}_{(p,l,f)} \cdot R^{\text{TIME}}_{(l,f)} \cdot R^{\text{STAT}}_{(l,f)} \cdot z_{(p,l,f)}\nonumber \\ \quad \forall p \in \mathcal{P}, f \in \mathcal{F}
\end{eqnarray}

 {To give more insights, we demonstrate with a simple example that Eq.~(\ref{eq:lin3_res}) is essential to correctly set $z_{(p,l,f)}$. Let us consider the following case: \textit{i}) the pixel $p$ is outside the exclusion zone, and hence $w_p=0$, and \textit{ii}) there is a gNB installed at location $l$ operating at frequency $f$, and therefore $y_{(l,f)}=1$. When only (\ref{eq:lin1_res})-(\ref{eq:lin2_res}) are considered (without (\ref{eq:lin3_res})), a feasible solution in this case would be to set $z_{(p,l,f)}=0$, which is not correct because the product $(1-w_p)\cdot y_{(l,f)}$ in the non-linear constraint (\ref{eq:p_tot_def}) is equal to one in the same case. On the other hand, when (\ref{eq:lin3_res}) is introduced, $z_{(p,l,f)}$ is (correctly) set to 1.} 

In a similar way, we compute the additional total power density $P^{\text{ADD-NOTS}}_{(p,f)}$ that is received by pixel $p$ on frequency $f$, without applying the scaling factors $R^{\text{TIME}}_{(l,f)}$, $R^{\text{STAT}}_{(l,f)}$. $P^{\text{ADD-NOTS}}_{(p,f)}$ is meaningful when $p$ belongs to a general public area (e.g., \textit{R5} of Tab.~\ref{tab:regulation_comparison}). In this case, in fact, the scaling parameters are not applied.\footnote{A revision in the regulations may be introduced in the future in order to introduce scaling parameters also for general public areas.} Therefore, we have:
\begin{eqnarray}
\label{eq:p_tot_def_lin_2}
P^{\text{ADD-NOTS}}_{(p,f)}=\sum_{l \in \mathcal{L}} P^{\text{ADD}}_{(p,l,f)}  \cdot z_{(p,l,f)}, \forall p \in \mathcal{P}, f \in \mathcal{F}
\end{eqnarray}


\begin{figure}[t]
	\centering
 	\subfigure[Pixel outside the exclusion zone]
	{
		\includegraphics[width=7cm]{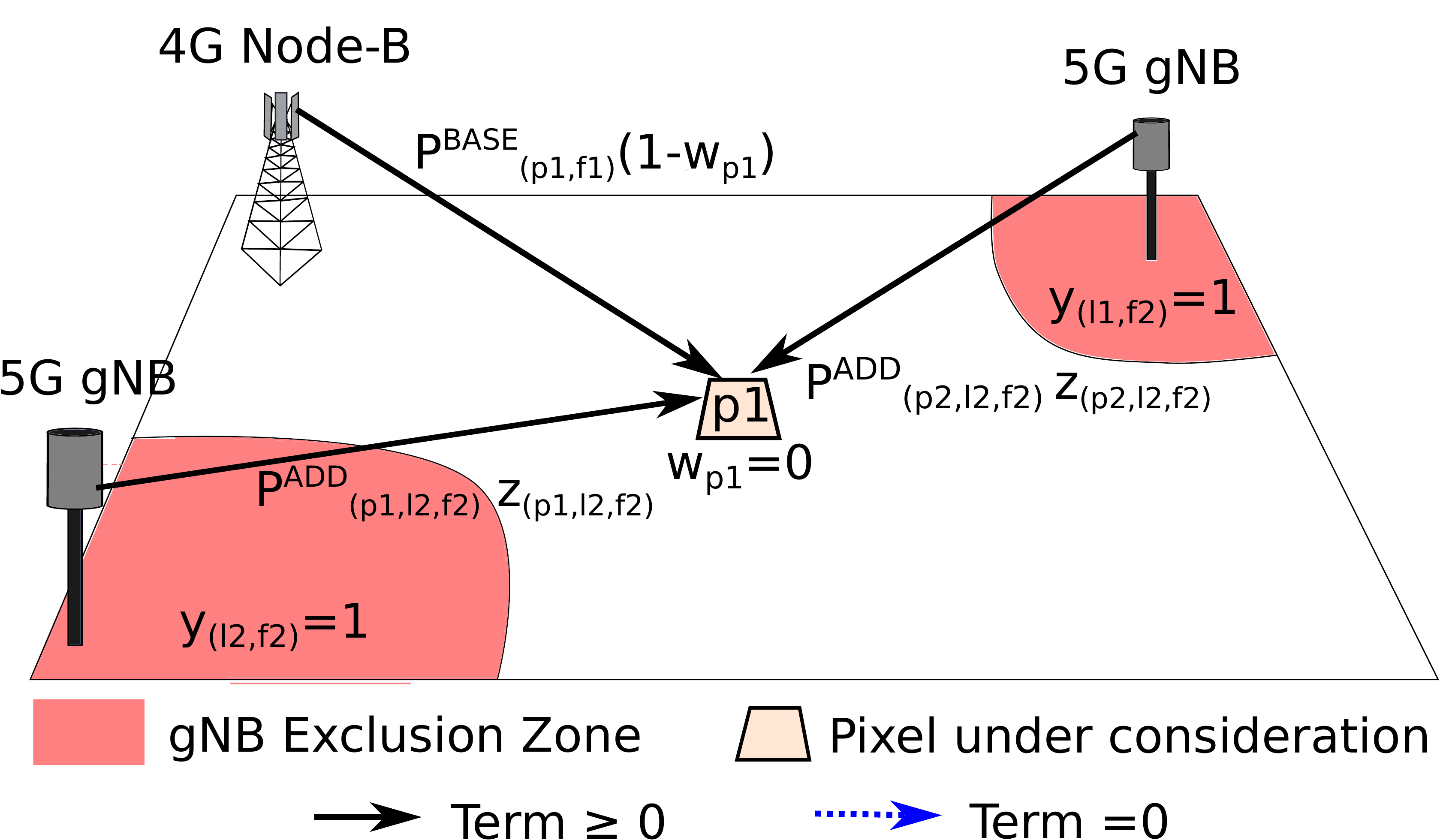}
		\label{fig:power_density_pixel_outside_exclusion_zone}

	}
 	\subfigure[Pixel inside the exclusion zone]
	{
		\includegraphics[width=7cm]{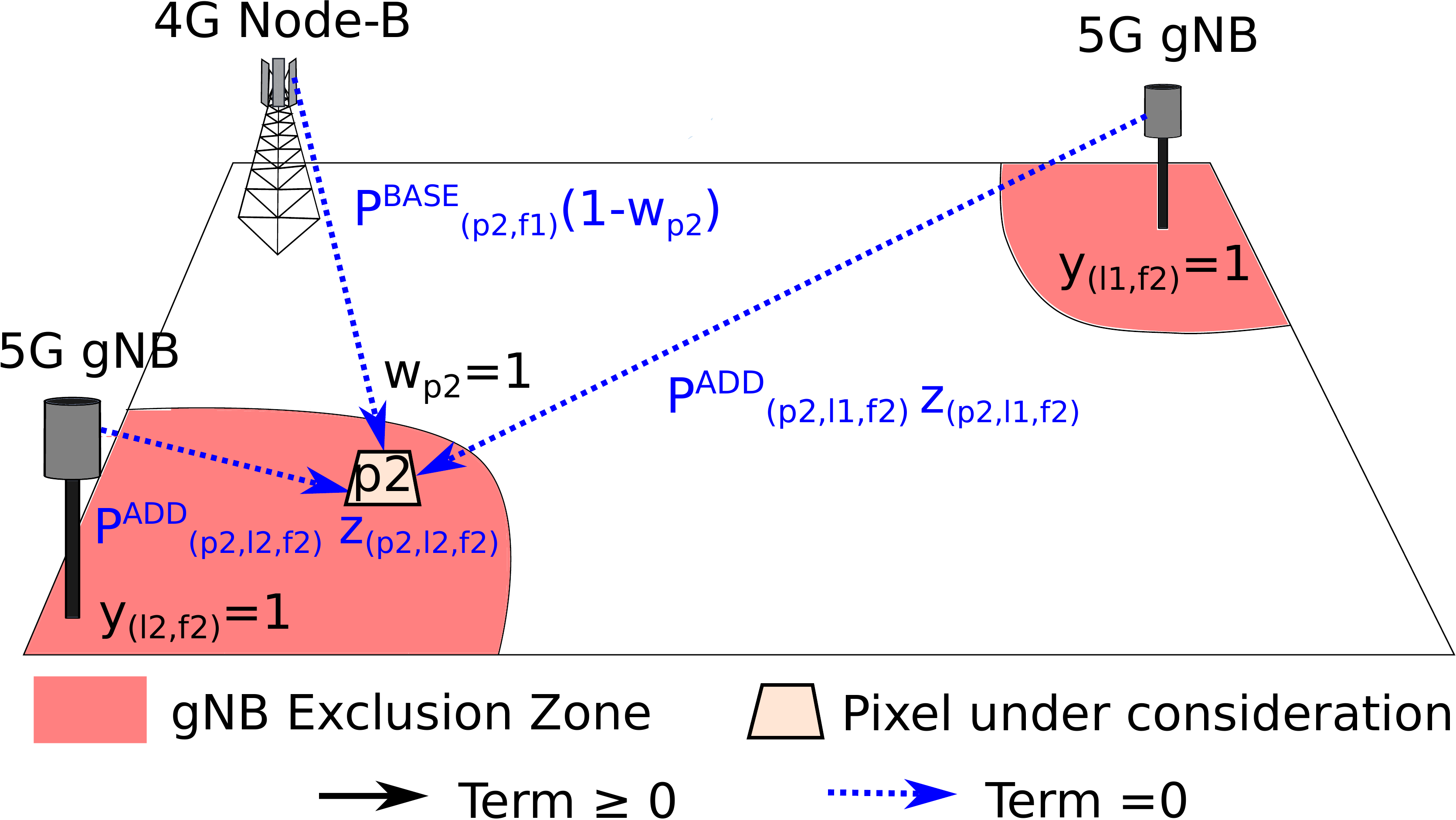}
		\label{fig:power_density_pixel_inside_exclusion_zone}
	}
	\caption{Computation of the power density terms in a toy-case scenario that includes exclusion zones from the newly installed \acp{gNB}.}
	\label{fig:power_density_pixel_computation}
\end{figure}

We then impose the power density limit on residential areas, which has to be ensured for each pixel $p \in \mathcal{P}^{\text{RES}}$.  More technically, we include the compliance assessment model of Eq.~(\ref{eq:compliance}) in our problem, thus obtaining:
\begin{equation}
\label{eq:power_density_limit_res}
\sum_{f \in \mathcal{F}} \frac{P^{\text{BASE}}_{(p,f)} \cdot (1-w_p) + P^{\text{ADD-TS}}_{(p,f)}}{L^{\text{RES}}_f}  \leq 1, \quad \forall p \in  \mathcal{P}^{\text{RES}} 
\end{equation}
where $P^{\text{BASE}}_{(p,f)}$ is the baseline power density over $p$ from all the radio-frequency sources operating of frequency $f$ and already installed in the scenario under consideration.

 {Finally}, we impose the power density limit on general public areas, which has to be ensured for each pixel $p \in \mathcal{P}^{\text{GEN}}$, by introducing the following constraint:
\begin{equation}
\label{eq:power_density_limit_gen}
\sum_{f \in \mathcal{F}} \frac{P^{\text{BASE}}_{(p,f)}\cdot (1-w_p)  + P^{\text{ADD-NOTS}}_{(p,f)}}{L^{\text{GEN}}_f}  \leq 1, \quad \forall p \in  \mathcal{P}^{\text{GEN}} 
\end{equation}

In order to clarify how the computation of the power density is governed by the optimization variables modelling the exclusion zones, Fig.~\ref{fig:power_density_pixel_computation} shows a graphical representation of $P^{\text{ADD}}_{(p,l,f)}\cdot z_{(p,l,f)}$ terms that appear in Eq.~(\ref{eq:p_tot_def_lin})-(\ref{eq:p_tot_def_lin_2}) as well as $P^{\text{BASE}}_{(p,f)} \cdot (1 - w_p)$ that are included in Eq.~(\ref{eq:power_density_limit_res})-(\ref{eq:power_density_limit_gen}). More in depth, the considered toy-case scenario includes one legacy 4G Node-B already installed over the territory and two newly installed \acp{gNB}. Fig.~\ref{fig:power_density_pixel_outside_exclusion_zone} focuses on a pixel $p1$ outside the exclusion zones of the \acp{gNB}. By applying Eq.~(\ref{eq:lower_bound_exclusion})-(\ref{eq:upper_bound_exclusion}), it holds that $w_{p1}=0$. Then, by applying constraints (\ref{eq:lin1_res})-(\ref{eq:lin3_res}), it holds that: $z_{(p1,l1,f2)}=1$, $z_{(p1,l2,f2)}=1$. Consequently, the computation of the total power density in (\ref{eq:power_density_limit_res})-(\ref{eq:power_density_limit_gen}) will include the contributions from the newly installed \acp{gNB} as well as the already installed 4G Node-B. Therefore, the installation of the newly installed \ac{gNB} is possible only if the \ac{EMF} compliance assessment constraints (\ref{eq:power_density_limit_res})-(\ref{eq:power_density_limit_gen}) are ensured. On the other hand, Fig.~\ref{fig:power_density_pixel_inside_exclusion_zone} reports the power density terms when the considered pixel $p2$ falls inside the exclusion zone of a \ac{gNB}. By applying Eq.~(\ref{eq:lower_bound_exclusion})-(\ref{eq:upper_bound_exclusion}),(\ref{eq:lin1_res})-(\ref{eq:lin3_res}) it holds that: $w_{p2}=1$, $z_{(p2,l1,f2)}=0$, $z_{(p2,l2,f2)}=0$. As a result, the power density terms $P^{\text{ADD}}_{(p2,l,f)}\cdot z_{(p2,l,f)}$ and $P^{\text{BASE}}_{(p2,f1)} \cdot (1 - w_{p2})$ are now set to zero. Therefore, the \ac{EMF} compliance assessment constraints (\ref{eq:power_density_limit_res})-(\ref{eq:power_density_limit_gen}) are always ensured for $p2$.

\textbf{Minimum Distance from Sensitive Places.} We then introduce the distance constraints that are included in regulations \textit{R5})-\textit{R6}) of Tab.~\ref{tab:regulation_comparison}. We remind that these constraints define a minimum distance between each installed 5G \ac{gNB} and a sensitive place. More formally, we have:
\begin{equation}
\label{eq:min_distance}
 D_{(p,l,f)} \cdot y_{(l,f)} \geq D^{\text{MIN}}, \quad \forall p \in \mathcal{P}^{\text{SENS}}, l \in \mathcal{L}, f \in \mathcal{F}
\end{equation}


\textbf{Site Constraints.} In the following, we impose that each site location can host up to $N^{\text{MAX}}$ \ac{gNB} types operating at different frequencies. More formally, we have:
\begin{equation}
\label{eq:max_equipment}
\sum_{f \in \mathcal{F}} y_{(l,f)} \leq N^{\text{MAX}}, \quad \forall l \in \mathcal{L}
\end{equation}

In addition, we introduce the indicator parameter $I^{\text{FREQ}}_{(l,f)}$, taking value 1 if \ac{gNB} of type $f$ can be hosted at location $l$, 0 otherwise. Clearly, a \ac{gNB} operating on frequency $f$ can be installed at $l$ only if the indicator parameter is 1. More formally, we have:
\begin{equation}
\label{eq:indicator_parameter}
y_{(l,f)} \leq I^{\text{FREQ}}_{(l,f)}, \quad \forall l \in \mathcal{L}, f \in \mathcal{F}
\end{equation}
 
\begin{algorithm*}[t]
        \small
        \caption{Pseudo-Code of \textsc{PLATEA} algorithm}
        \label{alg:platea}
	\textbf{Input:} Parameters and sets defined in Tab.~\ref{tab:notation}  {of Appendix~\ref{app:notation}}, assumed to be available through global variables\\
	\textbf{Output:} Variables $y$, $x$, $P^{\text{ADD-TS}}$, $P^{\text{ADD-NOTS}}$, $w$, $C^{\text{TOT}}$ for the best solution found
        \begin{algorithmic}[1]	
	\State{\texttt{// Step 1: Initialization}}
        \State{num\_f1\_max=$\sum_{l \in \mathcal{L}} I^{\text{FREQ}}_{(f1,l)}$; \texttt{// Max. number of installable $f1$ \acp{gNB}}}
	\State{num\_f2\_max=$\sum_{l \in \mathcal{L}} I^{\text{FREQ}}_{(f2,l)}$; \texttt{// Max. number of installable $f2$ \acp{gNB}}}
	\State{best\_obj=inf; \texttt{// Best objective initialization}}
	\State{[$y$, $x$, $P^{\text{ADD-TS}}$, $P^{\text{ADD-NOTS}}$, $w$, $C^{\text{TOT}}$]=\textsc{initial\_sol}(); \texttt{// Initialization of solution variables}}
	\State{\texttt{// Step 2: iteration over candidate deployments with frequency $f1$}}
	\For{num\_f1=1:num\_f1\_max}
		\State{[flag\_end x\_curr, y\_curr, pd\_curr]=\textsc{select\_best\_set\_f1}(num\_f1); \texttt{// Selection of the best deployment with num\_f1 \acp{gNB}}}
		\State{\texttt{// Step 3: iteration over candidate deployments with frequency $f2$}}
		\For{num\_f2=1:num\_f2\_max}
			\If{flag\_end==false}
				\State{sites\_f2\_comb=\textsc{extract\_sites}(num\_f2, f2); \texttt{// Extraction of the candidate deployments with frequency $f2$}}
				\For{sites\_f2 in sites\_f2\_comb}
					\State{[flag\_check, y\_curr, pd\_curr]=\textsc{install\_check}(sites\_f2, y\_curr, pd\_curr); \texttt{// Based on Eq.~(\ref{eq:lower_bound_exclusion}), (\ref{eq:upper_bound_exclusion}), (\ref{eq:lin1_res})-(\ref{eq:indicator_parameter})}}
					\If{flag\_check==true}
						\State{[x\_curr]=\textsc{associate\_pixels}(y\_curr, x\_curr); \texttt{// Based on Eq.~(\ref{eq:cov_pixel}),(\ref{eq:maximum_number_of_cells}),(\ref{eq:sir_aux_1})-(\ref{eq:sir_linear})}}
						\State{curr\_obj=\textsc{compute\_obj}(x\_curr, y\_curr); \texttt{// Based on Eq.~(\ref{tot_cost_bs_installed_computation}), (\ref{eq:tot_obj})}}
						\If{curr\_obj $<$ best\_obj}
							\State{best\_obj=curr\_obj;}
							\State{[$y$, $x$, $P^{\text{ADD-TS}}$, $P^{\text{ADD-NOTS}}$, $w$, $C^{\text{TOT}}$]=\textsc{save\_sol}(y\_curr, x\_curr, pd\_curr); \texttt{// Best Solution Saving}}
						\EndIf
						\If{(\textsc{all\_served}(x\_curr)==true)}
							\State{flag\_end=true; \texttt{// All pixels served}}						
						\EndIf
						\State{[x\_curr, y\_curr, pd\_curr]=\textsc{uninstall}($f2$, x\_curr, y\_curr, pd\_curr); \texttt{// Revert changes for $f2$}}
					\EndIf
				\EndFor
			\EndIf
		\EndFor
		\State{[x\_curr, y\_curr, pd\_curr]=\textsc{uninstall}($f1$, x\_curr, y\_curr, pd\_curr); \texttt{// Revert changes for $f1$}}
	\EndFor
	\end{algorithmic}
\end{algorithm*}

\textbf{Total Cost Computation.} Finally, we compute the total costs for installing the 5G \acp{gNB}. To this aim, let us denote with parameter $C^{\text{EQUIP}}_{f}$ the monetary costs of a 5G \ac{gNB} equipment operating on frequency $f$. In addition, let us denote with parameter $C_{(l,f)}^{\text{SITE}}$ the site installation cost for a 5G \ac{gNB} operating on frequency $f$ and installed at location $l$. The total costs $C^{\text{TOT}}$ for installing the new 5G \acp{gNB} are formally expressed as:
\begin{equation}
\label{tot_cost_bs_installed_computation}
C^{\text{TOT}} = \sum_{l \in \mathcal{L}} \sum_{f \in \mathcal{F}} \left (C^{\text{EQUIP}}_{f} + C_{(l,f)}^{\text{SITE}} \right) \cdot y_{(l,f)}
\end{equation}

\subsection{Objective Function and Overall Formulation}

 {We consider a multi-objective function that combines the total costs for installing the 5G \acp{gNB} (stored in variable $C^{\text{TOT}}$) and the number of pixels that are served by the installed 5G \acp{gNB} (stored in variables $x_{(p,l,f)}$).} The two terms are properly taken into account by the weight factor $\alpha_{(l,f)}$~[ {EUR}], which  {represents the per-pixel revenue by \ac{gNB} operating on frequency $f$ and located in $l$. More in depth, the $\alpha_{(l,f)}$ parameters have to be provided in input to our framework, and they can be estimated e.g., from the expected lifetime duration of a \ac{gNB} installation, the zones that are served by the \ac{gNB}, the projected increase of users over the years, and/or a weighed function of the aforementioned terms.}  {Interestingly, by varying the values of $\alpha_{(l,f)}$ over the set of locations $\mathcal{L}$ and over the set of frequencies $\mathcal{F}$, the operator can control the installation costs and the coverage level over the territory.}\footnote{Other alternative formulations commonly adopted during the cellular planning phase include the minimization of the \ac{CAPEX} costs under a given percentage of service coverage. However, this goal does not always guarantee problem feasibility. To overcome this issue, in this work we keep the service coverage in the objective function. As a result, our choice allows to preserve the problem feasibility on one side and to explore the impact of $\alpha_{(l,f)}$ on the obtained planning on the other one.}


The complete \textsc{Optimal Planning for 5G Networks under Service and Strict EMF Constraints (OPTPLAN-5G)} is formally expressed as:
\begin{equation}
\label{eq:tot_obj}
\text{min} \left(C^{\text{TOT}} - \sum_{p \in \mathcal{P}}\sum_{l \in \mathcal{L}}\sum_{f \in \mathcal{F}}\alpha_{(l,f)} \cdot x_{(p,l,f)}\right)
\end{equation}
subject to:
\begin{equation}
\label{eq:tot_constraints_eqn}
\begin{array}{ll}
\text{5G Coverage and Service:} & \text{Eq.}~(\ref{eq:cov_pixel},\ref{eq:maximum_number_of_cells}),(\ref{eq:sir_aux_1})-(\ref{eq:sir_linear})\\
\text{Power Density Limits:} &  \text{Eq.}~(\ref{eq:lower_bound_exclusion},\ref{eq:upper_bound_exclusion}),(\ref{eq:lin1_res})-(\ref{eq:power_density_limit_gen})\\
\text{Min. Distance Constraint:} &  \text{Eq.}~(\ref{eq:min_distance})\\
\text{Site Constraints:} &  \text{Eq.}~(\ref{eq:max_equipment}),(\ref{eq:indicator_parameter})\\
\text{Total Cost Computation:} &  \text{Eq.}~(\ref{tot_cost_bs_installed_computation})\\
\end{array}
\end{equation}

under variables: $C^{\text{TOT}} \geq 0$, $x_{(p,l,f)} \in \{0,1\}$, $y_{(l,f)} \in \{0,1\}$, $w_p \in \{0,1\}$, $v_{(p,l,l_2,f)} \in \{0,1\}$, $z_{(p,l,f)} \in \{0,1\}$.


\begin{Proposition}The \textsc{OPTPLAN-5G} problem is NP-Hard.\label{prop:NPHard}\end{Proposition}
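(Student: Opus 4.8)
The plan is to establish NP-hardness through a polynomial-time reduction from the \textsc{Minimum Set Cover} problem, whose decision version is well known to be NP-complete. The idea is to exhibit a family of \textsc{OPTPLAN-5G} instances in which every constraint except the coverage relation is rendered inactive, so that the surviving problem is exactly a minimum-cost covering problem; since an optimal solution of the restricted instance then solves \textsc{Minimum Set Cover}, the general problem is at least as hard.

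First I would neutralise every constraint that is not directly tied to pixel coverage. I would restrict to a single frequency $f \in \mathcal{F}$, set $S^{\text{MIN}}_f = 0$ so that the linearised \ac{SIR} constraint (\ref{eq:sir_linear}) reduces to $0 \le 1$ and is satisfied for any deployment, take $\mathcal{P}^{\text{SENS}} = \emptyset$ so that the minimum-distance constraint (\ref{eq:min_distance}) is vacuous, and choose the power-density limits $L^{\text{RES}}_f, L^{\text{GEN}}_f$ large enough (with $P^{\text{BASE}}_{(p,f)} = 0$) that (\ref{eq:power_density_limit_res})-(\ref{eq:power_density_limit_gen}) hold for every admissible installation. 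Finally I set $N^{\text{MAX}} = 1$, $N^{\text{SER}} = 1$ and $I^{\text{FREQ}}_{(l,f)} = 1$, so that the site constraints (\ref{eq:max_equipment})-(\ref{eq:indicator_parameter}) impose no obstruction. After this specialisation only the coverage constraint (\ref{eq:cov_pixel}) and the objective (\ref{eq:tot_obj}) remain active.

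Next I would encode the set system. Given a universe $U = \{e_1, \dots, e_m\}$ and subsets $S_1, \dots, S_n \subseteq U$, I create one pixel $p_i \in \mathcal{P}$ per element $e_i$ and one candidate location $l_j \in \mathcal{L}$ per subset $S_j$, and I define the input distances so that $D_{(p_i,l_j,f)} \le D^{\text{MAX}}_f$ if and only if $e_i \in S_j$ (assigning a larger value otherwise). By (\ref{eq:cov_pixel}), pixel $p_i$ may then be associated with location $l_j$ precisely when $e_i \in S_j$ and a \ac{gNB} is installed there. Setting all installation costs equal to one makes $C^{\text{TOT}}$ equal to the number of opened locations, and setting the per-pixel reward $\alpha_{(l,f)} = \alpha$ to a common value strictly larger than the cost of any single \ac{gNB} guarantees that it is always profitable to open one more facility in order to cover one more currently-uncovered pixel. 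Consequently, any optimal solution of (\ref{eq:tot_obj}) covers every coverable pixel, and among full covers it minimises the number of opened facilities; thus the optimum of the restricted \textsc{OPTPLAN-5G} instance yields a minimum set cover, and the reduction is clearly polynomial.

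The main obstacle I anticipate is not the combinatorial core but the two modelling subtleties that accompany it. The first is the choice of $\alpha$: one must argue rigorously that a sufficiently large common reward forces full coverage without perturbing the comparison among full-coverage solutions, which follows from an exchange argument, since any solution leaving a coverable pixel unserved can be strictly improved by opening a covering facility whenever $\alpha$ exceeds its cost. The second is geometric realisability: the formulation treats the $D_{(p,l,f)}$ as input parameters, so arbitrary coverage patterns are admissible and the construction above is legitimate; should one instead insist on Euclidean consistency of the distances, the same argument goes through by reducing from the (still NP-hard) geometric disk-cover problem. Together these steps show that solving \textsc{OPTPLAN-5G} is at least as hard as \textsc{Minimum Set Cover}, which establishes the claim.
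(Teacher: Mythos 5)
Your proof is correct, but it follows a genuinely different route from the paper's. The paper argues by \emph{restriction}: it collapses the instance to a single pixel, sets $N^{\text{MAX}}=1$ so that each location can host at most one \ac{gNB} (Eq.~(\ref{eq:one_installation})), replaces the composite exposure constraint (\ref{eq:power_density_limit_gen}) by a per-frequency budget $\sum_{l} P^{\text{ADD}}_{(p,l,f)} y_{(l,f)} \leq L^{\text{GEN}}_f$ (Eq.~(\ref{eq:power_density_comp_new})), and observes that maximizing $\sum_{l}\sum_{f}\alpha_{(l,f)} y_{(l,f)}$ under these constraints is exactly the \textsc{Generalized Assignment Problem}, with locations playing the role of items, frequencies the role of capacitated bins (capacities $L^{\text{GEN}}_f$, weights $P^{\text{ADD}}_{(p,l,f)}$), and the $\alpha_{(l,f)}$ the profits. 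You instead reduce from \textsc{Minimum Set Cover}: you keep many pixels but a single frequency, trivialize the \ac{EMF}, \ac{SIR} and site constraints, encode set--element incidence through the input distances $D_{(p,l,f)}$ and the coverage constraint (\ref{eq:cov_pixel}), and use a large uniform reward $\alpha$ in the objective (\ref{eq:tot_obj}) to force full coverage, so that the optimum counts a minimum number of opened facilities. The two reductions isolate hardness in complementary parts of the model: the paper's shows that the frequency-assignment/power-budget interaction is already hard even for one pixel, whereas yours shows that the coverage-versus-cost trade-off is already hard even with all exposure constraints inactive; your route also has the side benefit that hardness of approximation for set cover would transfer, and your exchange argument for the choice of $\alpha$ and your remark on geometric realisability (the $D_{(p,l,f)}$ being free input parameters of the \ac{MILP}) close the two gaps such a construction must address.
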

\begin{proof}
Let us consider a special case of the problem, where a single pixel is evaluated. Moreover, let us assume that this single pixel is covered if the \ac{gNB} operating on frequency $f$ is installed in $l$, i.e., $x_{(p,l,f)}=y_{(l,f)}, \forall l \in \mathcal{L}, f \in \mathcal{F}$.\footnote{In this way, we assume that the pixel is within $D^{\text{MAX}}$ distance from all the \acp{gNB} and that the minimum value of service throughput is equal to 0.} Let us also consider the possibility to install up to one \ac{gNB} in each site, i.e., $N^{\text{MAX}}=1$. Consequently, constraint (\ref{eq:max_equipment}) can be rewritten as:
\begin{equation}
\label{eq:one_installation}
\sum_{f \in \mathcal{F}} y_{(l,f)} \leq 1, \quad \forall l \in \mathcal{L}
\end{equation}
Moreover, let us assume that the considered pixel is outside the exclusion zone of each installed \ac{gNB}, i.e., $w_p=0$. Consequently, $z_{(p,l,f)} = y_{(l,f)}, \forall l \in \mathcal{L}, f \in \mathcal{F}$. Moreover, we consider: \textit{i}) the application of general public limits, i.e., the scaling parameters are not applied and \textit{ii}) a relaxation of the power density constraints in (\ref{eq:power_density_limit_gen}) with no background power density (i.e., $P^{\text{BASE}}_{(p,f)}=0,\forall f \in \mathcal{F}$) and the limit verification for each frequency in isolation w.r.t. the other frequencies. More formally, constraint (\ref{eq:power_density_limit_gen}) is replaced with the following one:
\begin{equation}
\label{eq:power_density_comp_new}
\sum_{l \in \mathcal{L}} P^{\text{ADD}}_{(p,l,f)} \cdot y_{(l,f)} \leq L_f^{\text{GEN}}, \quad \forall f \in \mathcal{F} 
\end{equation}  
We then assume the maximization of the service coverage, which in our problem is equivalent to the maximization of the number of installed \acp{gNB}, weighted by $\alpha_{(l,f)}$. More formally, we have:
\begin{equation}
\text{max} \sum_{l \in \mathcal{L}}\sum_{f \in \mathcal{F}}\alpha_{(l,f)} \cdot y_{(l,f)}
\end{equation}
subject to: Eq.~(\ref{eq:one_installation}), Eq.~(\ref{eq:power_density_comp_new}); under variables: $y_{(l,f)} \in \{0,1\}$. It is therefore trivial to note that the aforementioned formulation is the well-known \textsc{Generalized Assignment Problem (GAP)}, which is NP-Hard \cite{martello1990knapsack}. Since \textsc{GAP} is a special case of our problem, we can conclude that also \textsc{OPTPLAN-5G} is NP-Hard. 
\end{proof}

\section{\textsc{PLATEA} algorithm}
\label{sec:algorithm}

Since the \text{OPTPLAN-5G} is NP-Hard, and therefore very challenging to be solved even for small problem instances, we design an efficient algorithm, called \textsc{PLanning Algorithm Towards \ac{EMF} Emission Assessment (PLATEA)} to practically solve it. We base our solution on the following intuitions:
\begin{enumerate}
\item we apply a \textit{divide et impera} approach, in which the complex planning problem is split into sets of subproblems. More in depth, since the different frequencies used in 5G have in general different goals (e.g., throughput maximization and/or coverage maximization), we exploit the \ac{gNB} operating frequency as the main metric to split the original problem into smaller subproblems;
\item we restrict the exploration of the solution space by evaluating  {subsets among all the possible combinations of} candidate deployments. However, we introduce a parameter to control the exploration level  {of the combinations set};
\item we exploit the linear constraints introduced in the previous section to limit the computational complexity of \textsc{PLATEA}.
\end{enumerate}

Alg.~\ref{alg:platea} reports the high-level pseudo-code of \textsc{PLATEA}.  {The source code of the algorithm is also available for download} \cite{platea}.  {More in detail, we design \textsc{PLATEA} by assuming that two distinct} frequencies $f1$ and $f2$ are exploited, with $f1$ targeting throughput maximization and $f2$ targeting coverage maximization.\footnote{In our case, $f1$ is a mid-band frequency, while $f2$ is a sub-GHz frequency. These two sets of frequencies are the ones currently in use by 5G, while the exploitation of frequencies in the mm-Wave band is still at the early stage in many countries in the world. However, \textsc{PLATEA}  {may} be easily generalized also to the case in which three types of frequencies (i.e., sub-GHz, mid-band, mm-Waves) are employed. We leave this aspect as future work.} In order to ease the presented pseudo-codes, we adopt the following guidelines: \textit{i}) the input parameters and sets defined in Tab.~\ref{tab:notation}  {of Appendix~\ref{app:notation}} are assumed to be available through global variables, and \textit{ii}) the subscripts appearing in the parameters/variables are hindered. The algorithm then produces as output the selected deployment $y$, the pixel to \ac{gNB} association $x$, the power density variables $P^{\text{ADD-TS}}$, $P^{\text{ADD-NOTS}}$, the exclusion zone variable $w$ and the total installation costs $C^{\text{TOT}}$ for the selected deployment.

\begin{algorithm}[t]
\small
\caption{Pseudo-Code of the \textsc{select\_best\_set\_f1} function}
\label{alg:selectbestsetf1}
	\textbf{Input:} num\_f1 deployed \acp{gNB}  with frequency $f1$\\  
	\textbf{Output:} flag\_end flag with installation status (false = installation successful, true = installation unsuccessful), temporary variables x\_best, y\_best, pd\_best
       \begin{algorithmic}[1]	
	\State{best\_obj=inf;}
	\State{flag\_end=true;}
	\State{sites\_f1\_comb=\textsc{extract\_sites}(num\_f1,f1); \texttt{// Extraction of the candidate deployments with frequency $f1$}}
	\For{sites\_f1 in sites\_f1\_comb}
		\State{[x\_temp y\_temp pd\_temp]=\textsc{initialize}();}
		\State{[flag\_check, pd\_temp]=\textsc{install\_check}(sites\_f1, y\_temp, pd\_temp); \texttt{// Based on Eq.~(\ref{eq:lower_bound_exclusion}), (\ref{eq:upper_bound_exclusion}), (\ref{eq:lin1_res})-(\ref{eq:indicator_parameter}) with frequency $f1$}}
		\If{flag\_check==true}
			\State{flag\_end=false;}
			\State{[x\_temp]=\textsc{associate\_pixels}(y\_temp, x\_temp); \texttt{// Based on Eq.~(\ref{eq:cov_pixel}),(\ref{eq:maximum_number_of_cells}),(\ref{eq:sir_aux_1})-(\ref{eq:sir_linear}) with frequency $f1$}}
			\State{temp\_obj=\textsc{compute\_obj}(x\_temp, y\_temp); \texttt{// Based on Eq.~(\ref{tot_cost_bs_installed_computation}), (\ref{eq:tot_obj}) with frequency $f1$}}
			\If{temp\_obj $<$ best\_obj}
				\State{best\_obj=temp\_obj;}
				\State{x\_best=x\_temp;}
				\State{y\_best=y\_temp;}
				\State{pd\_best=pd\_temp;}
			\EndIf
		\EndIf
	\EndFor
	\end{algorithmic}
\end{algorithm}

We then describe the operations performed by \textsc{PLATEA}. Initially, the maximum number of installable \ac{gNB} is computed (lines 2-3). In the following, all the variables are initialized to zero values by the \textsc{initial\_sol} function (line 5). \textsc{PLATEA} then iterates over the possible candidate deployments with frequency $f1$ (lines 7-31). In particular, the \textsc{select\_best\_set\_f1} function in line 8 retrieves the best solution found for each number of installable $f1$ \acp{gNB}, starting from one up to the maximum number (line 7). 

In the following, we provide more details about the \textsc{select\_best\_set\_f1} function, which is expanded in Alg.~\ref{alg:selectbestsetf1}. The function requires as input the number of targeted \acp{gNB} to be installed, denoted as \texttt{num\_f1}. Then, the function produces as output a flag (indicating if a feasible deployment has been found), as well as temporary variables storing the current set of installed $f1$ \acp{gNB}, the current pixel to \ac{gNB} association, and the current power density over the set of pixels. After  {initializing} the routine variables (line 1-2), the function retrieves the possible  {combinations} of $f1$ \acp{gNB}, by running the \textsc{extract\_sites} routine (line 3). Since enumerating all the possible  {combinations} is a challenging step in terms of computational requirements, we control the amount of generated  {combinations} by assuming that up to \texttt{num\_f1} candidate deployments are randomly generated. Intuitively, when \texttt{num\_f1} is low, it is not meaningful to explore the whole space of  {combinations}, since the number of served pixels will be in any case rather limited. On the other hand, we consider more  {combinations} as \texttt{num\_f1} increases, since the impact on service coverage may be not negligible in this case. 

In the following (lines 4-18), the \textsc{select\_best\_set\_f1} function iterates over the set of selected  {combinations}. In particular, the constraints about power density limits in Eq.~(\ref{eq:lower_bound_exclusion}), (\ref{eq:upper_bound_exclusion}), (\ref{eq:lin1_res})-(\ref{eq:power_density_limit_gen}), minimum distance from sensitive places in Eq.~(\ref{eq:min_distance}) and site constraints in Eq.~(\ref{eq:max_equipment})-(\ref{eq:indicator_parameter}) are checked over frequency $f1$. If the previous constraints are all met, the pixels are associated to the installed \acp{gNB} (line 9) and the objective function is evaluated (line 10). More in depth, the association of pixels in line 9 is performed by sequentially analyzing the set of installed \acp{gNB} and by associating each pixel while ensuring the 5G coverage and service constraints of Eq.~(\ref{eq:cov_pixel}),(\ref{eq:maximum_number_of_cells}),(\ref{eq:sir_aux_1})-(\ref{eq:sir_linear}) with frequency $f1$. Clearly, if the previous constraints are not met, the current pixel is not associated to the \ac{gNB} under consideration. In addition, the computation of the objective function in line 10 exploits constraints Eq.~(\ref{tot_cost_bs_installed_computation}),(\ref{eq:tot_obj}) with frequency $f1$. Eventually, the best solution is updated in lines 11-16.

When \textsc{select\_best\_set\_f1} is terminated, \textsc{PLATEA} performs lines 9-31 of Alg.~\ref{alg:platea}. In particular, the algorithm iterates over the candidate deployments on frequency $f2$ (line 10). Clearly, this step is performed only if a feasible candidate deployment over frequency $f1$ has been found (line 11). In the following, the algorithm generates \texttt{num\_f2}  {combinations} of $f2$ \acp{gNB} (line 12), and then iterates over each candidate deployment (lines 13-27) in order to verify the constraints (line 14) and eventually to perform the \ac{gNB}-pixel association (lines 15-16). The functions used in these steps are exactly the same adopted in Alg.~\ref{alg:selectbestsetf1}, except from the adopted  {frequency}, which  {is now set} to $f2$. In the following, the objective function is evaluated (line 17), and the best solution is eventually updated (lines 18-21). The algorithm then stops evaluating further deployments if all the pixels have been served  (line 22-24). Clearly, when passing between the evaluation of one deployment to the following one, the changes operated on the temporary variables are reverted to the previous state (lines 25,30). 

 {Finally, we compute the computational complexity of \textsc{PLATEA}. The detailed breakdown of the complexity for the single functions is reported in Appendix.~\ref{app:complexity}, while here we report the salient features. In brief, the} overall complexity of \textsc{PLATEA} is in the order of $\mathcal{O}(|\mathcal{P}|\times |\mathcal{L}|^5 \times |\mathcal{F}|)$.

\section{Scenario Description}
\label{sec:scenario}

\begin{figure}[t]
	\centering
	\includegraphics[width=7.5cm]{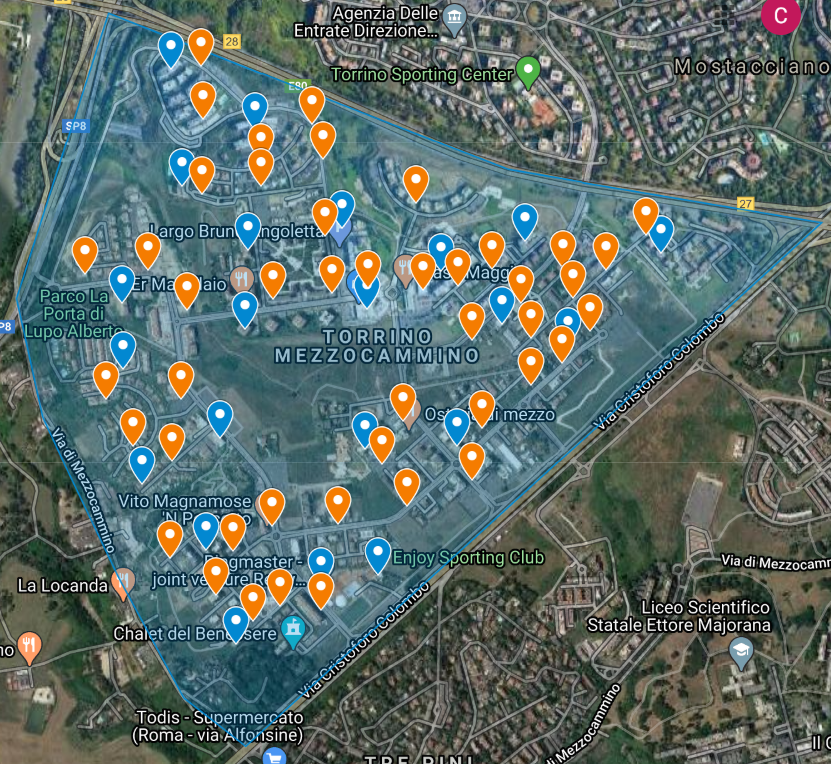}
	\caption{\ac{TMC} map with candidate locations for $f1$ \acp{gNB} (orange pins) and for $f2$ \acp{gNB} (blue pins).}
	\label{fig:torrino_map}
\end{figure}

We consider as reference scenario the \ac{TMC} neighborhood in Rome, Italy. The area under consideration, spanning over $2.47$~[km$^2$], is actually populated by more than 10000 inhabitants. We select the \ac{TMC} neighborhood due to the following reasons: \textit{i}) \ac{TMC} includes residential areas and sensitive places (i.e., public parks, schools, churches, recreation centers); therefore, its territory is subject to very stringent \ac{EMF} regulations (i.e., \textit{R6} regulation of Tab.~\ref{tab:regulation_comparison}), \textit{ii}) the terrain is almost plain, i.e., there are not steep hills and/or large obstacles (apart from the buildings), which would otherwise affect the propagation conditions, \textit{iii}) 5G coverage is not actually provided in the neighborhood, \textit{iv}) pre-5G base stations are installed only outside the neighborhood, \textit{v}) background information about pre-5G coverage and \ac{QoS} levels experienced in \ac{TMC} is already available in \cite{chiaraviglio2018not}.

We then focus on the set of frequencies $\mathcal{F}$ that are employed by 5G \acp{gNB}. More in detail, we consider the exploitation of two distinct frequencies, namely $f1=3700$~[MHz] and $f2=700$~[MHz]. Both $f1$  and $f2$ have been recently auctioned to 5G operators in Italy \cite{mise}. Therefore, we expect that both $f1$ and $f2$ will be used by 5G equipment in the forthcoming years.\footnote{Apart from $f1$ and $f2$, the auction of 5G frequencies in Italy included also a band at 26~[GHz] (i.e., close to mm-Waves) \cite{mise}. However, the 26~[GHz] frequency is intentionally left apart from this paper, due to the following reasons: \textit{i}) at present time, it is unclear at which extent 5G devices operating at 26~[GHz] will be installed over the territory, and \textit{ii}) there are not commercial \acp{gNB} operating at 26~[GHz] currently installed in Italy.} In this work, we assume that $f1$ is exploited by micro \acp{gNB}  {to provide hot-spot capacity}, while the $f2$ is employed by macro \acp{gNB}  {mainly for coverage}.  {As a consequence, we differentiate $f1$ and $f2$ in terms of candidate sites, coverage distance, \ac{EMF} values, 5G performance indicators and costs parameters.}

In the following, we provide more details about the area under consideration, the set of pixels, the set of candidate \acp{gNB} and the set of sensitive places. To this aim, Fig.~\ref{fig:torrino_map} reports: \textit{i}) the \ac{TMC} neighborhood  (transparent blue area), \textit{ii}) the set of locations $\mathcal{L}$ that can host 5G \acp{gNB} (i.e., the union of blue and orange pins), \textit{iii}) the subset of locations that can host $f1$ \acp{gNB} (orange pins), and \textit{iv}) the subset of locations that can host $f2$ \acp{gNB} (blue pins). The selection of locations in \textit{iii}) and \textit{iv}) is driven by the following principles: \textit{a}) installation of $f2$ \acp{gNB} mainly on top of buildings, in order to maximize the coverage over the territory, \textit{b}) installation of $f1$ \acp{gNB} close to the zones where capacity is needed (along the roads and in proximity to the residential buildings). As a result, a total of $|\mathcal{L}|=69$ candidate locations are taken into account in this work {, thus leading to a total number of candidate deployments equal to $2^{69}$}.\footnote{ {Clearly, the introduction of different types of gNBs operating on different frequencies greatly complicates the considered problem, mainly because the number of candidate locations is increased w.r.t. a scenario composed of homogeneous gNBs operating at the same frequency.}} 

\begin{figure}[t]
	\centering
	\includegraphics[width=7.5cm]{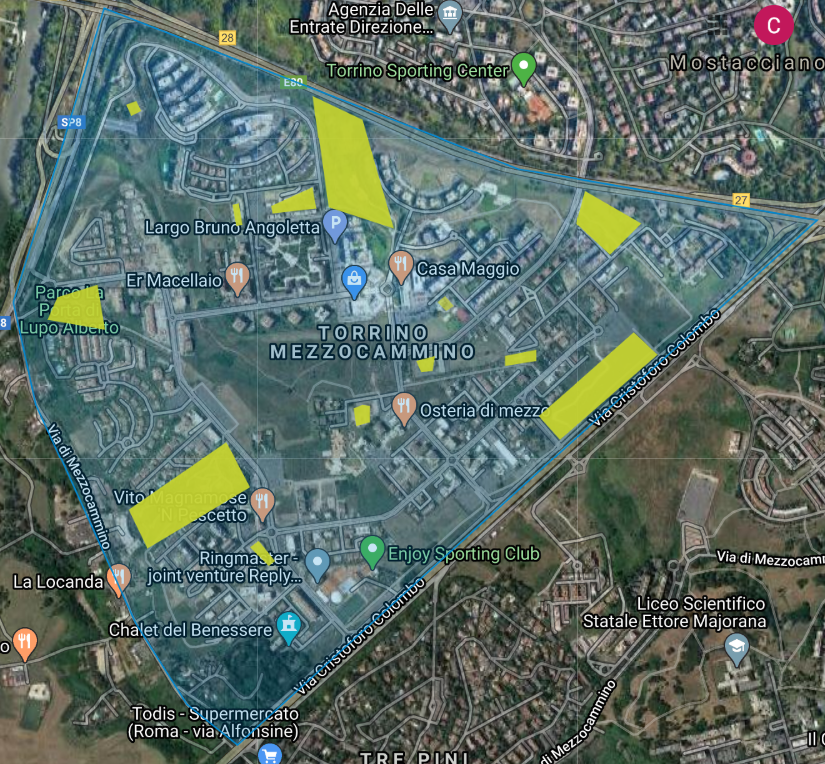}
	\caption{\ac{TMC} map with sensitive places highlighted in yellow.}
	\label{fig:torrino_map_areas}
\end{figure}

In the following, we move our attention to the set of pixels $\mathcal{P}$. More in detail, we assume a pixel tessellation over the \ac{TMC} area, with a pixel granularity equal to $10 \times 10$~[m$^2$]. The total number of pixels $|\mathcal{P}|$ is then equal to 24318. Focusing then on the sensitive places, Fig.~\ref{fig:torrino_map_areas} highlights in yellow the areas hosting public parks, schools and/or churches. By adopting \textit{R6}) regulation from Tab.~\ref{tab:regulation_comparison}, the installation of \acp{gNB} is prohibited within a minimum distance of $D^{\text{MIN}}=100$~[m] from the external perimeter of these sensitive places.  

\begin{table}[t]
    \caption{Setting of the frequency-related parameters.}
    \label{tab:service_power_density_param}
    \scriptsize
    \centering
    \begin{tabular}{|>{\columncolor{DarkLinen}}p{0.10cm}|c|c|c|}
\hline
\rowcolor{Coral} & & & \\[-0.75em]
\rowcolor{Coral} & \textbf{Parameter} & \textbf{$f1=3.7$~[GHz]} & \textbf{$f2=700$~[MHz]}\\[0.25em]
\hline
\rowcolor{Linen} \cellcolor{DarkLinen} & & \multicolumn{2}{c|}{} \\[-0.75em]
\rowcolor{Linen} \cellcolor{DarkLinen} &  $I^{\text{FREQ}}_{(l,f)}$ & \multicolumn{2}{c|}{\begin{minipage}{5.5cm}Based on \ac{TMC} scenario with $N^{\text{MAX}}=1$.\end{minipage}} \\[0.25em]
\rowcolor{Linen} \cellcolor{DarkLinen} & & \multicolumn{2}{c|}{} \\[-0.75em]
\hline
\rowcolor{White} \cellcolor{DarkLinen} & & & \\[-0.75em]
 & \ac{gNB} height & 10~[m] (Pole mounted) & 25~[m] (Roof-top mounted) \\[-0.05em]
& & & \\[-0.75em]
\rowcolor{Linen} \cellcolor{DarkLinen} & &  \multicolumn{2}{c|}{} \\[-0.75em]
\rowcolor{Linen} \cellcolor{DarkLinen} & Pixel height & \multicolumn{2}{c|}{1.5~[m] (std. evaluation height)} \\[-0.05em]
& &  \multicolumn{2}{c|}{} \\[-0.75em]
\multirow{-4}{*}{\begin{sideways}\textbf{Distance}\end{sideways}} & $D_{(p,l,f)}$ & \multicolumn{2}{c|}{\begin{minipage}{5.5cm}Based on \ac{TMC} scenario and \ac{gNB}/pixel heights.\end{minipage}} \\[-0.05em]
& &  \multicolumn{2}{c|}{} \\[-0.75em]
\hline
\rowcolor{Linen} \cellcolor{DarkLinen} &  & & \\[-0.75em]
\rowcolor{Linen} \cellcolor{DarkLinen} & $O^{\text{MAX}}_{(l,f)}$ & 200~[W] $\forall l \in \mathcal{L}$ \cite{air6488} & \begin{minipage}{2.5cm}65~[W] $\forall l \in \mathcal{L}$ (assumed to be in line with pre-5G technologies \cite{itutk70})\end{minipage} \\[0.15em]
\rowcolor{Linen} \cellcolor{DarkLinen} &  & & \\[-0.75em]
\rowcolor{White} \cellcolor{DarkLinen} & &  \multicolumn{2}{c|}{} \\[-0.75em]
& $\eta^{\text{GAIN}}_f$ & \multicolumn{2}{c|}{15~[dB] \cite{itutk70}} \\[0.25em]
\rowcolor{Linen} \cellcolor{DarkLinen} & &  \multicolumn{2}{c|}{} \\[-0.75em]
\rowcolor{Linen} \cellcolor{DarkLinen} & $\eta^{\text{LOSS}}_f$ & \multicolumn{2}{c|}{2.32~[dB] \cite{itutk70}} \\[0.25em]
& & & \\[-0.75em]
& $R^{\text{STAT}}_{(l,f)}$ & 0.25 $\forall l \in \mathcal{L}$ \cite{ThoFurTor:17} & 1 $\forall l \in \mathcal{L}$ (max. value) \\[0.25em]
\rowcolor{Linen} \cellcolor{DarkLinen} & & \multicolumn{2}{c|}{} \\[-0.75em]
\rowcolor{Linen} \cellcolor{DarkLinen}  & $R^{\text{TIME}}_{(l,f)}$ & \multicolumn{2}{c|}{0.3 $\forall l \in \mathcal{L}$ \cite{ecoscienza}} \\[0.25em]
& & & \\[-0.75em]
& Excl. Distance & 11~[m] \cite{5gimpact} & 5~[m] (Roof excl. zone) \\[0.25em]
\rowcolor{Linen} \cellcolor{DarkLinen}& & \multicolumn{2}{c|}{} \\[-0.75em]
\rowcolor{Linen} \cellcolor{DarkLinen}  & $I^{\text{ZONE}}_{(p,l,f)}$ & \multicolumn{2}{c|}{\begin{minipage}{5.5cm}Based on the \ac{TMC} scenario and the exclusion distances.\end{minipage}} \\[0.25em]
\rowcolor{Linen} \cellcolor{DarkLinen} & & \multicolumn{2}{c|}{} \\[-0.75em]
\rowcolor{White} \cellcolor{DarkLinen}  & & \multicolumn{2}{c|}{} \\[-0.75em]
& $P^{\text{BASE}}_{(p,f)}$ & \multicolumn{2}{c|}{\begin{minipage}{5.5cm}Based on real \ac{EMF} measurements in \ac{TMC} scenario.\end{minipage}} \\[0.25em]
& & \multicolumn{2}{c|}{} \\[-0.75em]
\rowcolor{Linen} \cellcolor{DarkLinen}& & \multicolumn{2}{c|}{} \\[-0.75em]
\rowcolor{Linen} \cellcolor{DarkLinen}\multirow{-15}{*}{\begin{sideways}\textbf{5G EMF}\end{sideways}} & $P^{\text{ADD}}_{(p,l,f)}$ & \multicolumn{2}{c|}{\begin{minipage}{5.5cm}Based on point source model \cite{itutk70} over \ac{TMC} scenario and \ac{EMF} parameters.\end{minipage}} \\[0.25em]
\rowcolor{Linen} \cellcolor{DarkLinen}& & \multicolumn{2}{c|}{} \\[-0.75em]

\rowcolor{White} \cellcolor{DarkLinen} & & \multicolumn{2}{c|}{} \\[-0.75em]
 & $L_f$ & \multicolumn{2}{c|}{6~[V/m] (\textit{R6} of Tab.~\ref{tab:regulation_comparison})} \\[0.25em]

\hline
\rowcolor{Linen} \cellcolor{DarkLinen} & & & \\[-0.75em]
\rowcolor{Linen} \cellcolor{DarkLinen}& $D^{\text{MAX}}_f$ & 200~[m]~\cite{oughton2019assessing} & 900~[m]~\cite{chiaraviglio2018not,patwary2020potential} \\[0.25em]
& & & \\[-0.75em]
& $\gamma_f$ & 3.19 \cite{sun2016investigation} & 3 \cite{sun2016investigation} \\[0.25em]
\rowcolor{Linen} \cellcolor{DarkLinen}& & & \\[-0.75em]
\rowcolor{Linen} \cellcolor{DarkLinen}& $\sigma^{\text{SHAD}}_f$ & 8.2~[dB] \cite{sun2016investigation} & 6.8~[dB] \cite{sun2016investigation} \\[0.25em]
& & \multicolumn{2}{c|}{} \\[-0.75em]
& $z_{(l,p,j,f)}$ & \multicolumn{2}{c|}{\begin{minipage}{5.5cm}Log normal random variable \cite{marzetta}.\end{minipage}} \\[0.25em]
\rowcolor{Linen} \cellcolor{DarkLinen}& & \multicolumn{2}{c|}{} \\[-0.75em]
\rowcolor{Linen} \cellcolor{DarkLinen}& $\beta_{(l,p,j,f)}$ & \multicolumn{2}{c|}{\begin{minipage}{5.5cm}Marzetta model \cite{marzetta} based on \ac{TMC} scenario, $\gamma_f$ and $z_{(l,p,j,f)}$.\end{minipage}}\\[0.25em]
& & \multicolumn{2}{c|}{}\\[-0.75em]
& $N^{\text{OFDM}}_f$ & \multicolumn{2}{c|}{14 \cite{etsitr}} \\
& & \multicolumn{2}{c|}{} \\[-0.75em]
\rowcolor{Linen} \cellcolor{DarkLinen} & & \multicolumn{2}{c|}{} \\[-0.75em]
\rowcolor{Linen} \cellcolor{DarkLinen}& $N^{\text{OFDM-PILOT}}_f$ & \multicolumn{2}{c|}{3 \cite{marzetta}} \\[0.25em]
& & \multicolumn{2}{c|}{} \\[-0.75em]
& $\tau^{\text{COHERENCE}}_f$ & \multicolumn{2}{c|}{500~[$\mu$s] \cite{marzetta}}  \\[0.2em]
\rowcolor{Linen} \cellcolor{DarkLinen} & & &  \\[-0.75em]
\rowcolor{Linen} \cellcolor{DarkLinen}& $\delta_f$ & 2.3~[$\mu$s] \cite{etsitr,5gnr} & 4.7~[$\mu$s] \cite{etsitr,5gnr} \\[0.25em]
& & & \\[-0.75em]
& $\Delta_f$ & 30~[kHz] \cite{etsitr} & 15~[kHz] \cite{etsitr} \\[0.25em]
\rowcolor{Linen} \cellcolor{DarkLinen}& & &  \\[-0.75em]
\rowcolor{Linen} \cellcolor{DarkLinen}& $B_f$ & 80~[MHz] \cite{mise}& 20~[MHz] \cite{mise} \\[0.25em]
& & \multicolumn{2}{c|}{}  \\[-0.75em]
\multirow{-15}{*}{\begin{sideways}\textbf{5G Performance}\end{sideways}} & $\epsilon^{\text{F-REUSE}}_f$ & \multicolumn{2}{c|}{1 ( {unity frequency reuse})}  \\[0.25em]
\rowcolor{Linen} \cellcolor{DarkLinen}& & &  \\[-0.75em]
\rowcolor{Linen} \cellcolor{DarkLinen}& $S^{\text{MIN}}_{f}$ & \begin{minipage}{2cm}Set to ensure 30~[Mbps] of min. throughput.\end{minipage} & - \\[0.25em]
\hline
& & &  \\[-0.75em]
& $C^{\text{SITE}}_{(l,f)}$ & 14852~[\euro] $\forall l \in \mathcal{L}$ \cite{oughton2019assessing} & 20101~[\euro] \cite{oughton2019assessing} $\forall l \in \mathcal{L}$  \\[0.25em]
\rowcolor{Linen} \cellcolor{DarkLinen}& & &  \\[-0.75em]
\rowcolor{Linen} \cellcolor{DarkLinen} & $C^{\text{EQUIP}}_{f}$ & 2791~[\euro] \cite{oughton2019assessing} & 45673~[\euro] \cite{oughton2019assessing} \\[0.25em]
& & &  \\[-0.75em]
\multirow{-4}{*}{\begin{sideways}\textbf{Costs}\end{sideways}} & $\alpha_{(l,f)}$ & \begin{minipage}{2cm}$[10^1-10^7]$~[\euro] $\forall l \in \mathcal{L}$\end{minipage} & \begin{minipage}{2cm}$[10^1-10^4]$~[\euro] $\forall l \in \mathcal{L}$\end{minipage} \\[-0.75em]
& & &  \\
\hline
\end{tabular}
\vspace{-4mm}
\end{table}

We then analyze the setting of the remaining frequency-dependent parameters that are required as input. To this aim, Tab.~\ref{tab:service_power_density_param} reports: \textit{i}) $I^{\text{FREQ}}_{(l,f)}$ parameter (obtained from Fig.~\ref{fig:torrino_map} by assuming $N^{\text{MAX}}=1$), \textit{ii}) distance-based parameters, \textit{iii}) 5G \ac{EMF} parameters, \textit{iv}) 5G performance parameters, and \textit{v}) 5G costs parameters. We now focus on the setting of the key parameters, while we refer the reader to the references reported in Tab.~\ref{tab:service_power_density_param} for more information about the setting of each single parameter. More in depth, we assume that the maximum power $O^{\text{MAX}}_{(l,f)}$ that is radiated by a \ac{gNB} operating over $f1$ is actually higher than the one radiated by a \ac{gNB} operating over $f2$. Although this setting may appear quite counter-intuitive at a first glance, since a micro \ac{gNB} is expected to radiate less power than a macro \ac{gNB}, we remind that $O^{\text{MAX}}_{(l,f)}$ refers to the maximum power, which can clearly differ w.r.t. the actual one that is received over the territory. When considering micro \acp{gNB}, $O^{\text{MAX}}_{(l,f)}$ is split across the radiating elements, and thus the actual power that is received over the territory is clearly lower compared to the maximum one. This effect is taken into account when setting the statistical scaling factor $R^{\text{STAT}}_{(l,f)}$. More in detail, we assume a strong statistical scaling factor that is applied to micro \acp{gNB} operating on $f1$, while no statistical scaling factor is applied to macro \acp{gNB} operating on $f2$. This choice is also motivated by the different goals of two \ac{gNB} types, i.e., maximizing throughput for $f1$ (and hence large spatial power variability) vs. ensuring coverage for $f2$ (and hence less spatial power variability). Focusing then on the exclusion zones, we assume again two distinct values for $f1$ and $f2$, which are set in accordance to the minimum distance guaranteeing a \ac{EMF} level below the limit of 6~[V/m] for a single \ac{gNB}.\footnote{The \ac{EMF} is also computed in this case by applying the point source model of \cite{itutk70}.} Eventually, $P^{\text{BASE}}_{(p,f)}$ is retrieved from real measurements over the real scenario, with the methodology described in Appendix~\ref{app:emf_measurements}.  

 {Focusing then on the 5G performance parameters, the maximum coverage distance $D^{\text{MAX}}_f$ is set to 200~[m] and to 900~[m] for $f1$ and $f2$, respectively (in accordance with} \cite{oughton2019assessing,chiaraviglio2018not,patwary2020potential} {). More in depth, the setting of $D^{\text{MAX}}_f$ can be based on e.g., minimum signal strength and/or prediction of radio resource availability and/or average load per gNB, depending on the operator's needs.}  {Eventually, the path loss exponent $\gamma_f$ is tuned in accordance with} \cite{sun2016investigation} {. Clearly, this parameter can be easily varied to match different propagation conditions, e.g., urban, suburban, indoor,
etc. Moreover,} we assume that the minimum \ac{SIR} is set in order to ensure a minimum pixel throughput of 30~[Mbps] for $f1$.  In addition, we do not constrain the \ac{SIR} over $f2$, since the goal of \acp{gNB} operating over this frequency is mainly to provide coverage. Therefore, even low throughput values may be admitted for $f2$.\footnote{We remind that, in any case, pixels beyond the maximum distance coverage $D_f^{\text{MAX}}$ from a given \ac{gNB} can not be served by the \ac{gNB}.} Although these throughput settings may appear relatively loose at a first glance, we will show that the actual throughput levels experienced over the served pixels are not negligible and in line with the 5G service requirements \cite{3gppservice}. 

 {In the following step, we concetrate on the setting of the $\alpha_{(l,f)}$ parameters. We remind that such parameters are provided by the operator as input to our framework. However, since a precise estimation of $\alpha_{(l,f)}$ is beyond our goals, in this work we have performed a sensitivity analysis over huge ranges of $\alpha_{(l,f)}$, in order to consider the following cases: \textit{i}) total costs dominate over coverage revenues, \textit{ii}) revenues and costs are balanced, and \textit{iii}) revenues are much higher than the costs. In this way, we are able to investigate the impact of $\alpha_{(l,f)}$ on the obtained planning on one side, and to provide an indications about the settings that ensure good coverage levels.} For the sake of simplicity, we impose the same weights applied for all the candidate $\acp{gNB}$ working at the same frequency.\footnote{In more complex scenarios, the values of $\alpha_{(l,f)}$ may be tuned for each location $l$, in order to prioritize locations that require huge amount of traffic (e.g., shopping malls, train stations, or airports) w.r.t. other ones. The evaluation of this aspect is left for future work.}  {Moreover, we impose a wider interval of $\alpha_{(l,f1)}$ values w.r.t. $\alpha_{(l,f2)}$ because in this way we test the case in which serving a pixel with a micro gNB has a huge gain w.r.t. to a coverage provided by a macro \ac{gNB}.}

 {Finally}, we provide the setting for the remaining parameters, namely $Z_0$ and $N^{\text{SER}}$. In particular, we set $Z_0=377$~[Ohm] (in accordance to \cite{itutk70}). In addition, we impose $N^{\text{SER}}=2$. In this way, we consider a conservative case in which each pixel is served by at most two \acp{gNB}.\footnote{The evaluation of $N^{\text{SER}}$ values greater than 2 is left for future work.}









\begin{table*}[t]
    \caption{Breakdown of the evaluation metrics.}
    \label{tab:metrics}
    \scriptsize
    \centering
    \begin{tabular}{|c|c|p{4.5cm}|}
\hline
\rowcolor{Coral}  & & \\[-0.75em]
\rowcolor{Coral} \textbf{Metric} & \textbf{Notation/Expression} & \textbf{Reference Equations} \\
\hline
\rowcolor{Linen} & & \\[-0.75em]
\rowcolor{Linen} Total Installation Costs & $C^{\text{TOT}}$ & Eq.~(\ref{tot_cost_bs_installed_computation}) (total costs computation). \\[0.25em]
\hline
& & \\[-0.75em]
Number of Installations & $N_{f1}=\sum_{l \in \mathcal{L}} y_{(l,f1)}$, $\quad N_{f2}=\sum_{l \in \mathcal{L}} y_{(l,f2)}$ & - \\[0.25em]
& & \\[-0.75em]
\hline
\rowcolor{Linen} & & \\[-0.75em]
\rowcolor{Linen} Served Pixels & $X^{\text{SERVED}}_{f1}=\sum_{p \in \mathcal{P}} \sum_{l \in \mathcal{L}} x_{(p,l,f1)}$, $\quad X^{\text{SERVED}}_{f2}=\sum_{p \in \mathcal{P}} \sum_{l \in \mathcal{L}} x_{(p,l,f2)}$ & - \\[0.25em]
\hline
& & \\[-0.75em]
Unserved pixels [\%] & $X^{\text{NOT-SERVED}}=100 \cdot \frac{\sum_{p \in \mathcal{P} : \{\sum_{l \in \mathcal{L}} \sum_{f \in \mathcal{F}} x_{(p,l,f)} == 0\}} \left(1 - \sum_{l \in \mathcal{L}}\sum_{f \in \mathcal{F}}  x_{(p,l,f)}\right)}{|\mathcal{P}|}$ & - \\[0.25em]
& & \\[-0.75em]
\hline
\rowcolor{Linen} & & \\[-0.75em]
\rowcolor{Linen} \multirow{3}{*}{Pixel Throughput} & \multirow{3}{*}{\underline{$T_{(p,f)}=\sum_{l \in \mathcal{L}} \frac{B_f \Gamma_f}{\epsilon^{\text{F-REUSE}}_f} \log_2\left(1 + S_{(p,l,f)}\right)\cdot x_{(p,l,f)}$, $\quad$ $T_p=\sum_{f \in \mathcal{F}} T_{(p,f)}$}} & Eq.~(\ref{eq:sir_not_linear}) (SIR computation on left hand side), Eq.~(\ref{eq:gamma_term})  ($\Gamma_f$ computation), Eq.~(\ref{eq:capacity}) (throughput computation).  \\[0.25em]
\hline
& & \\[-0.75em]
Average Pixel Throughput & \underline{$T^{\text{AVG}}_{f}=\frac{\sum_{p \in \mathcal{P}} T_{(p,f)}}{X^{\text{SERVED}}_f}$}, $\quad T^{\text{AVG}}=\frac{\sum_{p \in \mathcal{P}}T_p}{|\mathcal{P}|\cdot(1 - X^{\text{NOT-SERVED}}/100)}$ & See computation of $T_p$\underline{,  $T_{(p,f)}$} and $X^{\text{NOT-SERVED}}$. \\[0.25em]
\hline
\rowcolor{Linen} & & \\[-0.75em]
\rowcolor{Linen} \multirow{3}{*}{Pixel \ac{EMF}} & \multirow{3}{*}{$E_p=\sqrt{\sum_{f \in \mathcal{F}} \left[ P^{\text{BASE}}_{(p,f)} \cdot (1-w_p) + P^{\text{ADD-TS}}_{(p,f)}\right] \cdot Z_0}$} & Eq.~(\ref{eq:power_density_limit_res}) (total power density computation in the numerator on left-hand side), Eq.~(\ref{eq:tot_emf}) (electric field computation). \\[0.25em]
\hline
& & \\[-0.75em]
Average Pixel \ac{EMF} & $E^{\text{AVG}}=\sqrt{\frac{\sum_{p \in \mathcal{P}^{\text{RES}}}\sum_{f \in \mathcal{F}} \left[ P^{\text{BASE}}_{(p,f)} \cdot (1-w_p) + P^{\text{ADD-TS}}_{(p,f)}\right]}{|\mathcal{P^{\text{RES}}}|}\cdot Z_0}$ & See computation of $E_p$. \\[0.25em]
\hline
\end{tabular}
\end{table*}

\section{Results}
\label{sec:results}

We code \textsc{PLATEA} algorithm in MATLAB R2019b and we run it on a Dell PowerEdge R230 equipped with Intel Xeon E3-1230 v6 3.5~[GHz] processors and 64~[GB] of RAM. We then describe the following steps: \textit{i}) introduction of two reference algorithms as terms of comparison, \textit{ii}) definition of evaluation metrics, \textit{iii}) tuning of \textsc{PLATEA} parameters, \textit{iv}) comparison of \textsc{PLATEA} vs. the reference algorithms, \textit{v}) impact of planning parameters, \textit{vi}) impact of pre-5G exposure levels,  {and \textit{vii}) impact of frequency reuse scheme}. 

\textbf{Reference Algorithms.} 
 {Since the considered scenario is composed of thousands of pixels and dozens of candidate sites, it is not possible to solve the \textsc{OPTPLAN-5G} problem, which we remind belongs to the NP-Hard class. In order to introduce a term of comparison, we have implemented a simple algorithm based on an exhaustive search over all the possible micro and macro \ac{gNB} combinations. We refer the interested reader to Appendix~\ref{app:exhaustive_search} for more details. In brief, the brute-force solution is not practically feasible, mainly due to very large computational times (much higher than the relatively loose time constraints that may be allowed for solving planning problems). To overcome such issue, we have designed two sub-optimal (yet meaningful) algorithms in order to better position \textsc{PLATEA}. The two solutions, named \textsc{Evaluation Algorithm (EA)} and \textsc{Maximum Coverage Macro Algorithm (MCMA)} are detailed in Appendix~\ref{app:reference_algorithms}}. In brief, \textsc{EA} and \textsc{MCMA} evaluate the feasibility constraints of a random set of installed \acp{gNB}, without exploring the possible site  {combinations} (which are instead analyzed by \textsc{PLATEA}).  {In this way, we are able to compare \textsc{PLATEA} against two low-complexity solutions.} More in depth, \textsc{EA} explores a single possible deployment (which is generated from a fixed number of \acp{gNB}, passed as input to the algorithm). On the other hand, \textsc{MCMA} goes one step further, by selecting the set of macro \acp{gNB} operating on $f2$ maximizing the service coverage, given an integer number of $f1$ \acp{gNB} that have to be installed over the territory. 

\textbf{Evaluation Metrics.} We then formally introduce the metrics to evaluate the performance of \textsc{PLATEA}, \textsc{EA} and \textsc{MCMA}. To this aim, Tab.~\ref{tab:metrics} reports: \textit{i}) total installation costs 
$C^{\text{TOT}}$, \textit{ii}) number $N_{f1}$ ($N_{f2}$) of $f1$ ($f2$) \acp{gNB} installations, \textit{iii}) number of pixels $X^{\text{SERVED}}_{f1}$ ($X^{\text{SERVED}}_{f2}$) served by $f1$ ($f2$) \acp{gNB}, \textit{iv}) percentage of unserved pixels $X^{\text{NOT-SERVED}}$, \textit{v})  {per-frequency $T_{(p,f)}$ and total $T_p$} pixel throughput, \textit{vi}) average pixel throughput  {over each frequency $T^{\text{AVG}}_f$ and over all frequencies $T^{\text{AVG}}$, both of them} computed over the pixels that are served by \acp{gNB}, \textit{vii})  pixel \ac{EMF} $E_p$, \textit{viii}) average pixel \ac{EMF} $E^{\text{AVG}}$, computed over the whole set of pixels in the scenario. For each metric, the table reports the metric name, the mathematical notation, and the reference equation(s) used to compute the metric.





\textbf{Tuning of \textsc{PLATEA} Parameters.} We initially concentrate on the impact of the $\alpha_{(l,f)}$ terms that  {control the behavior of} the objective function in \textsc{PLATEA}.  {As already mentioned Sec.~\ref{sec:scenario}}, we explore a wide range of values for both $\alpha_{(l,f1)}$ and $\alpha_{(l,f2)}$ {, in order to test different conditions, e.g., costs minimization, balance between costs and revenues, revenues maximization}. In addition, we initially assume the dismission of legacy pre-5G Base Stations that radiate over \ac{TMC}, in order to evaluate the performance of \textsc{PLATEA} in a clean-slate condition. Therefore, we set $P^{\text{BASE}}_{(p,f)}=0 \quad \forall f \in \mathcal{F}, p \in \mathcal{P}$. We then run \textsc{PLATEA} over the selected ranges of $\alpha_{(l,f1)}$ and $\alpha_{(l,f2)}$, by picking values on logarithmic scales. Fig.~\ref{fig:installation_var} highlights the obtained results in terms of: \textit{i}) total installation costs $C^{\text{TOT}}$ (Fig.~\ref{fig:costs_performance}), \textit{ii}) number $N_{f1}$ of $f1$ \acp{gNB} (Fig.~\ref{fig:n_micro}), \textit{iii}) number $N_{f2}$ of $f2$ \acp{gNB} (Fig.~\ref{fig:n_macro}), \textit{iv}) percentage of not served pixels $X^{\text{NOT-SERVED}}$ (Fig.~\ref{fig:not_served}), \textit{v}) average pixel throughput $T^{\text{AVG}}$ (Fig~\ref{fig:capacity_avg}) and \textit{vi}) average electric field $E^{\text{AVG}}$ (Fig.~\ref{fig:emf_heatmap}). 

\begin{figure}[t]
\centering
\subfigure[{$C^{\text{TOT}}$~[M\euro]}]
{
    \includegraphics[width=42mm]{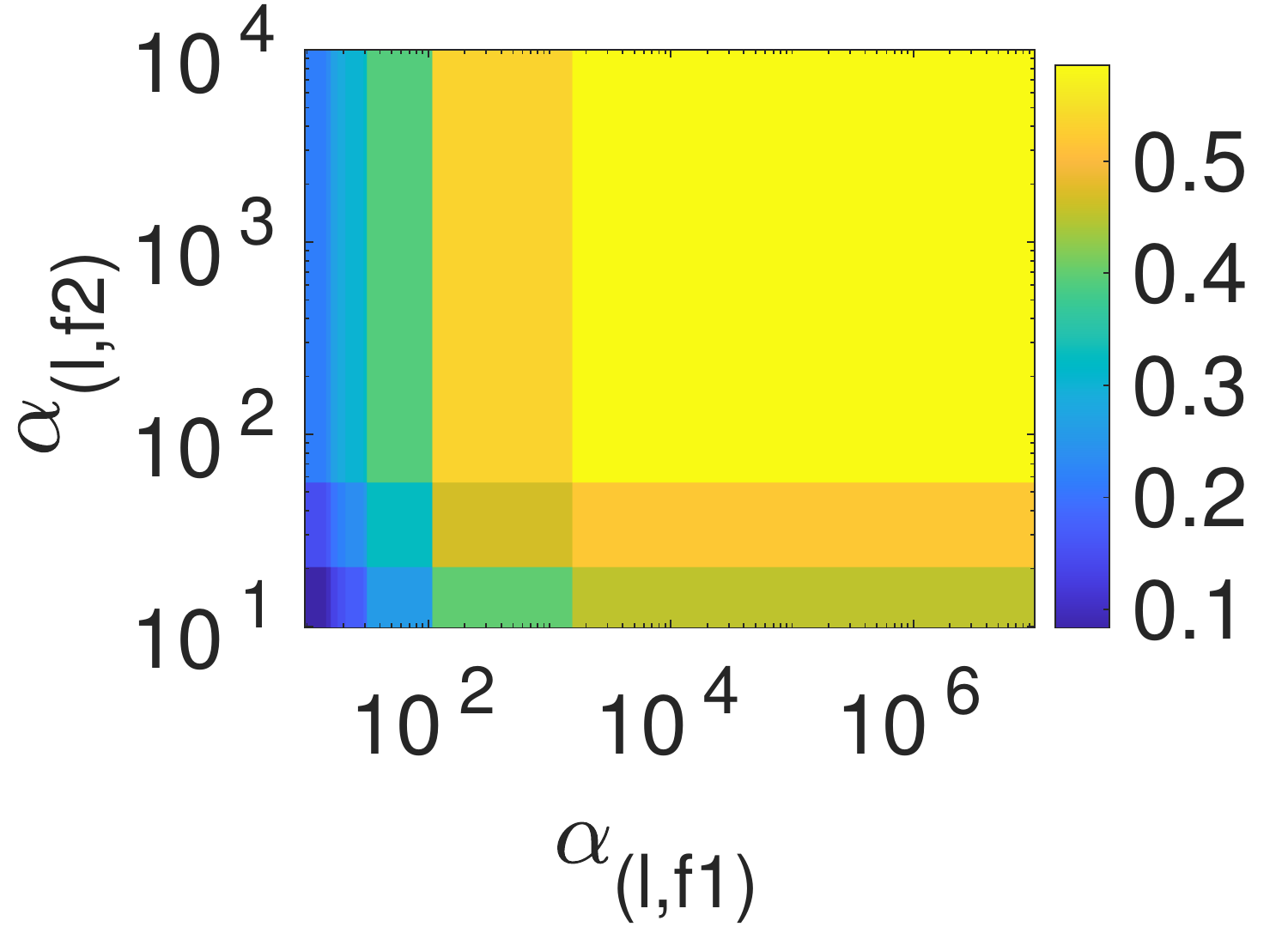}
    \label{fig:costs_performance}
}
\subfigure[$N_{f1}$]
{
	\includegraphics[width=40mm]{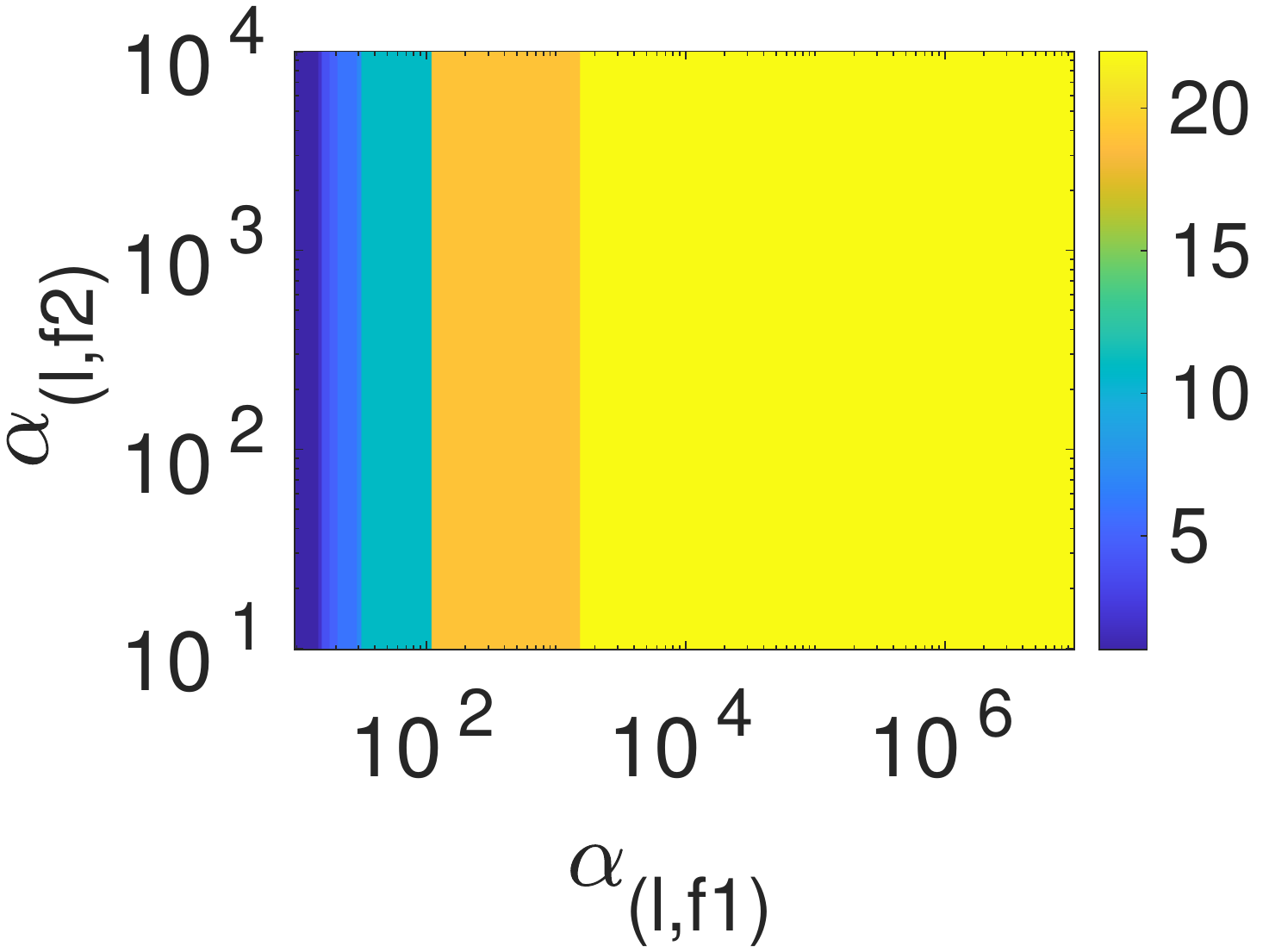}
	\label{fig:n_micro}
}

\subfigure[$N_{f2}$]
{
    \includegraphics[width=40mm]{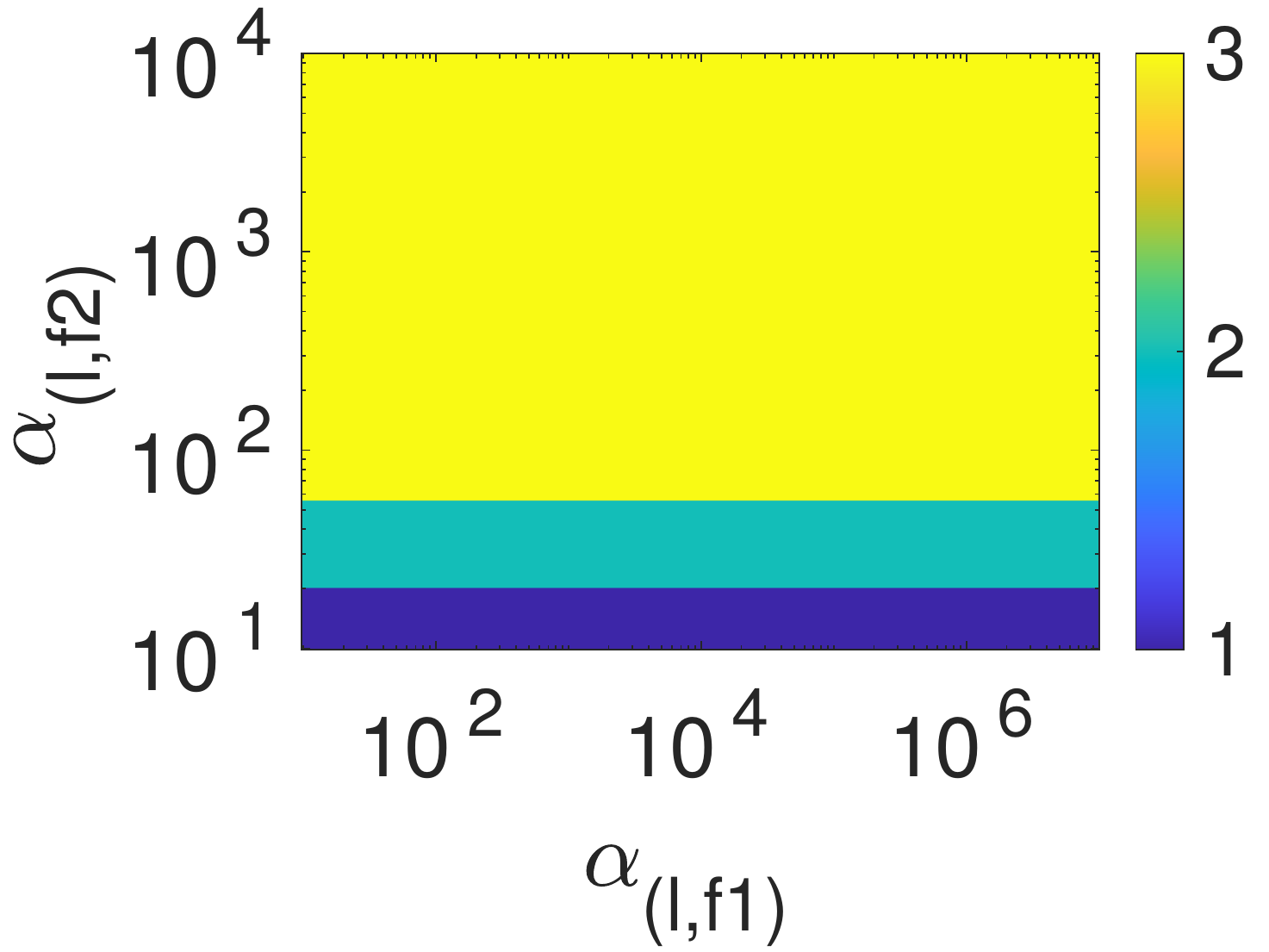}
    \label{fig:n_macro}
}
\subfigure[{$X^{\text{NOT-SERVED}}$~[\%]}]
{
	\includegraphics[width=42mm]{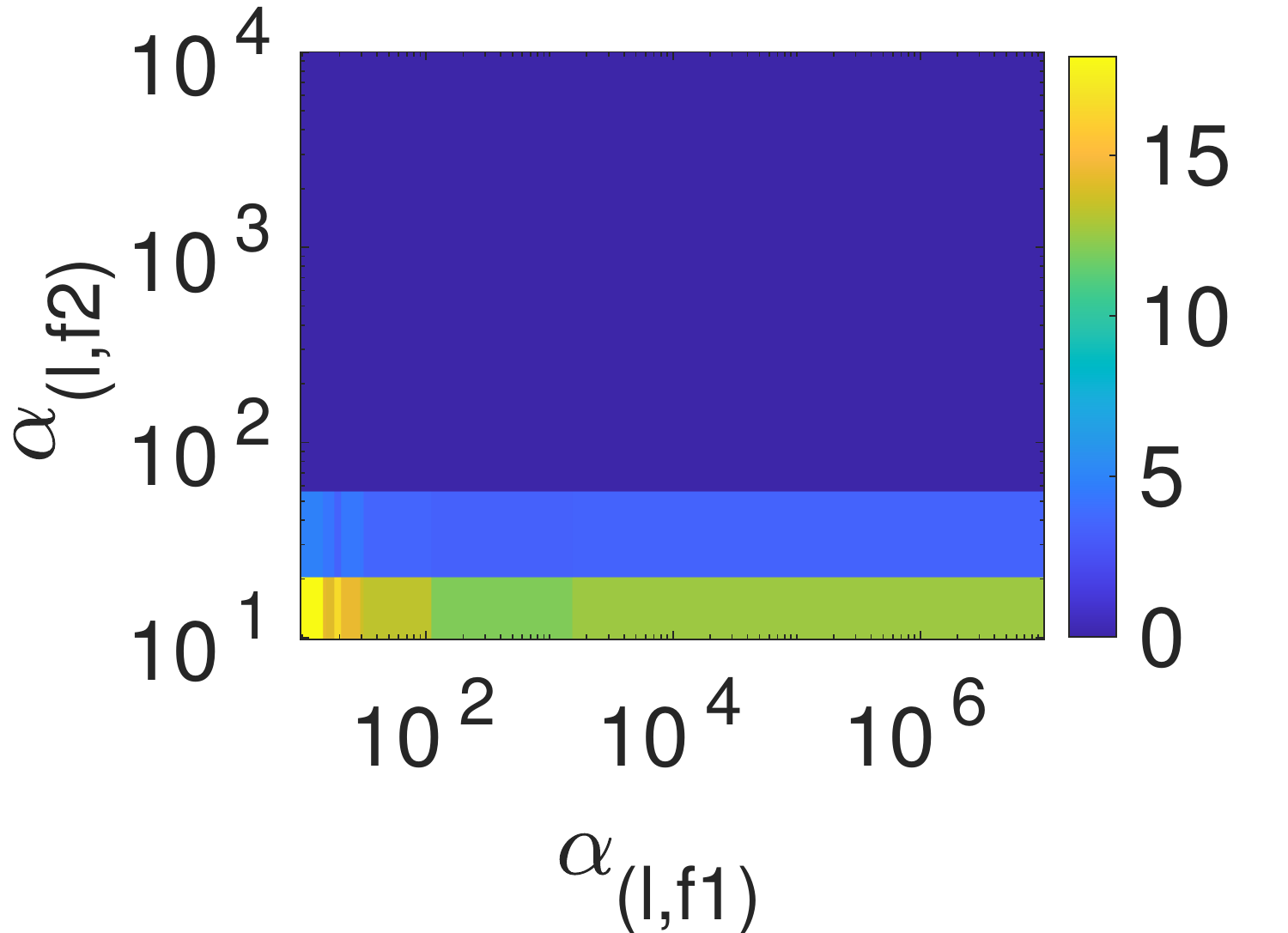}
	\label{fig:not_served}
}

\subfigure[{$T^{\text{AVG}}$~[Gbps]}]
{
	\includegraphics[width=42mm]{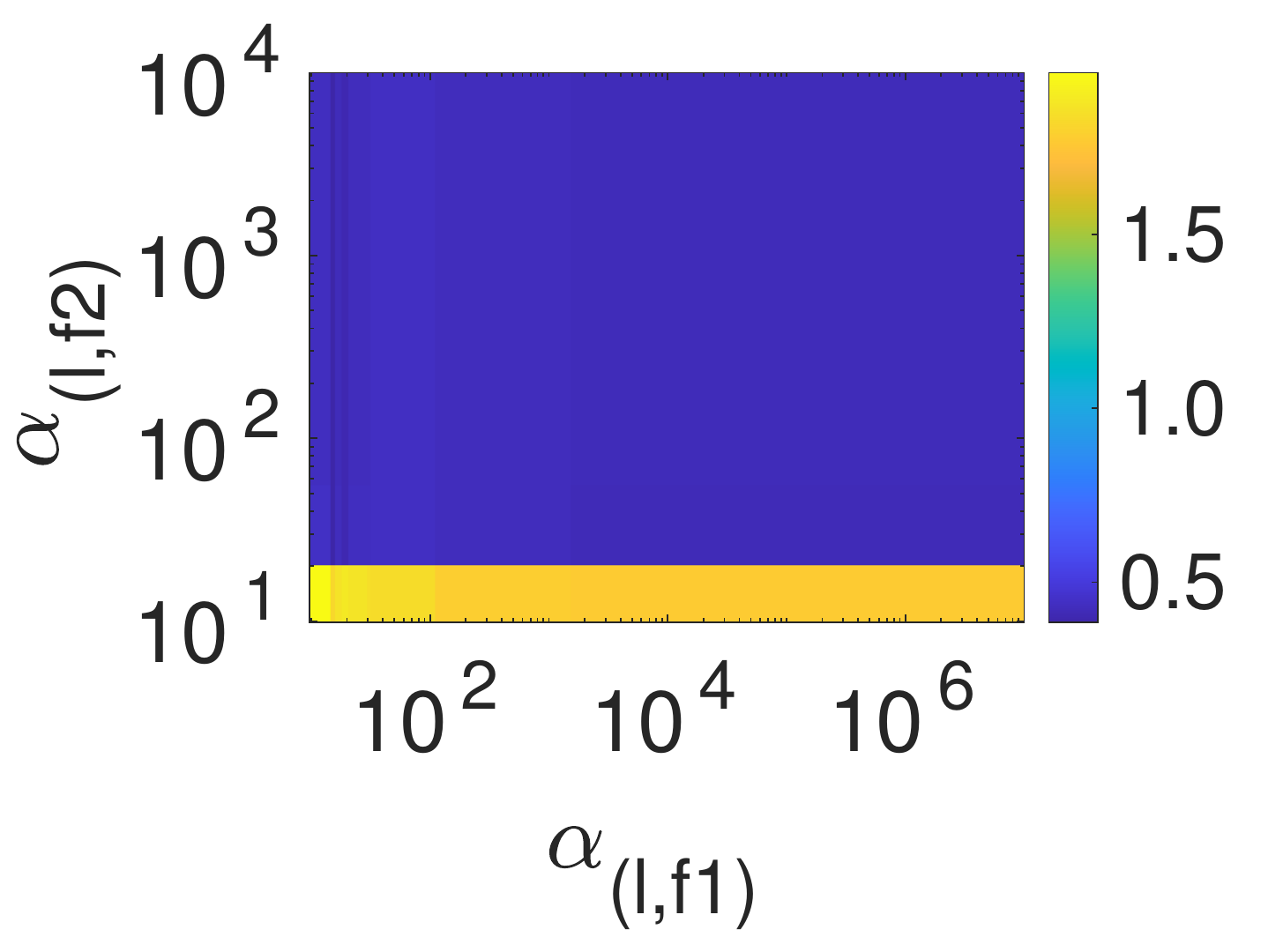}
	\label{fig:capacity_avg}
}
\subfigure[{$E^{\text{AVG}}$~[V/m]}]
{
	\includegraphics[width=42mm]{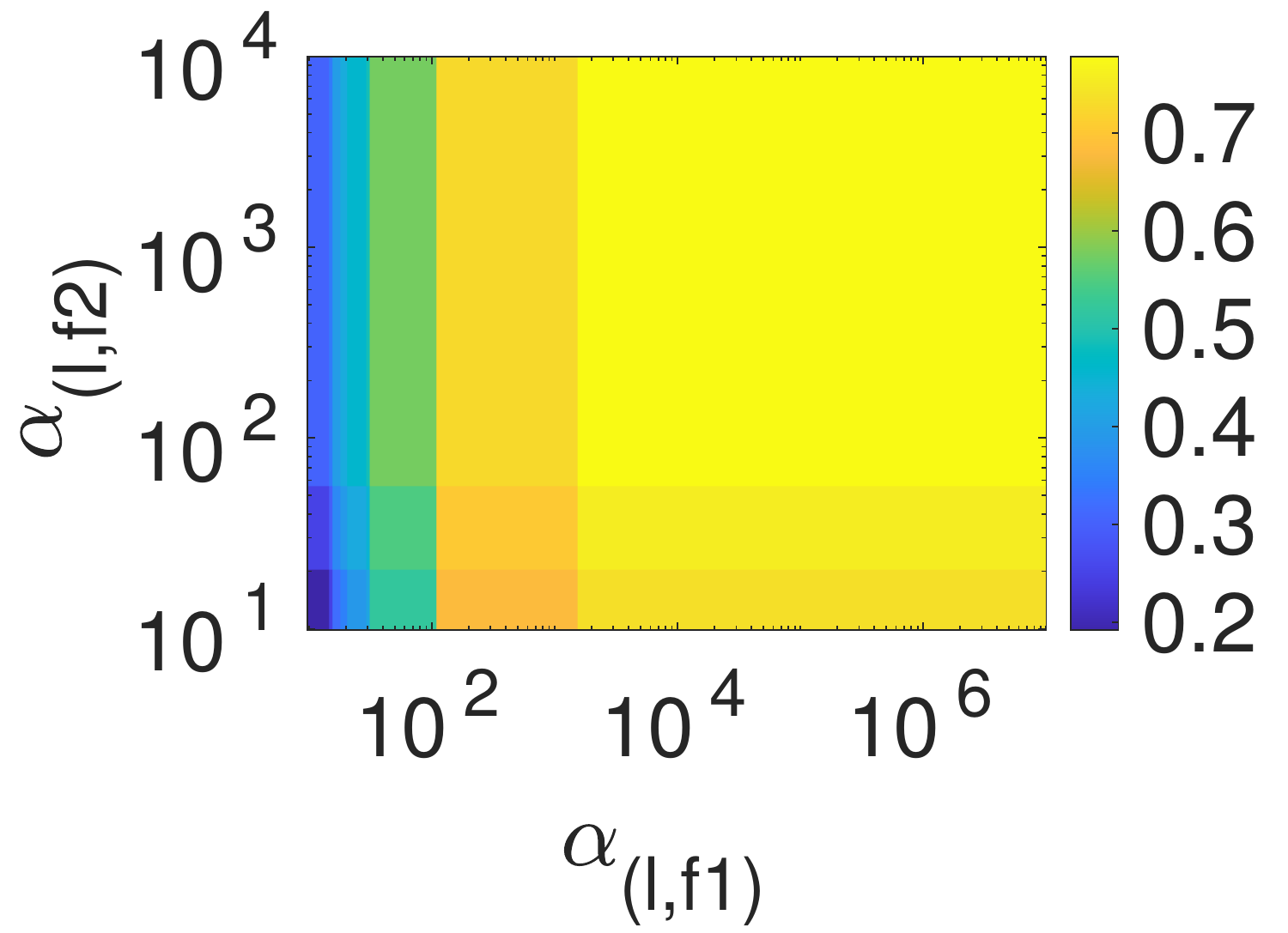}
	\label{fig:emf_heatmap}
}
\caption{Impact of $\alpha_{(l,f)}$ variation on: \textit{i}) total installation costs $C^{\text{TOT}}$, \textit{ii}) number $N_{f1}$ of $f1$ \acp{gNB}, \textit{iii}) number $N_{f2}$ of $f2$ \acp{gNB}, \textit{iv}) percentage of not served pixels $X^{\text{NOT-SERVED}}$, \textit{v}) average pixel throughput $T^{\text{AVG}}$, \textit{vi}) average electric field $E^{\text{AVG}}$.}
\label{fig:installation_var}
\vspace{-3mm}
\end{figure}

Several considerations hold by analyzing in detail Fig.~\ref{fig:installation_var}. First, $C^{\text{TOT}}$ is proportional to $\alpha_{(l,f1)}$ and $\alpha_{(l,f2)}$ (Fig.~\ref{fig:costs_performance}), due to the fact that the weights play a major role in determining the objective function, e.g., cost minimization, service maximization or a mixture between them. Clearly, $\alpha_{(l,f1)}$ ($\alpha_{(l,f2)}$) only affects $N_{f1}$ ($N_{f2}$), as shown in Fig.~\ref{fig:n_micro} (Fig.~\ref{fig:n_macro}). In addition, $X^{\text{NOT-SERVED}}$ is inversely proportional to $\alpha_{(l,f2)}$ (Fig.~\ref{fig:not_served}). For example, when $\alpha_{(l,f2)} \approx 10$, more than 10\% of pixels are not served by any \ac{gNB}. This is due to the fact that the number of $f2$ \acp{gNB} that are installed passes from 3 to 1 (see Fig.~\ref{fig:n_macro}), thus creating coverage holes. On the other hand, the variation of $\alpha_{(l,f1)}$ has a clear impact on $X^{\text{NOT-SERVED}}$ only when $\alpha_{(l,f2)} \approx 10$, i.e., when $f2$ \acp{gNB} are not able to cover the whole territory. Moreover, Fig~\ref{fig:capacity_avg} reveals that $T^{\text{AVG}}$ has a complex trend, which results from the combination of: \textit{i}) the coverage provided by $f1$ and $f2$ \acp{gNB} installed over the territory, \textit{ii}) the amount of interference, which tends to be impacted by the number of neighboring \acp{gNB} operating at the same frequency, and \textit{iii}) the percentage of served pixels, since $T^{\text{AVG}}$ is computed over the pixels that receive service coverage from at least one \ac{gNB}. As a consequence, $T^{\text{AVG}}$ is not always proportional or inversely proportional with $\alpha_{(l,f)}$. For example, the maximum value of $T^{\text{AVG}}$ is achieved when $\alpha_{(l,f2)}=10$, which however leads to a huge number of unserved pixels (i.e., more than 10\%). Finally, $E^{\text{AVG}}$ is proportional to $\alpha_{(l,f)}$, due to the variation in the number of radiating sources that contribute to the \ac{EMF} exposure. However, we point out that the average \ac{EMF} level is almost one order of magnitude lower than the 6~[V/m] restrictive limit.

\begin{figure}[t]
\centering
\subfigure[Map of installed \acp{gNB} (orange pins: $f1$ \acp{gNB}, blue pins: $f2$ \acp{gNB})]
{
    \includegraphics[width=6cm]{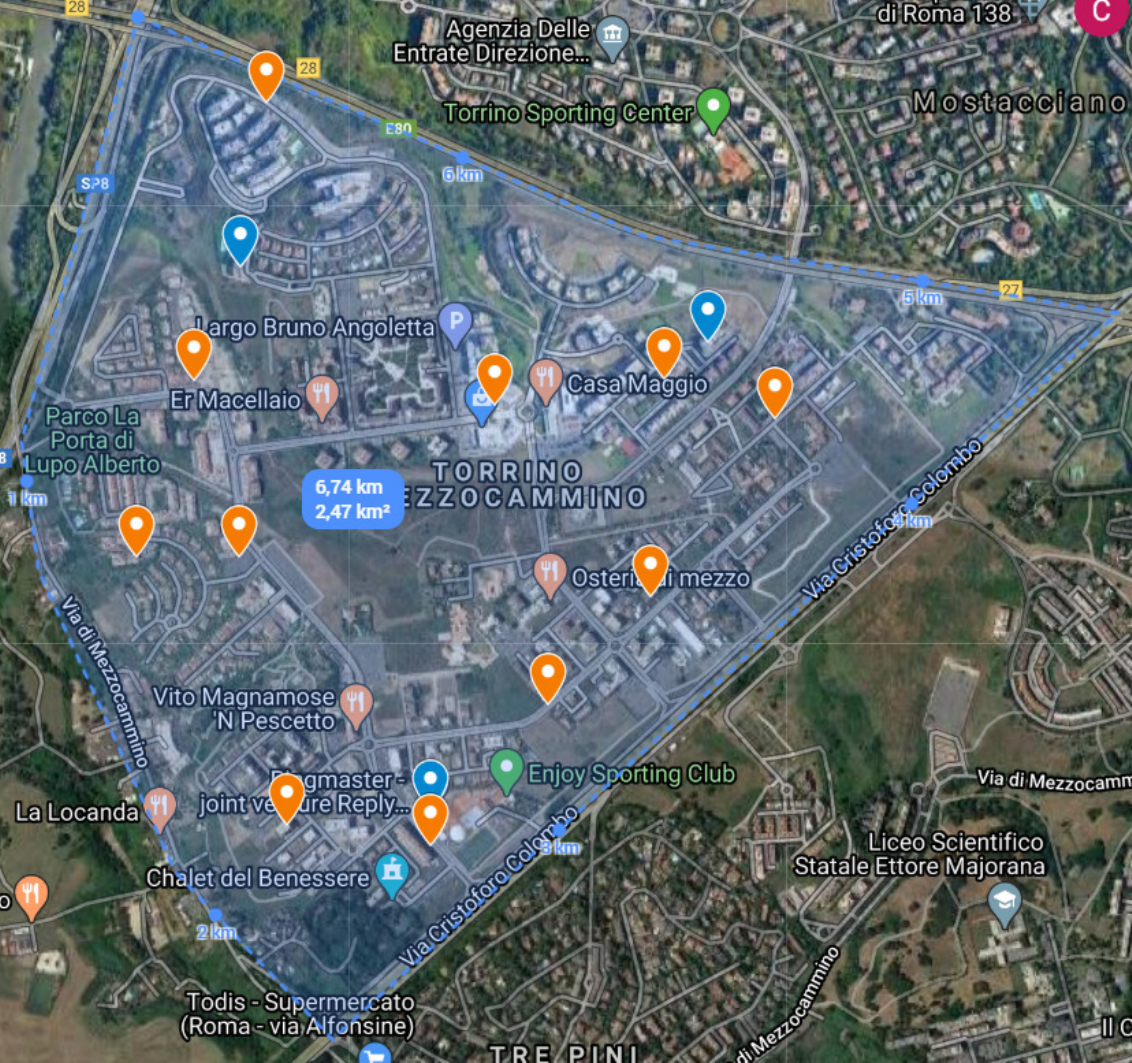}
    \label{fig:installed_sites}
}
\subfigure[{Electric field levels $E_{p}$~[V/m]}]
{
	\includegraphics[width=8cm]{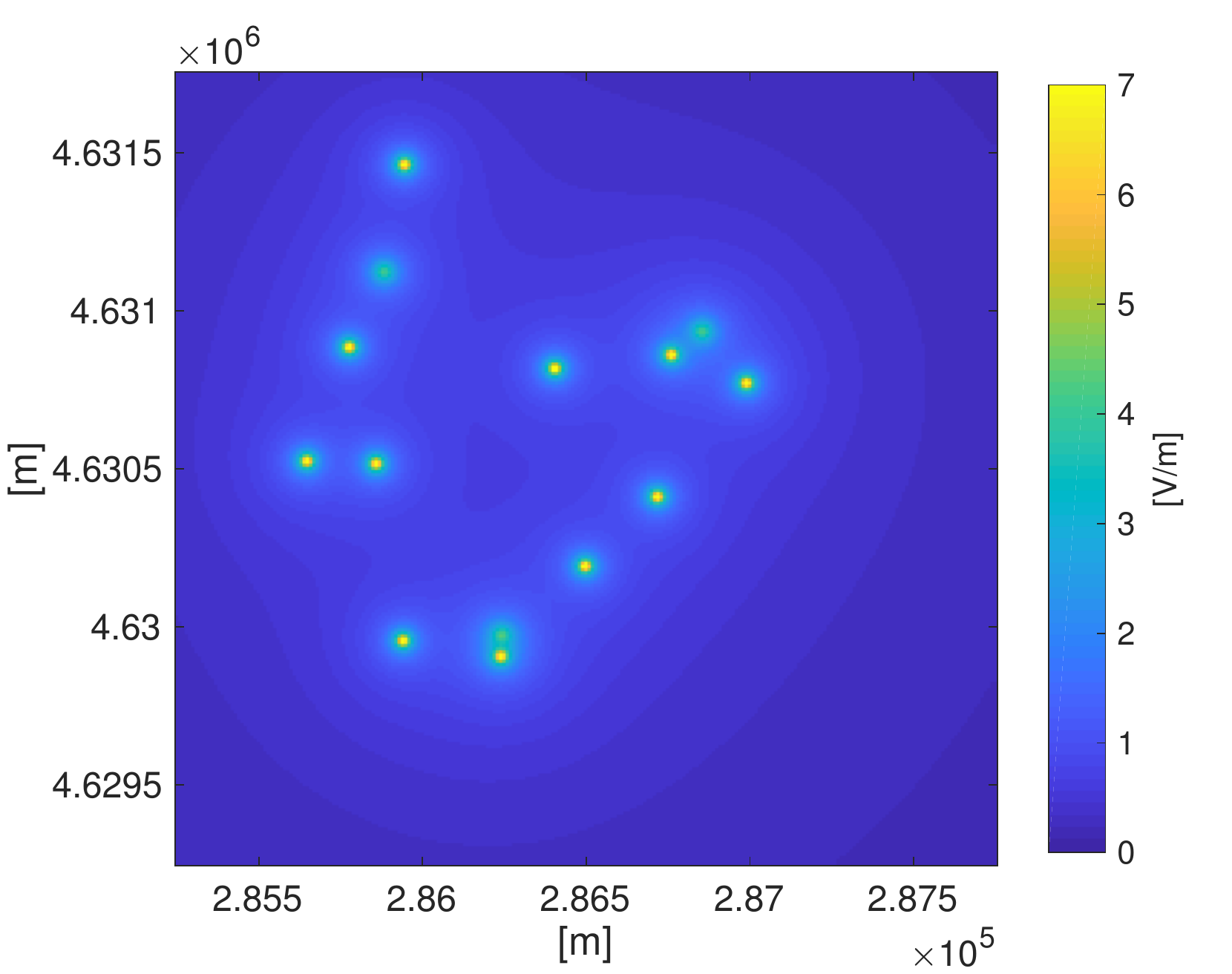}
	\label{fig:emission_assessment}
}
\caption{Visualization of the best scenario retrieved by \textsc{PLATEA} with $\alpha_{(l,f1)}=50$~[\euro] and $\alpha_{(l,f2)}=500$~[\euro].}
\label{fig:best_scenario}
\vspace{-3mm}
\end{figure}

Based on the above considerations, we select $\alpha_{(l,f1)}=50$~[\euro] and $\alpha_{(l,f2)}=500$~[\euro] henceforth. In this way, we balance between: i) increasing $T^{\text{AVG}}$, ii) reducing $C^{\text{TOT}}$, iii) minimizing $X^{\text{NOT-SERVED}}$, iv) reducing $E^{\text{AVG}}$. To give more insights, Fig.~\ref{fig:best_scenario} shows a run of the planning selected by \textsc{PLATEA} with the aforementioned setting. Interestingly, only a subset of the candidate \acp{gNB}, i.e., 11 $f1$ \acp{gNB} and 3 $f2$ \acp{gNB},  {are} deployed over the \ac{TMC} scenario (Fig.~\ref{fig:installed_sites}). On the other hand, the resulting \ac{EMF} levels are always pretty low (see Fig.~\ref{fig:emission_assessment}), with an electric field close to the 6~[V/m] limit only in proximity to the $f1$ \acp{gNB}.

\textbf{Algorithms Comparison.} In the following, we compare the performance of \textsc{PLATEA} against \textsc{EA} and \textsc{MCMA}. Unless otherwise specified, we compute each metric by averaging the results over 10 independent runs. Focusing on the number of $f1$ and $f2$ \acp{gNB} selected by \textsc{PLATEA}, we have found that our solution  requires on average $N_{f1}=10.7$ and $N_{f2}=3$, respectively. Consequently, we have passed to \textsc{EA} and \textsc{MCMA} a number of $f1$ \acp{gNB} equal to 11. In addition, \textsc{EA} requires the number of $f2$ \ac{gNB}, which is set to 3. Tab.~\ref{tab:compare_results_costs} reports the performance of the algorithms over the different metrics. More in detail, the total installation costs $C^{\text{TOT}}$ of \textsc{PLATEA} and \textsc{EA} are clearly lower than the ones of \textsc{MCMA}. Clearly, since \textsc{EA} requires as input the same (integer) number of $f1$ and $f2$ \acp{gNB} of \textsc{PLATEA}, it is natural that the two solutions achieve almost the same $C^{\text{TOT}}$. On the other hand, \textsc{MCMA} requires a larger number of $f2$ \acp{gNB}, in order to ensure full coverage. Focusing then on the number of pixels served by $f1$ and $f2$ \acp{gNB}, \textsc{PLATEA} operates a wiser choice compared to \textsc{EA} and \textsc{MCMA}, with several pixels that are served by $f1$ \acp{gNB}. Clearly, \textsc{MCMA} guarantees full coverage of the territory, $X^{\text{NOT-SERVED}}=0$\%. On the other hand,  {6\%} of pixels are not served with \textsc{EA}. Eventually, \textsc{PLATEA} ensures service coverage for  {99.97\%} of pixels. Moreover, \textsc{PLATEA} achieves a clearly higher throughput $T^{\text{AVG}}$ compared to \textsc{EA} and \textsc{MCMA}. In particular, the throughput difference of \textsc{PLATEA} w.r.t. \textsc{EA} and \textsc{MCMA} is huge, i.e., more than  {90}~[Mbps] on average.  {In addition, we can note that $T^{\text{AVG}}_{f}$ obtained by \textsc{PLATEA} is consistently higher than \textsc{EA} and \textsc{MCMA} over both frequencies.} Finally, the \ac{EMF} levels introduced by \textsc{PLATEA} and \textsc{EA} are clearly lower than \textsc{MCMA}.  In conclusion, \textsc{PLATEA} outperforms both \textsc{MCMA} and \textsc{EA} when the different metrics are jointly considered. We refer the interested reader to Appendix~\ref{app:add_results} for further comparisons between \textsc{PLATEA} and the reference algorithms. In the following, we will analyze in more detail the impact of the planning parameters on the \textsc{PLATEA} performance.

\begin{table}[t]
\centering
\caption{Comparison of \textsc{PLATEA} vs. reference algorithms \textsc{EA} and \textsc{MCMA}.}
\label{tab:compare_results_costs}
\begin{tabular}{|c|c|c|c|}
\hline
\rowcolor{Coral} \textbf{Metric} & \textbf{EA} & \textbf{MCMA} & \textbf{PLATEA}  \\
\hline
& & &\\[-0.95em]
         $C^{\text{TOT}}$~[k\euro]  & 391.4 & \underline{555.8} & 386.1 \\
\rowcolor{Linen}         $N_{f1}$ & 11 & 11 & 10.7 \\
         $N_{f2}$ & 3 & \underline{5.5} & 3 \\[0.05em]
\rowcolor{Linen} & & &\\[-0.95em]
\rowcolor{Linen}         $X^{\text{SERVED}}_{f1}$ & \underline{7960}  & \underline{7960}  & \underline{9103} \\[0.05em]
& & &\\[-0.95em]
         $X^{\text{SERVED}}_{f2}$  & \underline{14897} & \underline{16357}  & \underline{15307} \\[0.05em]
\rowcolor{Linen} & & &\\[-0.95em]
\rowcolor{Linen}         $X^{\text{NOT-SERVED}}$~[\%]  & \underline{6} & 0 & \underline{0.03} \\
 & & &\\[-0.95em]
         $T^{\text{AVG}}$~[Mbps] & \underline{337.1} & \underline{319.3}  & \underline{428.3} \\
 & & &\\[-0.95em]
\rowcolor{Linen}         $T^{\text{AVG}}_{f1}$~[Mbps] & \underline{356.6} & \underline{356.6}  & \underline{391.5}  \\
 & & &\\[-0.95em]
         $T^{\text{AVG}}_{f2}$~[Mbps] & \underline{199.3} & \underline{166.4}  & \underline{245.2}  \\
\rowcolor{Linen}         $E^{\text{AVG}}$~[V/m]  & 0.57 & \underline{0.63} & 0.57\\
\hline
\end{tabular}
\end{table}

\textbf{Impact of planning parameters.} We then focus on the impact of the planning parameters, namely: \textit{i}) the scaling parameters $R^{\text{TIME}}_{(l,f)}$ and $R^{\text{STAT}}_{(l,f)}$, which affect the \ac{EIRP} and consequently the \ac{EMF} exposure generated by \acp{gNB} and \textit{ii}) the minimum distance from sensitive places $D^{\text{MIN}}$, which influences the subset of sites that can host \acp{gNB}. Focusing on \textit{i}) we perform a sensitivity analysis by running \textsc{PLATEA} over a wide range of $R^{\text{TIME}}_{(l,f)}$ and $R^{\text{STAT}}_{(l,f)}$ values. For the sake of simplicity, we impose $R^{\text{TIME}}_{(l,f)} \in [0.1-0.6] \quad \forall f \in \mathcal{F}, l \in \mathcal{L}$. On the other hand, we set $R^{\text{TIME}}_{(l,f1)} \in [0.1-0.6] \quad \forall l \in \mathcal{L}$ and $R^{\text{STAT}}_{(l,f2)}=1 \quad \forall l \in \mathcal{L}$. In this way, $f1$ \acp{gNB} are subject to temporal and statistical scaling factors, while $f2$ \ac{gNB} are affected only by temporal scaling factors.

Fig.~\ref{fig:scaling} reports the obtained results in terms of: \textit{i}) average \ac{EMF} $E^{\text{AVG}}$, \textit{ii}) number $N_{{f1}}$ of $f1$ \acp{gNB}, \textit{iii}) number $N_{{f2}}$ of $f2$ \acp{gNB}, \textit{iv}) average throughput $T^{\text{AVG}}$, \textit{v}) percentage of not served pixels $X^{\text{NOT-SERVED}}$, \textit{vi}) number of pixels served by $f2$ \acp{gNB} $X^{\text{SERVED}}_{f2}$. Interestingly, the choice of $R^{\text{TIME}}_{(l,f)}$ and $R^{\text{STAT}}_{(l,f)}$ has a huge impact on the obtained planning. In particular, when $R^{\text{TIME}}_{(l,f)}$ and $R^{\text{STAT}}_{(l,f)}$ are close to 0.1, the average \ac{EMF} exposure is very low (i.e., lower than 0.4~[V/m]), as shown at the bottom left corner of Fig.~\ref{fig:scaling_emf}. In this region, \textsc{PLATEA} installs more than 10 $f1$ \ac{gNB} (Fig.~\ref{fig:scaling_n_micro}) and 3 $f2$ \acp{gNB} (Fig.~\ref{fig:scaling_n_macro}). In addition, a large throughput is achieved (Fig.~\ref{fig:scaling_throughput}) and (almost) all the pixels are served (Fig.~\ref{fig:scaling_unserved}).  On the other hand, when $R^{\text{TIME}}_{(l,f)}$ and $R^{\text{STAT}}_{(l,f)}$ are increased, the average \ac{EMF} tends to increase and therefore it is challenging to ensure the strict \ac{EMF} constraints in the proximity of the installed \acp{gNB}. Therefore, the number of installed \acp{gNB} is reduced, the throughput is decreased and the percentage of not served pixels abruptly increases. Eventually, for large values of $R^{\text{TIME}}_{(l,f)}$ and $R^{\text{STAT}}_{(l,f)}$ (top right corner of subfigures), it is not possible to install any \ac{gNB} and therefore all the pixels are unserved. In addition, we can note that a frontier region emerges for intermediate values of the scaling parameters. Interestingly, for most of $R^{\text{TIME}}_{(l,f)}$ and $R^{\text{STAT}}_{(l,f)}$ combinations laying on the frontier, a huge amount of pixels is served by $f2$ \acp{gNB} (Fig.~\ref{fig:scaling_served}). 

\begin{figure}[t]
\centering
\subfigure[{$E^{\text{AVG}}$~[V/m]}]
{
    \includegraphics[width=42mm]{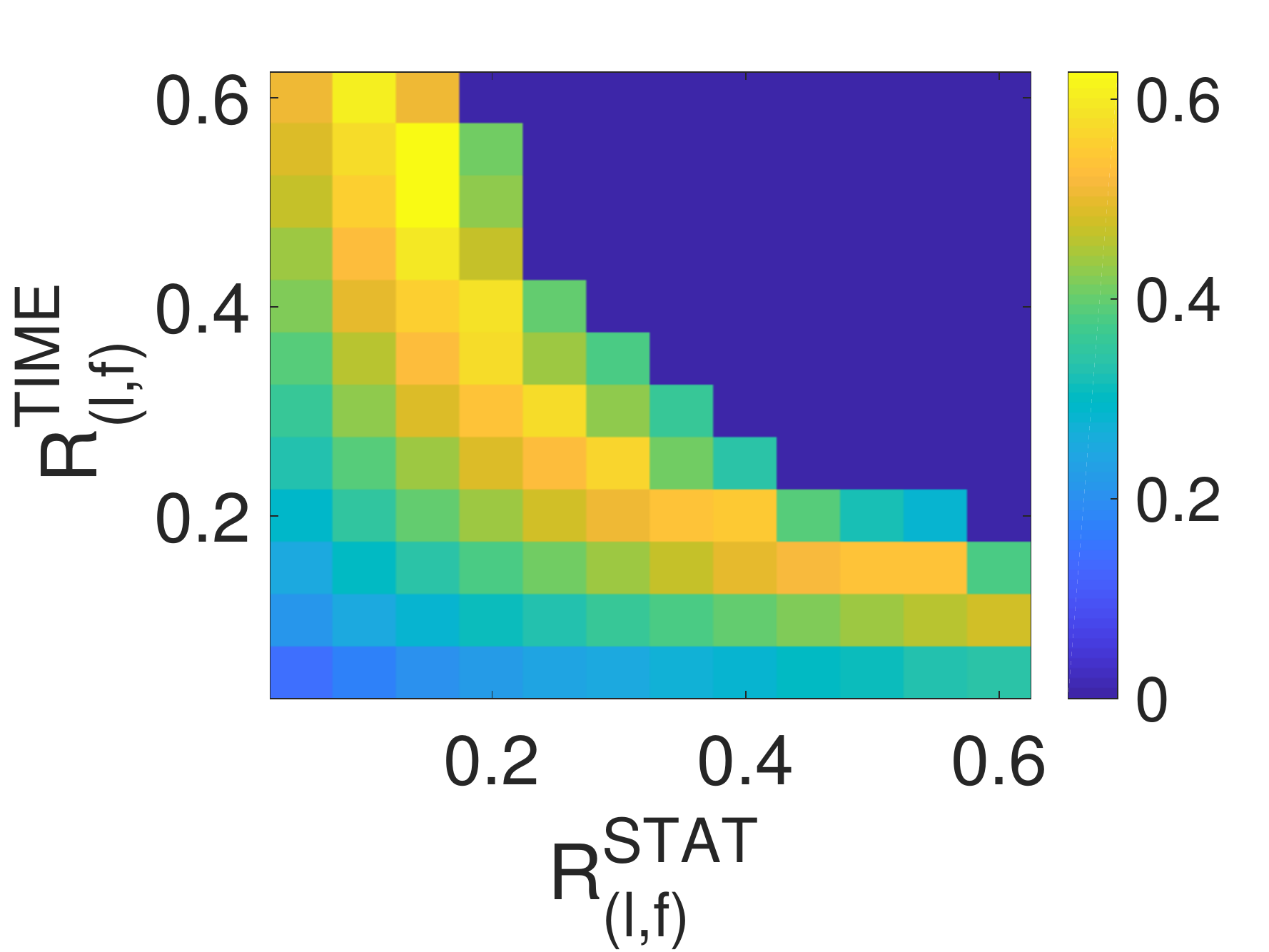}
    \label{fig:scaling_emf}
}
\subfigure[$N_{{f1}}$]
{
    \includegraphics[width=42mm]{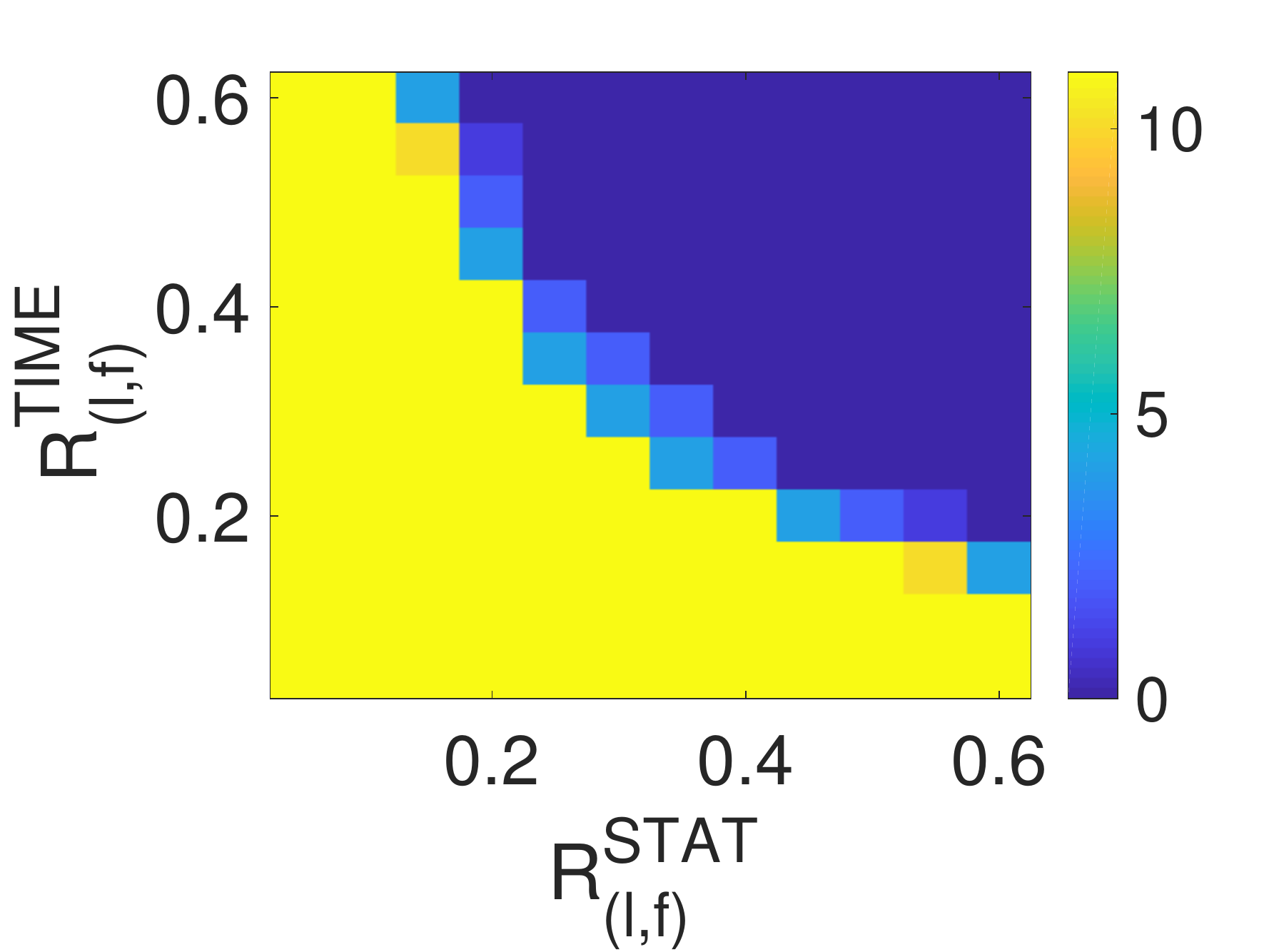}
    \label{fig:scaling_n_micro}
}

\subfigure[$N_{{f2}}$]
{
    \includegraphics[width=42mm]{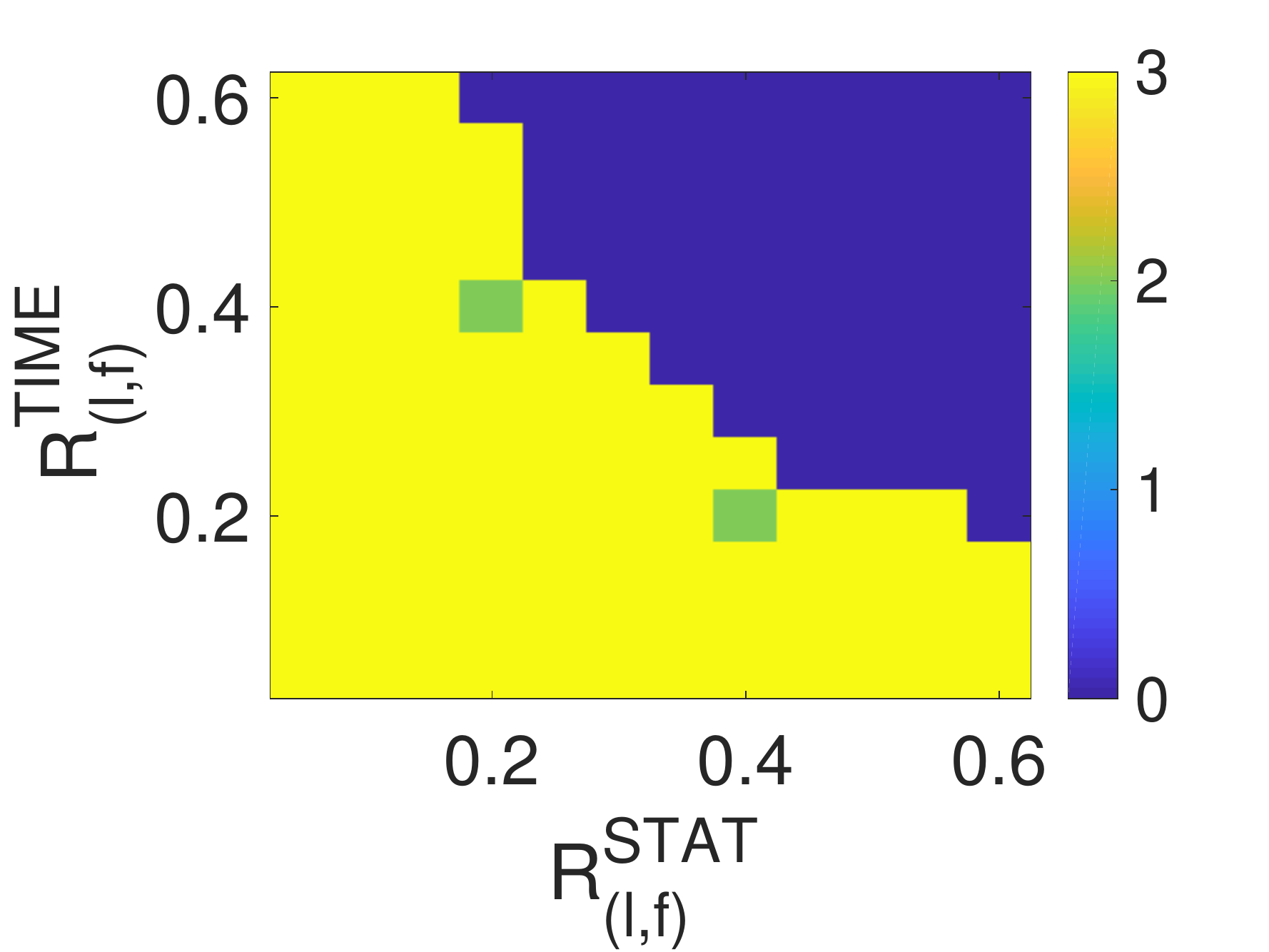}
    \label{fig:scaling_n_macro}
}
\subfigure[{$T^{\text{AVG}}$~[\underline{Mbps}]}]
{
    \includegraphics[width=42mm]{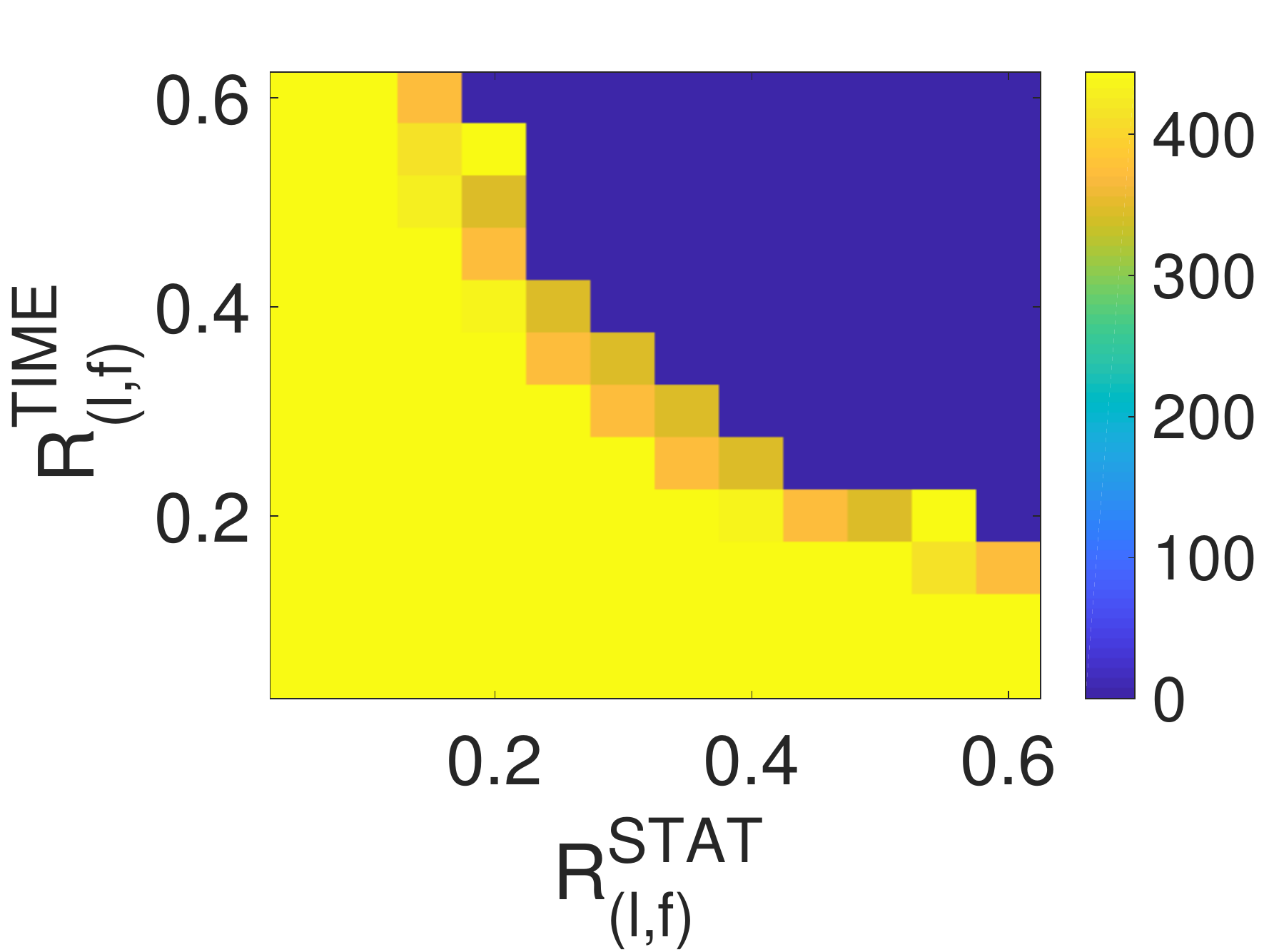}
    \label{fig:scaling_throughput}
}

\subfigure[{$X^{\text{NOT-SERVED}}$~[\%]}]
{
    \includegraphics[width=42mm]{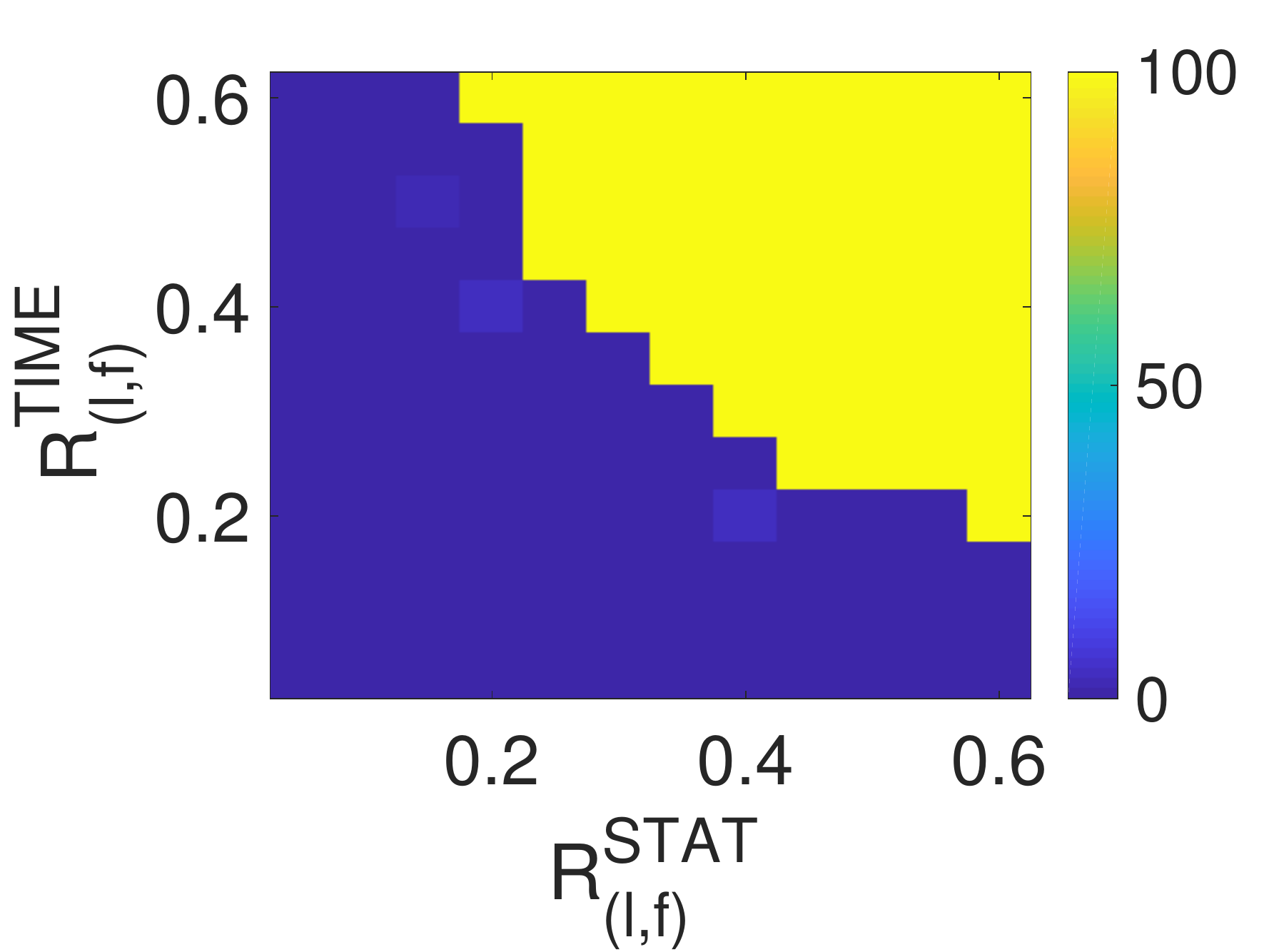}
    \label{fig:scaling_unserved}
}
\subfigure[$X^{\text{SERVED}}_{f2}$]
{
    \includegraphics[width=42mm]{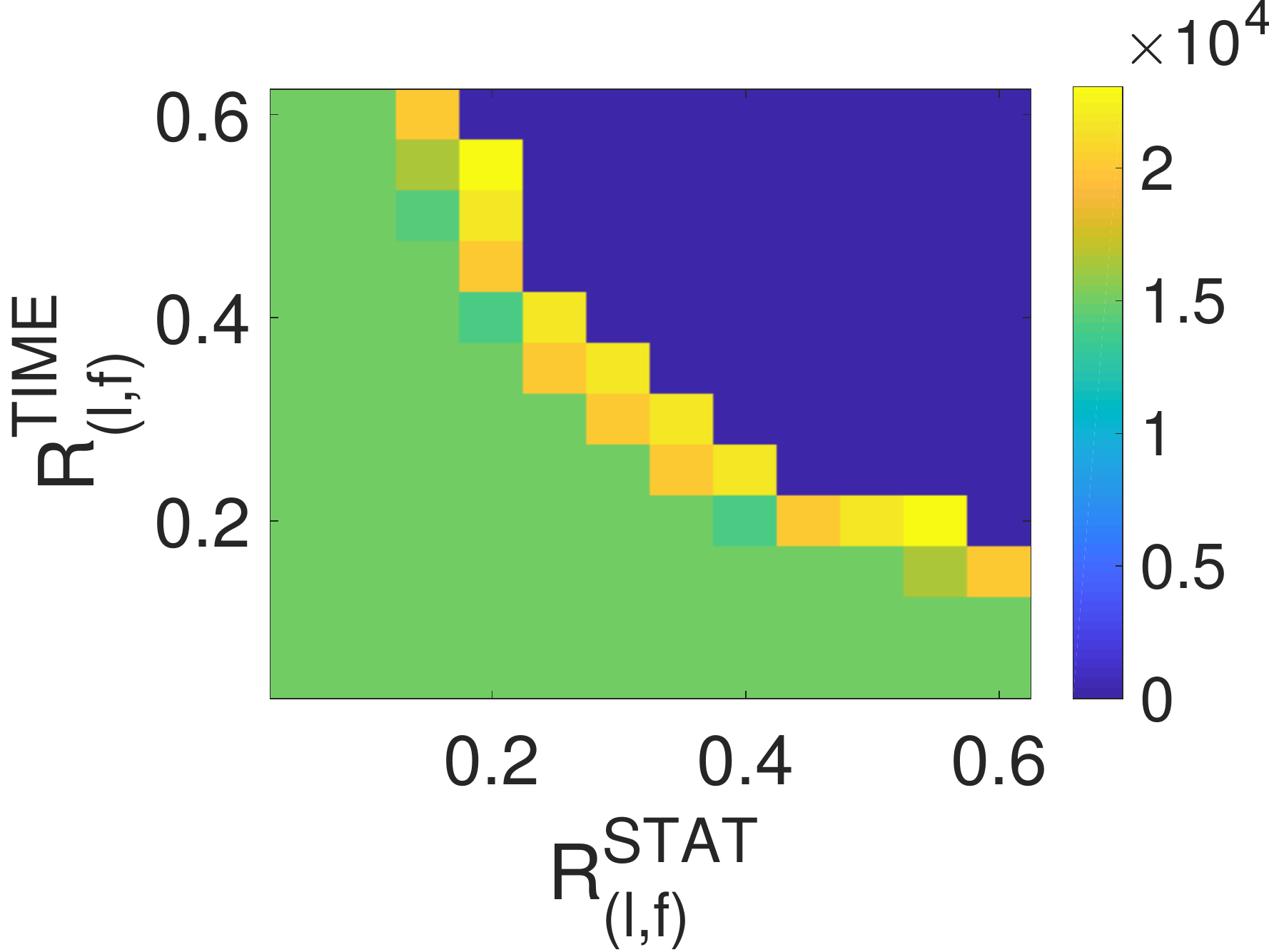}
    \label{fig:scaling_served}
}
\caption{Impact of the variation of the scaling parameters on: \textit{i}) average electric field $E^{\text{AVG}}$, \textit{ii}) number $N_{f1}$ of $f1$ \acp{gNB}, \textit{iii}) number $N_{f2}$ of $f2$ \acp{gNB}, \textit{iv}) average throughput $T^{\text{AVG}}$, \textit{v)} percentage of not served pixels $X^{\text{NOT-SERVED}}$, \textit{vi}) number of pixels served by $f2$ \acp{gNB} $X^{\text{SERVED}}_{f2}$.}
\label{fig:scaling}
\end{figure}

Summarizing, our results demonstrate that the setting of the scaling parameters plays a major role in the planning of 5G networks, especially for countries adopting strict \ac{EMF} limits. As a side comment, we believe that the methodology to estimate the values of $R^{\text{TIME}}_{(l,f)}$ and $R^{\text{STAT}}_{(l,f)}$ should be integrated in the national \ac{EMF} regulations. Clearly, the exact settings of the scaling parameters depend on the considered scenario.

We then evaluate the impact of varying $D^{\text{MIN}}$, which we remind is an additional restriction imposed by the municipality of Rome. We report here the main outcomes from this test, while we refer the interested reader to Appendix~\ref{app:min_distance} for more details. In brief, values of $D^{\text{MIN}}<100$~[m] do not significantly alter the  results presented so far. On the contrary, a value of $D^{\text{MIN}}=150$~[m] introduces huge limitations on the \acp{gNB} installations, and consequently on the 5G service in terms of throughput and number of served pixels.


%


\textbf{Impact of pre-5G exposure levels.}  {We then focus on} the impact of  adding the pre-5G exposure term on the 5G planning.  {Due to the lack of space, we refer the reader to Appendix~\ref{app:pre-exposure} for a detailed report about these outcomes, while here we briefly summarize the salient features. In brief, when} the pre-5G exposure is increased, we can note: \textit{i}) a reduction in the number of $f1$ \acp{gNB}, and consequently of total costs, \textit{ii}) an increase in the number of pixels served by $f2$ \acp{gNB},  \textit{iii}) a slight throughput decrease, and \textit{iv}) an \ac{EMF} increase, mainly due to the pre-5G exposure term. Overall, these results prove that the performance metrics are impacted by the level of background exposure. However, \textsc{PLATEA} is always able to retrieve a feasible planning, with a percentage of unserved pixels at most equal to 0.16\%. 

\textbf{ {Impact of frequency reuse scheme.}}  {In the last part of our work, we have investigated the impact of changing the frequency reuse factor $\epsilon^{\text{F-REUSE}}_f$. We refer the reader to Appendix~\ref{app:freq_reuse} for the details. In brief, we have found that, as $\epsilon^{\text{F-REUSE}}_f$ is increased, the throughput, the total costs and the average \ac{EMF} levels are decreased, mainly because less \acp{gNB} operating on frequency $f1$ are installed.}


\section{Conclusions and Future Works}
\label{sec:conclusions}

We have focused on the problem of planning a 5G network under service and \ac{EMF} constraints. To this aim, we have targeted an objective function that balances between \ac{gNB} installation costs and 5G service coverage level. After providing the \textsc{OPTPLAN-5G} \ac{MILP} formulation, we have demonstrated that the considered problem is NP-Hard, and therefore very challenging to be solved even for small problem instances. To face this issue, we have designed the \textsc{PLATEA} algorithm, which is able to select a 5G planning by iterating over the set of candidate \acp{gNB}. In addition, \textsc{PLATEA} exploits the linear constraints that have been defined for \textsc{OPTPLAN-5G}. We have then considered the \ac{TMC} scenario, which is subject to very strict \ac{EMF} regulations that include minimum distances from sensitive places and very stringent \ac{EMF} limits.

Results, obtained by running \textsc{PLATEA} over the considered scenario, prove that our solution outperforms \textsc{EA} and \textsc{MCMA}. In addition, we have demonstrated that the 5G planning is overall feasible, i.e., it is possible to serve a huge amount of pixels while limiting the installation costs and while ensuring the \ac{EMF} constraints outside the exclusion zones of the installed \acp{gNB}. However, our work points out an important aspect: the scaling parameters that are used to estimate the exposure level from 5G \acp{gNB} play a fundamental role in determining the problem feasibility and consequently the set of installed \acp{gNB}. Eventually, when pre-5G exposure is considered, \textsc{PLATEA} is still able to retrieve an admissible planning, with a moderate impact on  pixel throughput.  {Finally, the adopted frequency reuse scheme strongly influences the obtained planning.}

We believe that this work could be the first step towards a more comprehensive approach. First of all, the integration of \acp{gNB} operating on mm-Waves may be an interesting future work. In addition, the optimization of the scaling parameters may be another research avenue. Eventually, we plan to integrate detailed propagation models (e.g., including indoor evaluation) {, more complex} \ac{EMF} models,  {more detailed service models (even different than a Massive-}\ac{MIMO}  {system)},  {fronthaul/backhaul constraints governing the installation of \acp{gNB}, multiple spectrum sharing options, multiple power budget levels across the \acp{gNB} operating at the same frequency, as well as the integration of edge/caching considerations in the adopted models}.  {Finally, we will investigate possible algorithmic solutions to parallelize the planning computation when large territory areas are taken under consideration.}

\bibliographystyle{ieeetr}

\begin{IEEEbiography}[{\includegraphics[width=1in,height=1.25in,clip,keepaspectratio]{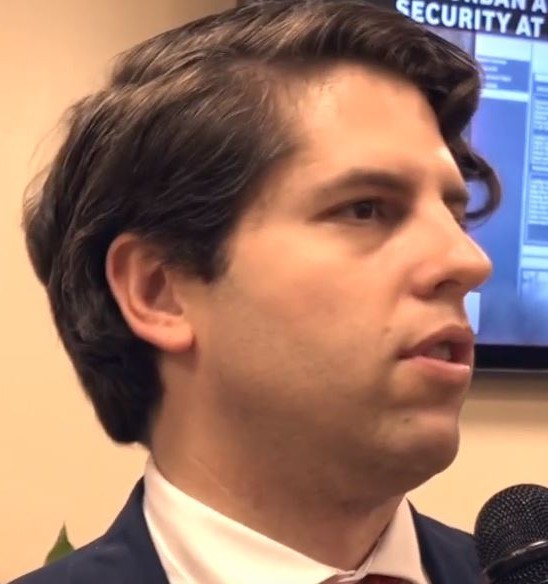}}]{Luca Chiaraviglio} (M'09-SM'16) is Associate Professor at the
University of Rome Tor Vergata (Italy). He holds a Ph.D. in Telecommunication and Electronics Engineering, obtained from Politecnico di Torino (Italy). Luca has co-authored 140+ papers published in international journals, books and conferences. He is TPC member of IEEE INFOCOM, associate editor for \textsc{IEEE Communications Magazine} and \textsc{IEEE Transactions on Green Communications and Networking}, and Specialty Chief Editor of \textsc{Frontiers in Communications and Networks}. Luca has received the Best Paper Award at IEEE VTC-Spring 2020, IEEE VTC-Spring 2016 and ICIN 2018, all of them appearing as first author. Some of his papers are listed as Best Readings on Green Communications by IEEE. Moreover, he has been recognized as an author in the top 1\% most highly cited papers in the ICT field worldwide. His current research topics cover 5G networks, optimization applied to telecommunication networks, and health risks assessment of 5G communications.
\end{IEEEbiography}

\begin{IEEEbiography}[{\includegraphics[width=1in,height=1.25in,clip,keepaspectratio]{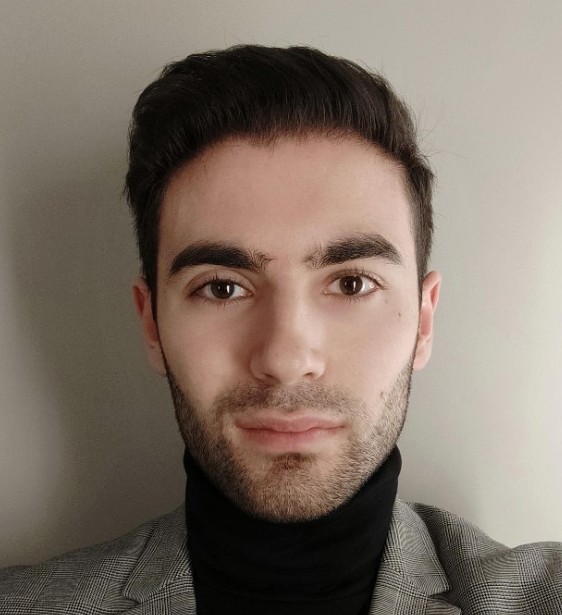}}]{Cristian Di Paolo} graduated in ICT and Internet Engineering from the University of Rome Tor Vergata in February 2020. Between January and March 2020 he has been a CNIT Researcher. He currently works as Network Engineer at Huawei.
\end{IEEEbiography}

\begin{IEEEbiography}[{\includegraphics[width=1in,height=1.25in,clip,keepaspectratio]{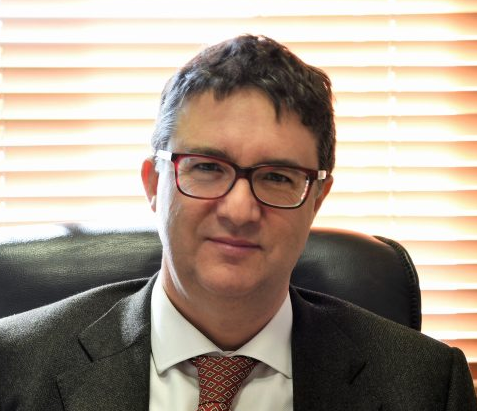}}]{Nicola Blefari-Melazzi} is currently a Full
Professor of telecommunications with the University
of Rome ``Tor Vergata'', Italy. He is currently
the Director of CNIT, a consortium of 37 Italian
Universities. He has participated in over 30 international
projects, and has been the principal investigator
of several EU funded projects. He has been
an Evaluator for many research proposals and a
Reviewer for numerous EU projects. He is the
author/coauthor of about 200 articles, in international
journals and conference proceedings. His research interests include
the performance evaluation, design and control of broadband integrated
networks, wireless LANs, satellite networks, and of the Internet.
\end{IEEEbiography}

\clearpage
\newpage

\appendices

\section{Considerations on the Point Source Model}
\label{app:point-source}

 {Hereafter, we report some considerations about the point source model and its positioning against synthetic and full-wave models. In brief, the point source model is based on a conservative approach, which in general brings an over-estimation EMF w.r.t to the exposure level that is then experienced once the network is put under operation. As a result, the exposure levels that are observed after the \acs{gNB} installation are generally lower than the ones predicted by the point source model during the planning phase. Clearly, if the adherence to the exposure limit is verified from the point source model, this condition will also hold after the installation of the deployment.}


 {On the other hand, both synthetic and full-wave models are able to provide a better estimation of the EMF, which can be an appealing property from the operator's side. However, such approaches introduce different challenges, which include:}
\begin{itemize}
\item  {an increased complexity in integrating these models in the considered framework;}
\item  {a strong dependence with extremely detailed antenna parameters, which are typically not disclosed by operators.}
\end{itemize}

 {In conclusion, we have selected the point source model because in this way we adopt a conservative approach to verify the compliance of the EMF levels against the regulations. In addition, we exploit the model to generate a set of linear constraints that are easily integrated in our framework.}

\section{Pixel Tessellation}
\label{app:pixel}

 By applying a tessellation based on pixels, three important goals can be met,  {namely}: 
\begin{enumerate}
\item the power density terms are computed over the whole area under consideration. More in depth, we evaluate the power density that is received over the pixel center from all the installed \acp{gNB}. In this way, we are able to extend the \ac{EMF} compliance assessment over the whole territory. In addition, we model the presence of exclusion zones in proximity to the installed \acp{gNB}, i.e., zones of the territory that are not accessed by users and therefore in such zones the \ac{EMF} compliance assessment is not required for the general public; 

\item  {we evaluate the throughput/coverage in each pixel over the territory, and not only in the locations where the users are supposed to be.} This assumption appears to be meaningful in the context of 5G, especially for the \ac{eMBB} scenario \cite{3gppservice}. In this way, it is possible to control the amount of throughput provided to each pixel, and consequently to the \ac{UE} that are located in the pixel; 

\item  {a minimum throughput requirement may be introduced for each pixel rather than for single users.} In this way, it is possible to (indirectly) take into account the effects of \ac{UE} densification and/or \ac{UE} mobility. For example, by assigning different values of required throughput, it is possible to model high density zones, where the throughput requirements are high, compared to other zones, which instead are not visited by users. In a similar way, it is possible to vary the throughput requirements based on the \ac{UE} mobility, e.g., by increasing the throughput for the zones that are subject to high \ac{UE} mobility, in order to take into account the effect of handovers and/or possible traffic spikes.
\end{enumerate}

 {Clearly, our framework guarantees throughput requirements for each pixel, which is not exactly equivalent to guarantee the same throughput for any user within the pixel. Intuitively, in fact, the radio resources consumed by a given pixel have to be split among the users located in the pixel. To further complicate this picture, the allocation of radio resources over the same pixel may be not uniform across the users, due to different service/QoS requirements. Since the goal of the planing phase is not to consider this level of detail, in this work we are assuming throughput requirements
on a pixel base, and not on single users. Clearly, the pixel throughput corresponds to the maximum value that can be experienced by a UE within the pixel.}

\section{Notation}
\label{app:notation}

 {Tab.~\ref{tab:notation} reports the main notation defined in the problem formulation.}

\begin{table}[t]
    \caption{Main Notation.}
    \label{tab:notation}
    \scriptsize
    \centering
    \begin{tabular}{|>{\columncolor{DarkLinen}}p{0.15cm}|c|p{6.1cm}|}
\hline
 \rowcolor{Coral}  & \multicolumn{1}{|c|}{\textbf{Symbol}} & \textbf{Description}  \\
\hline
\rowcolor{Linen}  \cellcolor{DarkLinen} &  \multicolumn{1}{|c|}{$\mathcal{P}$} & Set of pixels \\[-0.05em]
& \multicolumn{1}{|c|}{$\mathcal{P}^{\text{RES}} \in \mathcal{P}$} & Set of pixels in residential areas \\[-0.05em]
\rowcolor{Linen} \cellcolor{DarkLinen}  & \multicolumn{1}{|c|}{$\mathcal{P}^{\text{GEN}} \in \mathcal{P}$} & Set of pixels in general public areas\\[-0.05em]
& \multicolumn{1}{|c|}{$\mathcal{P}^{\text{SENS}} \in \mathcal{P}$} & Set of pixels in sensitive areas \\[-0.05em]
\rowcolor{Linen} \cellcolor{DarkLinen} & \multicolumn{1}{|c|}{$\mathcal{L}$} & set of candidate locations for installing 5G gNBs\\[-0.05em]
 \cellcolor{DarkLinen} \multirow{-6}{0.5cm}{\begin{sideways}\textbf{Set Notation}\end{sideways}} & \multicolumn{1}{|c|}{$\mathcal{F}$} & Set of operating frequencies for 5G gNB\\[-0.05em]
\hline
\rowcolor{Linen} \cellcolor{DarkLinen}& \multicolumn{1}{|c|}{\multirow{2}{*}{$S_f^{\text{MIN}}$}} & Minimum SIR to achieve in order to guarantee the required 5G service on frequency $f \in \mathcal{F}$\\[-0.05em]
& \multicolumn{1}{|c|}{\multirow{2}{*}{$\beta_{(l,p,l_2,f)}$}} & Signal/interference contribution from 5G gNB $l2 \in \mathcal{L}$ on frequency $f \in \mathcal{F}$ over pixel $p \in \mathcal{P}$ served by \ac{gNB} $l \in \mathcal{L}$\\[-0.05em]
\rowcolor{Linen} \cellcolor{DarkLinen}& \multicolumn{1}{|c|}{\multirow{2}{*}{$C^{\text{SITE}}_{(l,f)}$}} & Site installation cost of a 5G gNB site operating on frequency $f \in \mathcal{F}$ at location $l \in \mathcal{L}$\\[-0.05em]
 \cellcolor{DarkLinen}  & \multicolumn{1}{|c|}{} & \\[-0.9em]
& \multicolumn{1}{|c|}{$C^{\text{EQUIP}}_{f}$} & Equipment cost of a 5G gNB operating on frequency $f \in \mathcal{F}$;  \\[-0.05em]
\rowcolor{Linen} \cellcolor{DarkLinen} & \multicolumn{1}{|c|}{\multirow{2}{*}{$P^{\text{BASE}}_{(p,f)}$}} & Baseline power density on frequency $f \in \mathcal{F}$ received by pixel $p \in \mathcal{P}$\\[-0.05em]
& \multicolumn{1}{|c|}{\multirow{3}{*}{$P^{\text{ADD}}_{(p,l,f)}$}} & Additional power density received by pixel $p \in \mathcal{P}$ from a 5G gNB installed at location $l \in \mathcal{L}$ operating on frequency $f \in \mathcal{F}$\\[-0.05em]

\rowcolor{Linen} \cellcolor{DarkLinen} & \multicolumn{1}{|c|}{\multirow{2}{*}{$L^{\text{RES}}_f$}} & Power density limit over frequency $f \in \mathcal{F}$ for a pixel belonging to a residential area\\[-0.05em]
& \multicolumn{1}{|c|}{\multirow{2}{*}{$L^{\text{GEN}}_f$}} & Power density limit over frequency $f \in \mathcal{F}$ for a pixel belonging to a general public area\\[-0.05em]
\rowcolor{Linen} \cellcolor{DarkLinen} & \multicolumn{1}{|c|}{\multirow{2}{*}{$D^{\text{MIN}}$}} & Minimum distance between an installed 5G gNB site and a sensitive place\\[-0.05em]
& \multicolumn{1}{|c|}{\multirow{2}{*}{$D^{\text{MAX}}_f$}} & Max. 5G coverage distance between a 5G gNB operating on frequency $f$ and a covered pixel\\[-0.05em]
\rowcolor{Linen} \cellcolor{DarkLinen} & \multicolumn{1}{|c|}{\multirow{2}{*}{$D_{(p,l,f)}$}} & Distance between pixel $p \in \mathcal{P}$ and a \ac{gNB} operating on frequency $f \in \mathcal{F}$ and installed at location $l \in \mathcal{L}$\\[-0.05em]
& \multicolumn{1}{|c|}{\multirow{3}{*}{$I^{\text{ZONE}}_{(p,l,f)}$}} & Exclusion zone indicator: 1 if pixel $p \in \mathcal{P}$ falls inside the exclusion zone of a 5G \ac{gNB} installed at location $l \in \mathcal{L}$ and operating on frequency $f \in \mathcal{F}$, 0 otherwise\\[-0.05em]
\rowcolor{Linen} \cellcolor{DarkLinen} & \multicolumn{1}{|c|}{\multirow{2}{*}{$N^{\text{SER}}$}} & Maximum number of 5G \acp{gNB} that can serve a single pixel\\[-0.05em]
& \multicolumn{1}{|c|}{\multirow{2}{*}{$N^{\text{MAX}}$}} & Maximum number of 5G \acp{gNB} that can be installed in a location\\[-0.05em]
\rowcolor{Linen} \cellcolor{DarkLinen} & \multicolumn{1}{|c|}{\multirow{2}{*}{$I^{\text{FREQ}}_{(l,f)}$}} & Indicator parameter: 1 if \ac{gNB} operating on frequency $f \in \mathcal{F}$ can be installed at location $l \in \mathcal{L}$, 0 otherwise\\[-0.05em]
  & \multicolumn{1}{|c|}{\multirow{2}{*}{$R^{\text{TIME}}_{(l,f)}$}} & Temporal scaling factor for a 5G \ac{gNB} operating on frequency $f \in \mathcal{F}$ and installed at location $l \in \mathcal{L}$\\[-0.05em]
\rowcolor{Linen} \cellcolor{DarkLinen} & \multicolumn{1}{|c|}{\multirow{2}{*}{$R^{\text{STAT}}_{(l,f)}$}} & Statistical scaling factor for a 5G \ac{gNB} operating on frequency $f \in \mathcal{F}$ and installed at location $l \in \mathcal{L}$;\\[-0.05em]
 \multirow{-36}{0.5cm}{\begin{sideways}\textbf{Parameters}\end{sideways}} & \multicolumn{1}{|c|}{\multirow{3}{*}{$\alpha_{(l,f)}$}} & Objective weight factor for the service coverage variables, depending on frequency $f \in \mathcal{F}$ and \ac{gNB} location $l \in \mathcal{L}$.\\[-0.05em]
\hline
\rowcolor{Linen} \cellcolor{DarkLinen} & \multicolumn{1}{|c|}{\multirow{3}{*}{$y_{(l,f)}$}} & 5G gNB  equipment binary variable: 1 if a 5G gNB equipment operating on frequency $f \in \mathcal{F}$ is installed at location $l \in \mathcal{L}$, 0 otherwise\\[-0.05em]
 & \multicolumn{1}{|c|}{\multirow{3}{*}{$x_{(p,l,f)}$}} & Binary 5G service variable : 1 if pixel $p \in \mathcal{P}$ is served by 5G gNB at location $l \in \mathcal{L}$ with frequency $f \in \mathcal{F}$, 0 otherwise\\[-0.05em]
\rowcolor{Linen} \cellcolor{DarkLinen} & \multicolumn{1}{|c|}{\multirow{3}{*}{$P^{\text{ADD-TS}}_{(p,f)}$}} & Additional power density received by pixel $p \in \mathcal{P}$ from all the 5G \acp{gNB} operating on frequency $f \in \mathcal{F}$, computed over temporal and statistical scaling factors\\[-0.05em]
 & \multicolumn{1}{|c|}{\multirow{3}{*}{$P^{\text{ADD-NOTS}}_{(p,f)}$}} & Additional power density received by pixel $p \in \mathcal{P}$ from all the 5G gNB operating on frequency $f \in \mathcal{F}$, computed without temporal and statistical scaling factors\\[-0.05em]
 \rowcolor{Linen} \cellcolor{DarkLinen} & \multicolumn{1}{|c|}{\multirow{3}{*}{$w_p$}} & Pixel in exclusion zone binary variable : 1 if pixel $p \in \mathcal{P}$ falls inside an exclusion zone of an installed 5G gNB, 0 otherwise\\[-0.05em] \multirow{-12}{0.5cm}{\begin{sideways}\textbf{Variables}\end{sideways}}  & \multicolumn{1}{|c|}{$C^{\text{TOT}}$} &  {Variable storing the total installation costs for the installed 5G \acp{gNB}.}\\
\hline
\end{tabular}
\vspace{-5mm}
\end{table}

\section{SIR Linearization}
\label{app:linearization}

In order to linearize Eq.~(\ref{eq:sir_not_linear}), we initially exploit the following equivalence:
\begin{equation}
\label{eq:equivalent_form}
\sum_{l_2 \neq l \in \mathcal{L}}\beta^2_{(l,p,l_2,f)}\cdot y_{(l_2,f)} = \sum_{l_2 \in \mathcal{L}}\beta^2_{(l,p,l_2,f)}\cdot y_{(l_2,f)} - \beta^2_{(l,p,l,f)}\cdot y_{(l,f)}
\end{equation}
By assuming that the right-hand side of Eq.~(\ref{eq:equivalent_form}) is greater than or equal to 0, we replace the denominator of Eq.~(\ref{eq:sir_not_linear}) with Eq.~(\ref{eq:equivalent_form}), thus obtaining:
\begin{eqnarray}
\label{eq:sir_not_linear_2}
\beta^2_{(l,p,l,f)}\cdot y_{(l,f)} \geq S^{\text{MIN}}_{f} \cdot x_{(p,l,f)} \cdot & & \nonumber \\ \cdot \left[\sum_{l_2 \in \mathcal{L}}\beta^2_{(l,p,l_2,f)}\cdot y_{(l_2,f)}  -  \beta^2_{(l,p,l,f)}\cdot y_{(l,f)}\right], && \nonumber \\  \forall p \in \mathcal{P}, l \in \mathcal{L}, f \in \mathcal{F} &&
\end{eqnarray}

We then divide both sides of the constraint by the left hand side term, thus obtaining:
\begin{eqnarray}
\label{eq:sir_not_linear_3}
S^{\text{MIN}}_{f} \cdot x_{(p,l,f)} \cdot \left[\frac{\sum_{l_2 \in \mathcal{L}}\beta^2_{(l,p,l_2,f)}\cdot y_{(l_2,f)}}{\beta^2_{(l,p,l,f)}\cdot y_{(l,f)}} - 1 \right] \leq 1, \nonumber \\ \quad \forall p \in \mathcal{P}, l \in \mathcal{L}, f \in \mathcal{F} 
\end{eqnarray}

The previous constraint is equivalent to the following one:
\begin{eqnarray}
\label{eq:sir_not_linear_3bis}
 S^{\text{MIN}}_{f} \left[\frac{\sum_{l_2 \in \mathcal{L}}\beta^2_{(l,p,l_2,f)}\cdot y_{(l_2,f)} \cdot x_{(p,l,f)}}{\beta^2_{(l,p,l,f)}\cdot y_{(l,f)}} - x_{(p,l,f)} \right] \leq 1,  \nonumber \\
\forall p \in \mathcal{P}, l \in \mathcal{L}, f \in \mathcal{F} 
\end{eqnarray}

By recalling constraint (\ref{eq:cov_pixel}), we know that $x_{(p,l,f)}=1$ only if $y_{(l,f)}=1$ (and both Eq.~(\ref{eq:cov_pixel}) and Eq.~(\ref{eq:maximum_number_of_cells}) are ensured). In other words, $x_{(p,l,f)}$ can not be set to 1 if the \ac{gNB} operating on $f$ is not installed in $l$, i.e., $y_{(l,f)}=0$. As a result, the ratio $x_{(p,l,f)}/y_{(l,f)}$ can be simply expressed as $x_{(p,l,f)}$. Consequently, constraint (\ref{eq:sir_not_linear_3bis}) can be rewritten in the following equivalent form:
\begin{eqnarray}
\label{eq:sir_not_linear_4}
S^{\text{MIN}}_{f} \cdot \left[\sum_{l_2 \in \mathcal{L}}\frac{\beta^2_{(l,p,l_2,f)}}{\beta^2_{(l,p,l,f)}}\cdot y_{(l_2,f)}\cdot x_{(p,l,f)} - x_{(p,l,f)} \right] \leq 1, \nonumber \\ \quad \forall p \in \mathcal{P}, l \in \mathcal{L}, f \in \mathcal{F} 
\end{eqnarray}

The previous constraint can be easily linearized by: \textit{i}) introducing the binary auxiliary variable $v_{(l,p,l_2,f)} \in \{0,1\}$, \textit{ii}) replacing (\ref{eq:sir_not_linear_4})  with the following set of constraints:
\begin{equation}
\label{eq:sir_aux_1app}
v_{(l,p,l_2,f)} \leq x_{(p,l,f)}, \quad \forall p \in \mathcal{P}, l \in \mathcal{L}, l_2 \in \mathcal{L},  f \in \mathcal{F} 
\end{equation}

\begin{equation}
\label{eq:sir_aux_2app}
v_{(l,p,l_2,f)} \leq y_{(l_2,f)}, \quad \forall p \in \mathcal{P}, l \in \mathcal{L}, l_2 \in \mathcal{L},  f \in \mathcal{F} 
\end{equation}

\begin{eqnarray}
\label{eq:sir_aux_3app}
v_{(l,p,l_2,f)} \geq x_{(p,l,f)} + y_{(l_2,f)} - 1, \quad \quad \quad \quad \quad \quad \nonumber \\ \quad \forall p \in \mathcal{P}, l \in \mathcal{L}, l_2 \in \mathcal{L},  f \in \mathcal{F} 
\end{eqnarray}

\begin{eqnarray}
\label{eq:sir_linearapp}
S^{\text{MIN}}_{f} \cdot \left[\sum_{l_2 \in \mathcal{L}}\frac{\beta^2_{(l,p,l_2,f)}}{\beta^2_{(l,p,l,f)}}\cdot v_{(l,p,l_2,f)} - x_{(p,l,f)} \right] \leq 1 \nonumber \\ \quad \forall p \in \mathcal{P}, l \in \mathcal{L}, f \in \mathcal{F} 
\end{eqnarray}

\section{\textsc{PLATEA} Computational Complexity}
\label{app:complexity}

\begin{table}[t]
\caption{Computational complexity of the routines and \textsc{PLATEA} algorithm}
\label{tab:comp_compl}
\centering
\begin{tabular}{|c|c|}
\hline
\rowcolor{Coral}	& \\[-0.8em]
\rowcolor{Coral} \textbf{Procedure} & \textbf{Complexity}\\[0.2em]
\hline
& \\[-0.8em]
\textsc{initial\_sol} & $\mathcal{O}(|\mathcal{P}|\times |\mathcal{L}| \times |\mathcal{F}|)$ \\[0.2em]
\hline
\rowcolor{Linen} & \\[-0.8em]
\rowcolor{Linen} \textsc{extract\_sites} & $\mathcal{O}(N^{\text{COMB}})$\\[0.2em]
\hline
	& \\[-0.8em]
 \textsc{install\_check} & $\mathcal{O}(|\mathcal{P}|\times |\mathcal{L}| \times |\mathcal{F}|)$\\[0.2em]
\hline
\rowcolor{Linen} & \\[-0.8em]
\rowcolor{Linen} \textsc{associate\_pixels} & $\mathcal{O}(|\mathcal{P}|\times |\mathcal{L}|^2 \times |\mathcal{F}|)$\\[0.2em]
\hline
	& \\[-0.8em]
 \textsc{compute\_obj} & $\mathcal{O}(|\mathcal{P}|\times |\mathcal{L}| \times |\mathcal{F}|)$\\[0.2em]
\hline
\rowcolor{Linen}	& \\[-0.8em]
\rowcolor{Linen} \textsc{save\_sol} & $\mathcal{O}(|\mathcal{P}|\times |\mathcal{L}| \times |\mathcal{F}|)$\\[0.2em]
\hline
	& \\[-0.8em]
 \textsc{initialize} & $\mathcal{O}(|\mathcal{P}|\times |\mathcal{L}| \times |\mathcal{F}|)$\\[0.2em]
\hline
\rowcolor{Linen}	& \\[-0.8em]
\rowcolor{Linen} \textsc{all\_served} & $\mathcal{O}(|\mathcal{P}|\times |\mathcal{L}| \times |\mathcal{F}|)$\\[0.2em]
\hline
 & \\[-0.8em]
 \textsc{select\_best\_set\_f1} & $\mathcal{O}(N^{\text{COMB}}\times|\mathcal{P}|\times |\mathcal{L}|^2 \times |\mathcal{F}|)$\\[0.2em]
\hline
\rowcolor{Linen}	& \\[-0.8em]
\rowcolor{Linen} \textsc{PLATEA} & $\mathcal{O}(N^{\text{COMB}}\times|\mathcal{P}|\times |\mathcal{L}|^4 \times |\mathcal{F}|)$\\[0.2em]
\hline
\end{tabular}
\end{table}

Tab.~\ref{tab:comp_compl} reports the computational complexity of the routines, functions and the whole \textsc{PLATEA} algorithm. Several considerations hold by analyzing the table. First, we denote with $N^{\text{COMB}}$ the number of  {combinations} that are generated by the \textsc{extract\_sites} routine. Second, the computational complexity of  \textsc{install\_check}, \textsc{associate\_pixels} and \textsc{compute\_obj} are derived from the implementation of constraints Eq.~(\ref{eq:lower_bound_exclusion}), (\ref{eq:upper_bound_exclusion}), (\ref{eq:lin1_res})-(\ref{eq:indicator_parameter}), (\ref{eq:cov_pixel}),(\ref{eq:maximum_number_of_cells}), (\ref{eq:sir_aux_1})-(\ref{eq:sir_linear}), (\ref{tot_cost_bs_installed_computation}), (\ref{eq:tot_obj}). Third, the whole computational complexity of \textsc{PLATEA} grows linearly with the number of pixels and with the number of frequencies. Fourth, although the complexity of \textsc{PLATEA} may appear substantially large at a first glance, due to the term $N^{\text{COMB}} \times |\mathcal{L}|^4$, we remind that the number of candidate sites $|\mathcal{L}|$ is rather limited in practice, due to the intrinsic difficulty in finding suitable locations that can host \ac{gNB} equipment. In addition, in our work we constrain $N^{\text{COMB}}$ to be in the same order of magnitude of $|\mathcal{L}|$. Therefore, overall complexity of \textsc{PLATEA} is in the order of $\mathcal{O}(|\mathcal{P}|\times |\mathcal{L}|^5 \times |\mathcal{F}|)$.

\section{EMF measurements in the TMC scenario}
\label{app:emf_measurements}

\begin{figure}[t]
	\centering
 	\subfigure[Pole mounted]
	{
		\includegraphics[width=4cm]{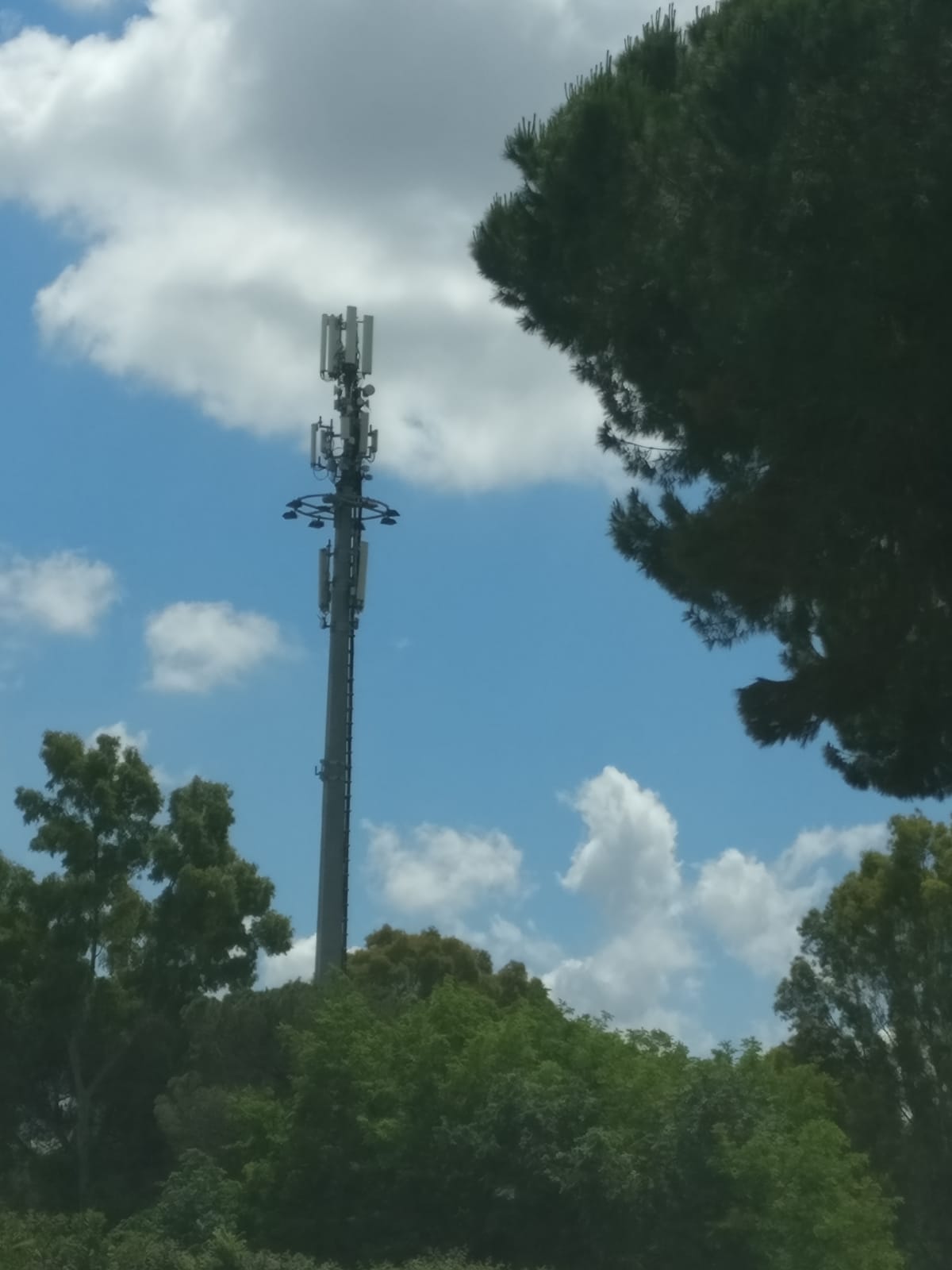}
		\label{fig:polemounted}

	}
 	\subfigure[Roof mounted (free)]
	{
		\includegraphics[width=4cm]{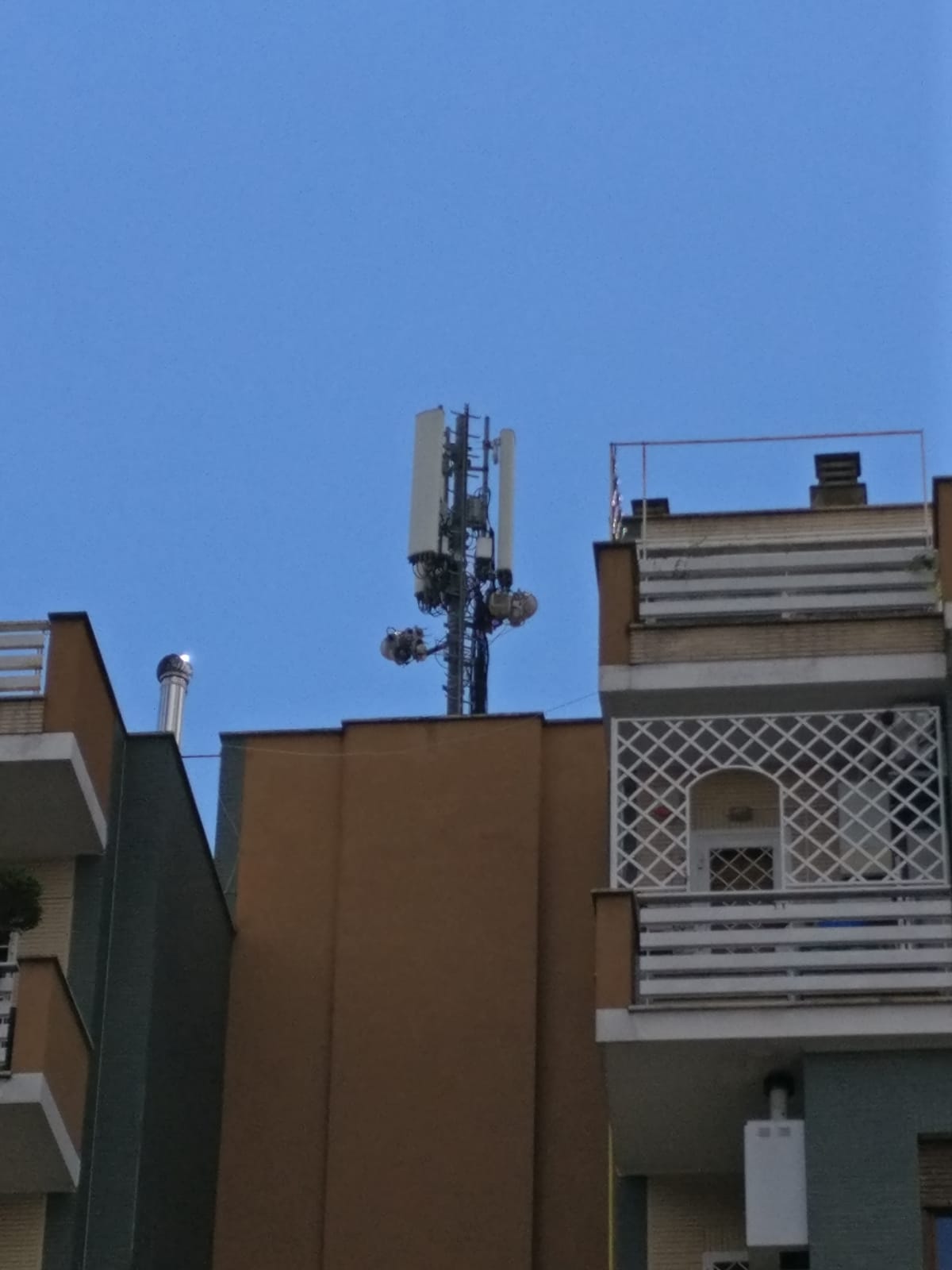}
		\label{fig:roofmountedfree}
	}

 	\subfigure[Roof mounted (hidden)]
	{
		\includegraphics[width=5cm]{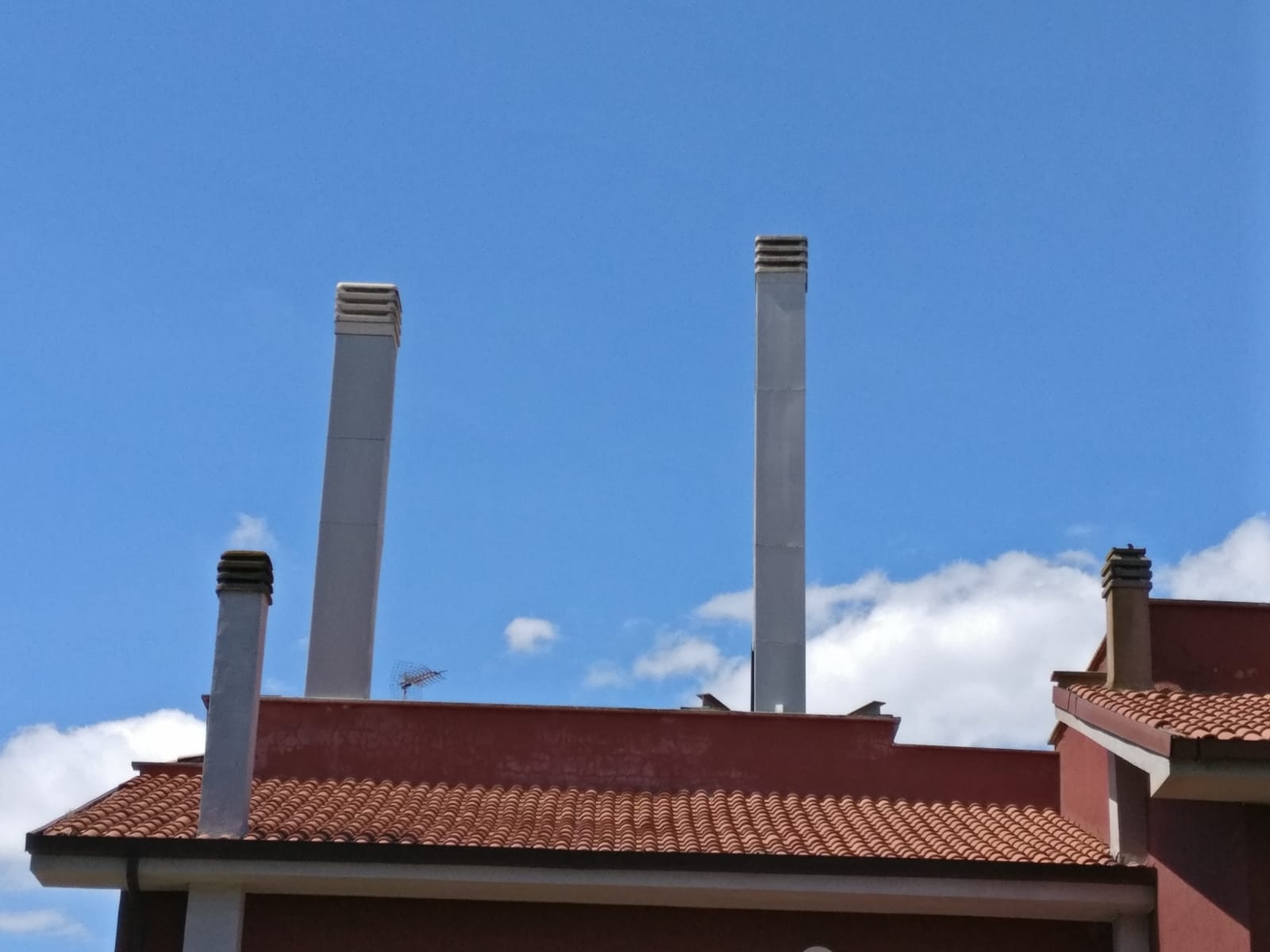}
		\label{fig:roofmountedmasked}
	}
	\caption{Three examples of pre-5G Base Stations serving the \ac{TMC} neighborhood.}
	\label{fig:gnb_servingtmc}
\end{figure}

\begin{figure}[t]
\centering
\includegraphics[width=5.8cm]{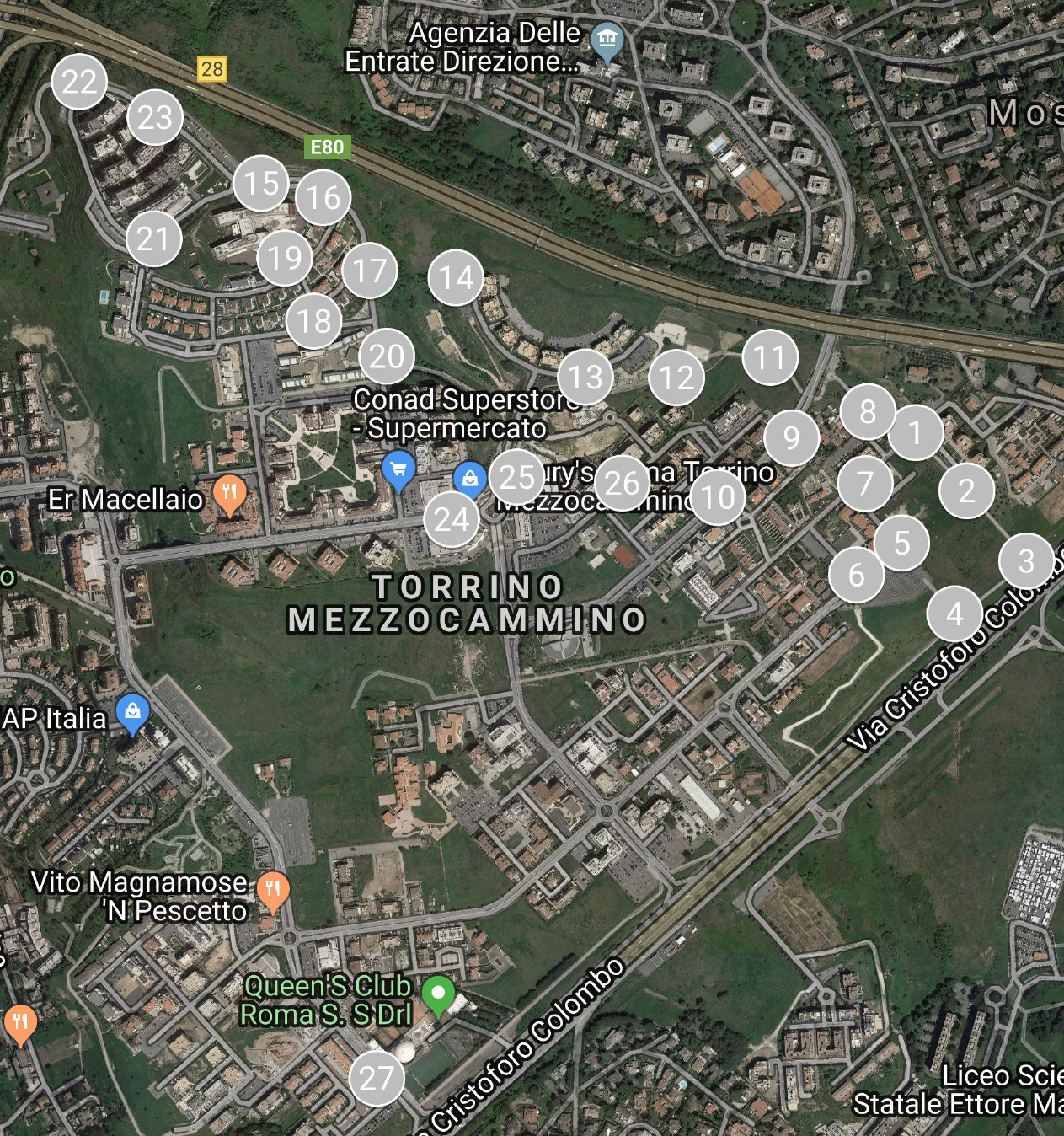}
\caption{Measurement locations in the \ac{TMC} scenario.}
\label{fig:measurement_ID}
\end{figure}

We describe here the methodology adopted to perform the \ac{EMF} measurements in the \ac{TMC} scenario, as well as the main outcomes from the measurements.
Actually, the \ac{TMC} neighborhood does not host any installation of legacy pre-5G Base Stations, mainly due to the following reasons: \textit{i}) \ac{TMC} is a relatively new neighborhood, which was build during the last decade and \textit{ii}) all the requests done by operators to install Base Stations in the neighborhood have been denied by the municipality, since the selected locations did not ensure the minimum distance from the sensitive places. As a result, the cellular service over \ac{TMC} is provided by a set of Base Stations installed in other neighborhoods. We refer the interested reader to Fig.~8 of \cite{chiaraviglio2018planning} for the maps reporting the localization of pre-5G Base Stations serving \ac{TMC}. In brief, these Base Stations include pole mounted and roof mounted installations (with some examples reported in Fig.~\ref{fig:gnb_servingtmc}), mainly close to the north-east and east borders of the neighborhood. Therefore, rather than measuring the \ac{EMF} levels in each pixel of  \ac{TMC} neighborhood (which would require a huge amount of time), we concentrate on the \ac{TMC} zones in close proximity to the Base Stations installed in the other neighborhoods, since these zones are expected to receive the highest \ac{EMF} exposure levels. To this aim, Fig.~\ref{fig:measurement_ID} reports the considered measurement locations, each of them labelled with a unique ID. In addition, we place the \ac{EMF} meter in outdoor locations and in general in zones not covered by obstacles. To this aim, Fig.~\ref{fig:measurementlocations} reports three examples of measurement locations, by differentiating between: roadside positioning (Fig~\ref{fig:roadside}), countryside positioning (Fig.~\ref{fig:countryside}), and positioning in the main square (Fig.~\ref{fig:mainsquare}).

\begin{figure}[t]
	\centering
 	\subfigure[Roadside]
	{
		\includegraphics[width=4.4cm,angle=270]{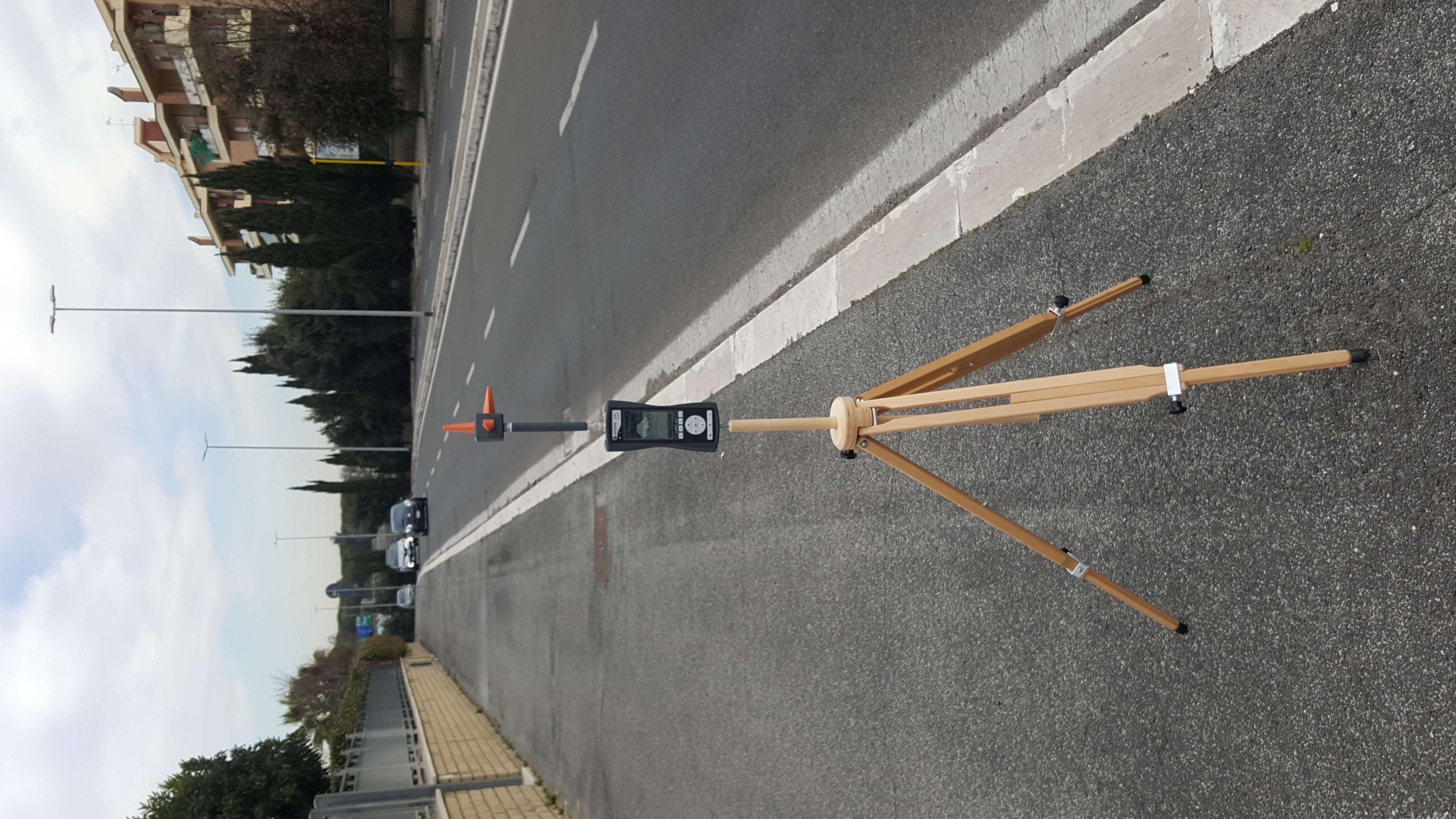}
		\label{fig:roadside}

	}
 	\subfigure[Countryside]
	{
		\includegraphics[width=4.4cm,angle=270]{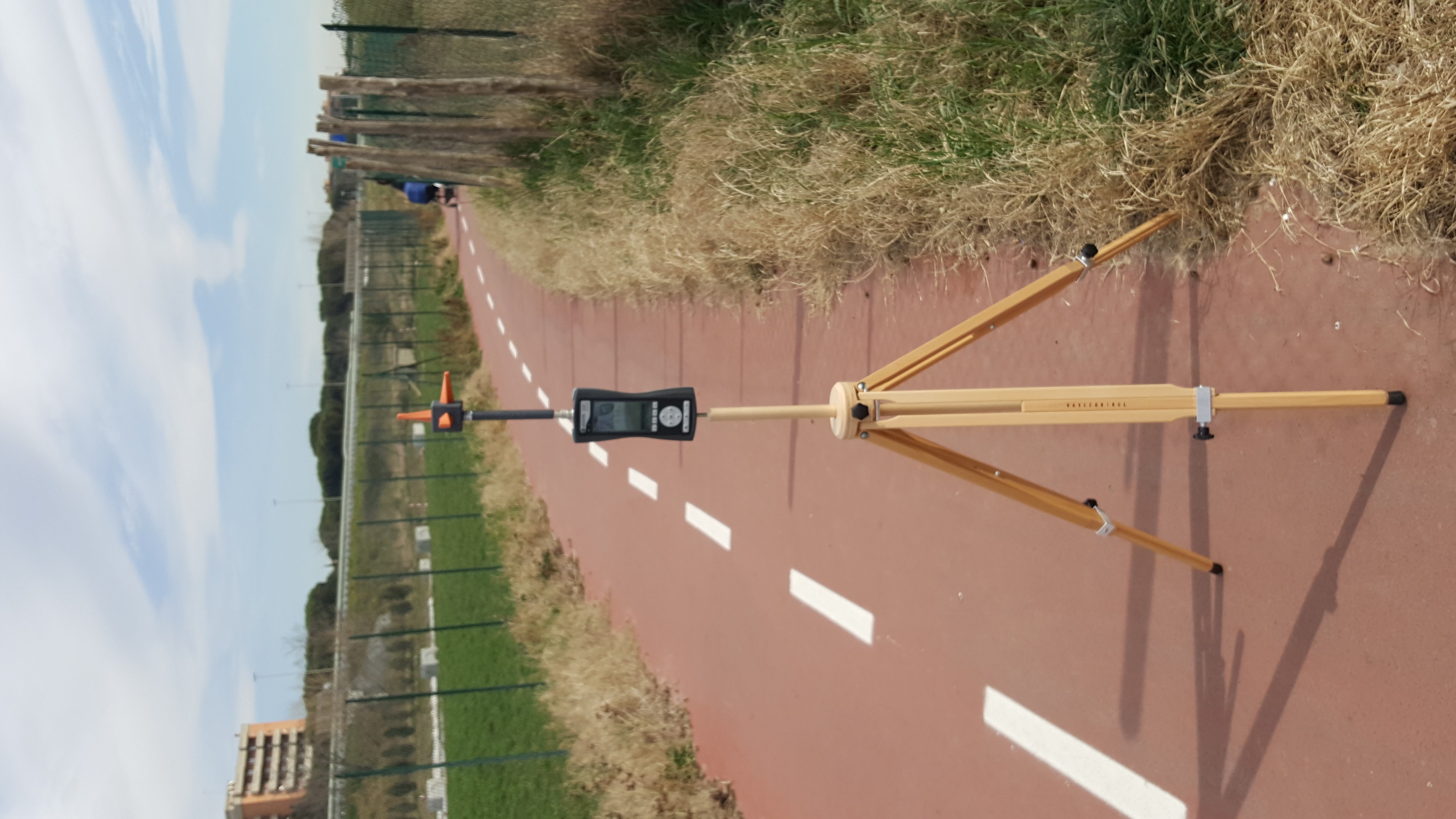}
		\label{fig:countryside}
	}
 	\subfigure[Main Square]
	{
		\includegraphics[width=4.4cm,angle=270]{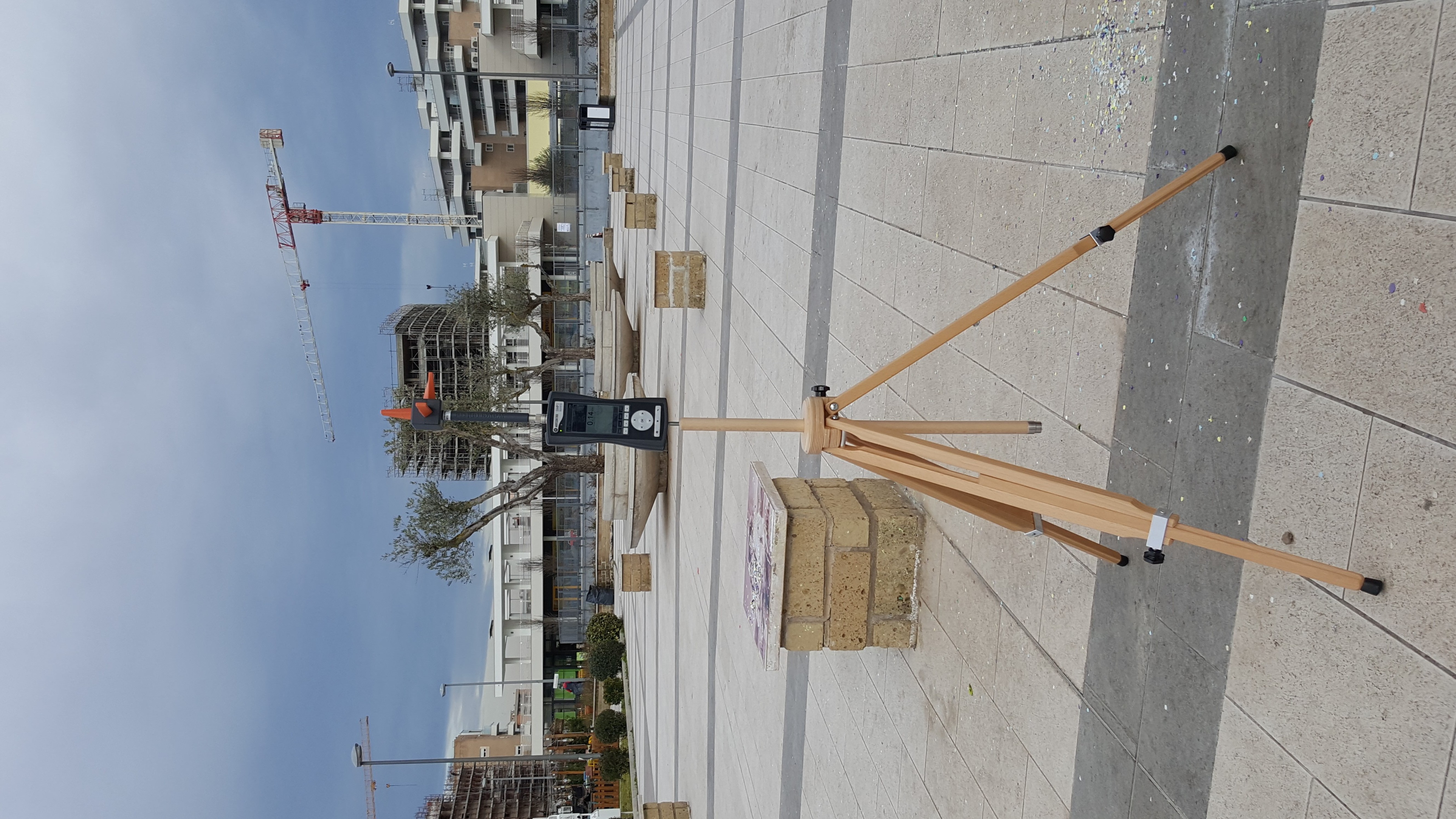}
		\label{fig:mainsquare}
	}

	\caption{Three examples of \ac{EMF} measurement location in the \ac{TMC} neighborhood.}
	\label{fig:measurementlocations}
\end{figure}

In the following, we measure the electric field in each measurement point, by adopting a professional \ac{EMF} meter, whose settings are detailed in Tab.~\ref{tab:meter_settings}. More in depth, the meter provides the total electric field over the set of frequencies used by cellular operators. In addition, all the measurements have been performed during morning/afternoon hours of business days. In this way, the measured electric field is retrieved under moderate/high utilization of the cellular network. In addition, we consider the value of 6~[min] as the reference time interval to compute the average electric field (in accordance to regulation \textit{R1} of Tab.~\ref{tab:regulation_comparison}). Although Italian regulations integrate longer time-intervals to compute the average electric field (see e.g., \textit{R3}-\textit{R4} of Tab.~\ref{tab:regulation_comparison}), we believe that the considered scenario is rather conservative, since the measurements are performed during moderate/high utilization of the cellular network, and hence during time intervals during which the electric field is higher compared to low traffic conditions (e.g., at night, during holidays, during week-ends).

\begin{table}[t]
\caption{Main settings/features of the \ac{EMF} meter}
\label{tab:meter_settings}
\begin{tabular}{|c|c|}
\hline
\rowcolor{Coral} \textbf{Setting/Feature} & \textbf{Value}\\
\hline
Measured Frequencies & \begin{minipage}{5cm}700-900~[MHz], 1800-1900~[MHz], 2100~[MHz], 2600~[MHz]\end{minipage}\\ 
\hline
\rowcolor{Linen} Measurable \ac{EMF} Range & 0.04-65~[V/m]\\
\hline
Average Interval & 6~[min] \\
\hline
\rowcolor{Linen} Height from ground & 1.5~[m]\\
\hline
\end{tabular}
\end{table}

Fig.~\ref{fig:average_emf} reports the average \ac{EMF} levels over the measurement locations. Several consideration hold by analyzing the figure. First, the average electric field is always pretty low, i.e., always lower than 0.9~[V/m]. Second, the electric field generally varies across the locations. Third, very low levels of electric field are measured over locations not in proximity to the \ac{TMC} border (e.g., locations with ID 24, 25, 26, 10 of Fig.~\ref{fig:measurement_ID}). This fact further corroborates our intuition that the electric field from pre-5G Base Stations is almost negligible for most of the pixels in the \ac{TMC} neighborhood.

\begin{figure}[t]
\centering
\includegraphics[width=7cm]{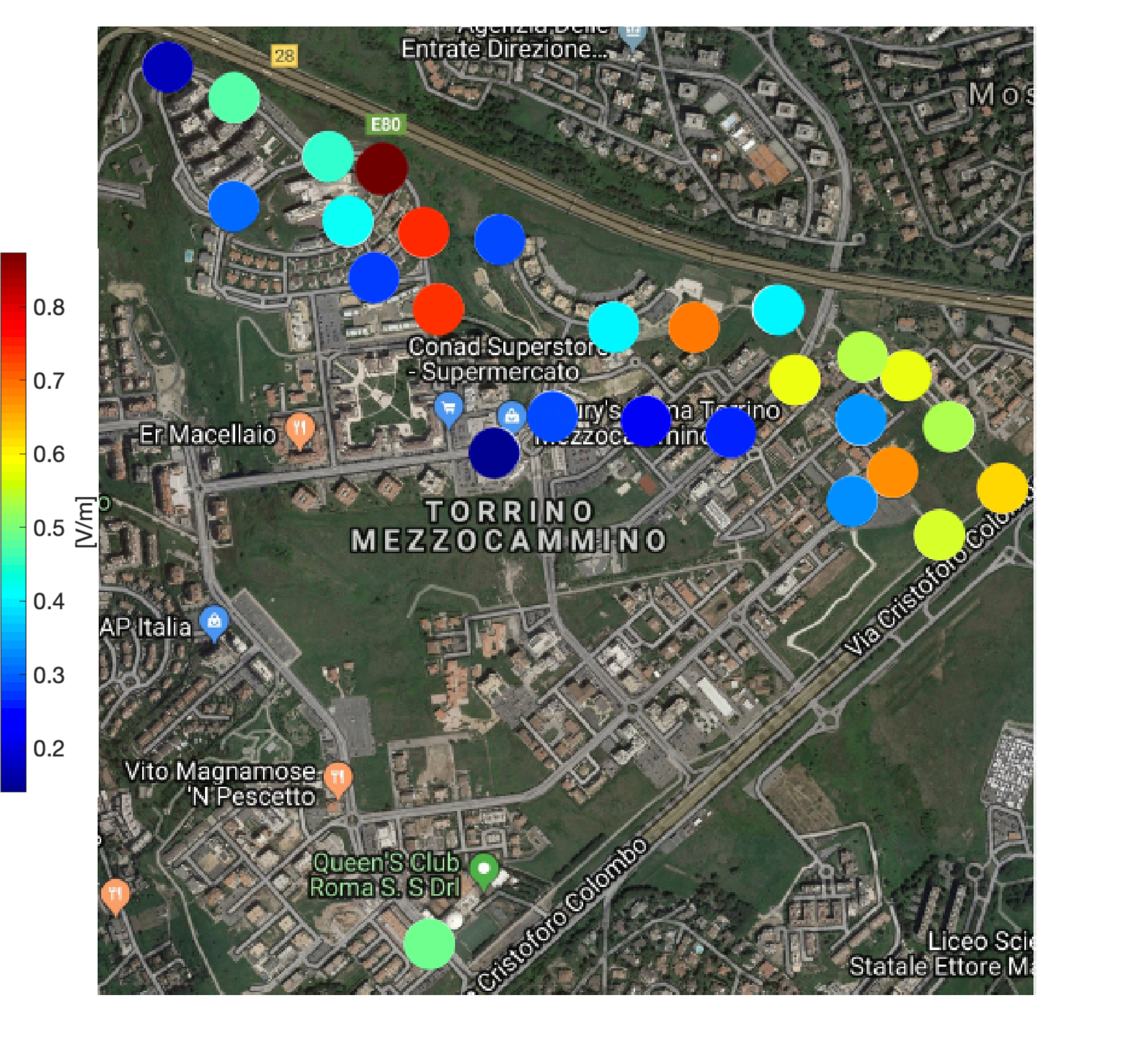}
\caption{Average \ac{EMF} of each measurement in the \ac{TMC} scenario.}
\label{fig:average_emf}
\end{figure}

In the following part of this step, we analyze in more detail the \ac{EMF} measurements. To this aim, Fig.~\ref{fig:average_emf_conf_intervals} reports the electric field vs. the measurement ID. Bar report average electric field values, while error ranges denote the confidence intervals (which are computed by assuming a 95\% of confidence levels). Interestingly, we can note that the confidence interval tends to be reduced when the measured electric field level decreases. 
\begin{figure}[t]
	\centering
	\includegraphics[width=7cm]{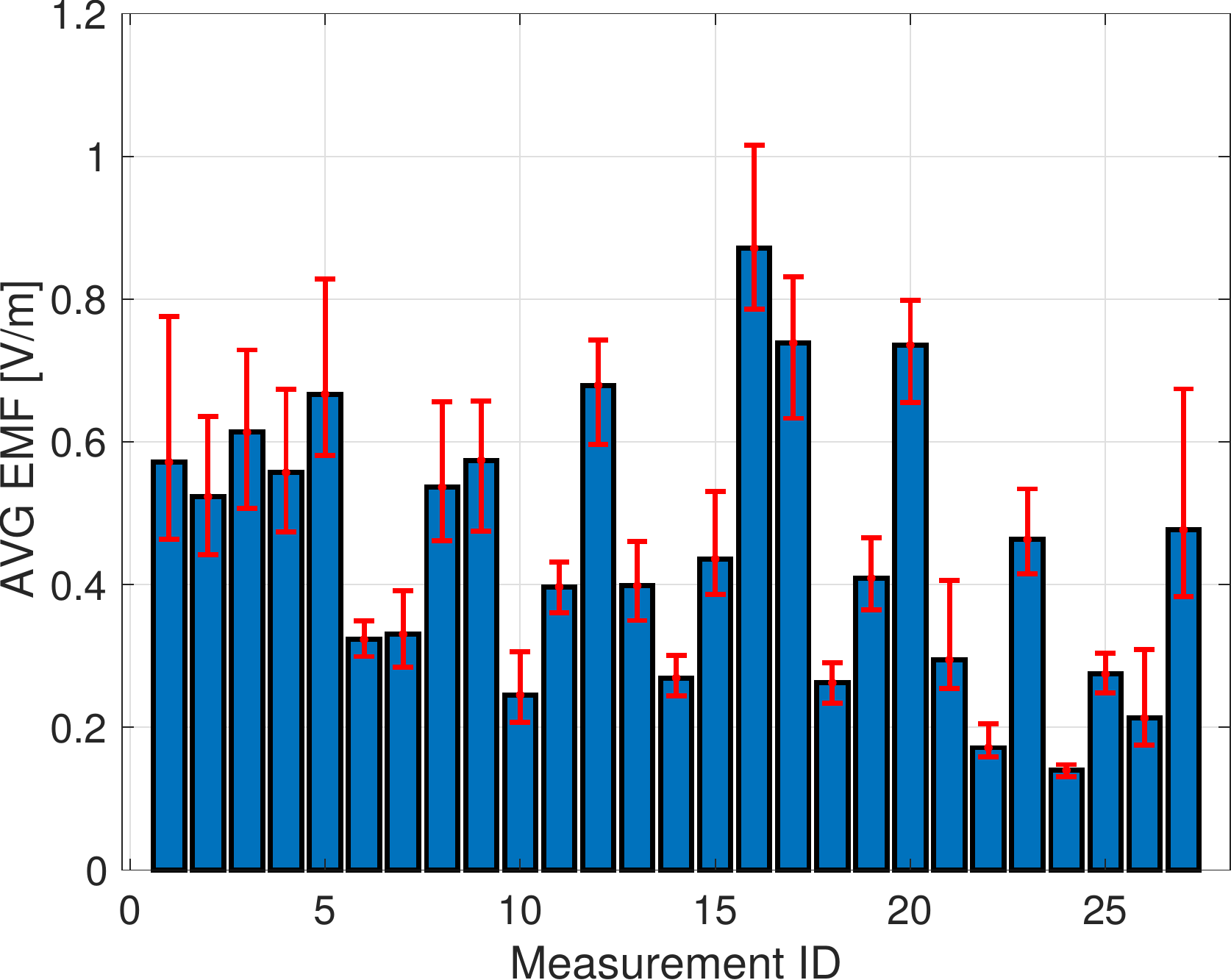}
	\caption{Measured electric field vs. measurement ID.}
         \label{fig:average_emf_conf_intervals}
\end{figure}

Finally, we remind that all the \ac{EMF} measurements based on electric field are converted to power density  {values} by applying Eq.~(\ref{eq:tot_emf}) with $Z_0$=377~[$\Omega$], in accordance with \cite{itutk70}.

\begin{algorithm*}[t]
\small
\caption{Pseudo-Code of the \textsc{Evaluation Algorithm (EA)}}
\label{alg:evalalgo}
	\textbf{Input:} num\_f1 of deployed \acp{gNB}  with frequency $f1$, num\_f2 of deployed \acp{gNB}  with frequency $f2$\\  
	\textbf{Output:} flag\_check flag with installation status (true = installation successful, false = installation unsuccessful),variables $y$, $x$, $P^{\text{ADD-TS}}$, $P^{\text{ADD-NOTS}}$, $w$, $C^{\text{TOT}}$
       \begin{algorithmic}[1]	
	\State{cand\_sites=\textsc{generate\_sites}(num\_f1,num\_f2); \texttt{// Random generation of a candidate deployment}}
	\State{[x\_temp y\_temp pd\_temp]=\textsc{initialize}(cand\_sites);}
	\State{[flag\_check, pd\_temp]=\textsc{install\_check}(cand\_sites, y\_temp, pd\_temp); \texttt{// Based on Eq.~(\ref{eq:lower_bound_exclusion}), (\ref{eq:upper_bound_exclusion}), (\ref{eq:lin1_res})-(\ref{eq:indicator_parameter})}}
	\If{flag\_check==true}
		\State{[x\_temp]=\textsc{associate\_pixels}(y\_temp, x\_temp); \texttt{// Based on Eq.~(\ref{eq:cov_pixel}),(\ref{eq:maximum_number_of_cells}),(\ref{eq:sir_aux_1})-(\ref{eq:sir_linear})}}
		\State{[$y$, $x$, $P^{\text{ADD-TS}}$, $P^{\text{ADD-NOTS}}$, $w$, $C^{\text{TOT}}$]=\textsc{save\_sol}(y\_temp, x\_temp, pd\_temp); \texttt{// Solution saving}}
	\EndIf
	\end{algorithmic}
\end{algorithm*}

\begin{algorithm*}[t]
\small
\caption{Pseudo-Code of the \textsc{Maximum Coverage Macro Algorithm (MCMA)}}
\label{alg:mcaalgo}
	\textbf{Input:} num\_f1 of deployed \acp{gNB}  with frequency $f1$\\  
	\textbf{Output:} flag\_sol flag with installation status (true = feasible solution found, false = no feasible solution found),variables $y$, $x$, $P^{\text{ADD-TS}}$, $P^{\text{ADD-NOTS}}$, $w$, $C^{\text{TOT}}$
       \begin{algorithmic}[1]	
	\State{num\_f2\_max=$\sum_{l \in \mathcal{L}} I^{\text{FREQ}}_{(f2,l)}$;}
	\State{flag\_end=false;}
	\State{flag\_sol=false;}
	\State{num\_f2=1;}
	\While{flag\_end==false}
		\State{cand\_sites=\textsc{generate\_sites}(num\_f1,num\_f2); \texttt{// Random generation of a candidate deployment}}		
		\State{[x\_curr y\_curr pd\_curr]=\textsc{initialize}(cand\_sites);}		
		\State{[flag\_check, pd\_temp]=\textsc{install\_check}(cand\_sites, y\_curr, pd\_curr); \texttt{// Based on Eq.~(\ref{eq:lower_bound_exclusion}), (\ref{eq:upper_bound_exclusion}), (\ref{eq:lin1_res})-(\ref{eq:indicator_parameter})}}
		\If{flag\_check==true}
			\State{[x\_curr]=\textsc{associate\_pixels}(y\_curr, x\_curr); \texttt{// Based on Eq.~(\ref{eq:cov_pixel}),(\ref{eq:maximum_number_of_cells}),(\ref{eq:sir_aux_1})-(\ref{eq:sir_linear})}}
			\State{flag\_sol=true;}
		\EndIf
		\If{(\textsc{all\_served}(x\_curr)==true) or (num\_f2==num\_f2\_max))}
			\If{flag\_check==true}
				\State{[$y$, $x$, $P^{\text{ADD-TS}}$, $P^{\text{ADD-NOTS}}$, $w$, $C^{\text{TOT}}$]=\textsc{save\_sol}(y\_curr, x\_curr, pd\_curr); \texttt{// best Solution Saving}}
			\EndIf
			\State{flag\_end=true; \texttt{// Stop condition}}
										
		\Else
			\State{num\_f2++};
		\EndIf
	\EndWhile
	\end{algorithmic}
\end{algorithm*}

\section{Exhaustive Search: A Feasible Approach?}
\label{app:exhaustive_search}

 {We report here the main insights about the \textsc{Exhaustive Search-based} (\textsc{ES}) algorithm. Ideally, this approach allow to compute the optimal (or near-optimal) planning by evaluating all the possible combinations of micro and macro \acp{gNB} in the considered scenario. More formally, \textsc{ES} is based on the following steps:}
\begin{enumerate}
\item  {extract all the possible combinations of micro and macro \acp{gNB};} 
\item  {for each combination, run the \textsc{install\_check} routine. If the solution is feasible, then run \textsc{associate\_pixels} and \textsc{compute\_obj} routines;}
\item  {select the best feasible solution that minimizes the objective function;}
\end{enumerate}

 {Focusing on the computational complexity of \textsc{ES}, let us assume that the routines to verify the constraints have the same complexity of \textsc{PLATEA}, as shown in Tab.~\ref{tab:comp_compl}. Consequently, the overall complexity of the constraint check is in the order of $O(|\mathcal{P}|\times|\mathcal{L}|^4\times|\mathcal{F}|)$, where we remind that $|\mathcal{P}|$, $|\mathcal{L}|$ and $|\mathcal{F}|$ represent the cardinality for the set of pixels, locations and frequencies, respectively. Clearly, the constraint evaluation has to be done for all the possible site combinations, starting from the subsets with cardinality equal to one and ending with the set $\mathcal{L}$. Since there are a total of $2^{|\mathcal{L}|}$ distinct site combinations (e.g., subsets of candidate locations), the overall complexity of \textsc{ES} is in the order of:}
\begin{equation}
\label{eq:complexity_es}
\mathcal{O}(2^{|\mathcal{L}|}\times|\mathcal{P}|\times|\mathcal{L}|^4\times|\mathcal{F}|)
\end{equation}
 {By analyzing the terms in Eq.~(\ref{eq:complexity_es}), we can conclude that \textsc{ES} is a very time consuming approach in the TMC scenario, which we remind is composed of $|\mathcal{L}|=69$ candidate locations, $|\mathcal{P}|=24318$ pixels and $|\mathcal{F}|=2$ frequencies. To practically demonstrate the complexity of this solution, we have run the \textsc{ES} implementation on the same high performance server used to execute \textsc{PLATEA}, \textsc{EA} and \textsc{MCMA}. After a period of 7 days, only $2^6$ out of the total $2^{69}$ subset combinations were analyzed, corresponding to less than $1.1 \cdot 10^{-17}$~[\%] of the solutions space. Although we recognize that the planning problem is solved offline (i.e., not during the management of the network), the searching of the best solution with \textsc{ES} appears to be a very challenging step. In addition, we remind that the variation of input parameters (including e.g., the weights $\alpha_{(l,f)}$ in the objective function) would require to further increase the required number of runs. Therefore, this evidence further motivates the need of adopting sub-optimal (yet meaningful) reference algorithms, like the \textsc{EA} and \textsc{MCMA} approaches adopted in our work to better position \textsc{PLATEA}.} 

\section{Reference Algorithms}
\label{app:reference_algorithms}

We describe here in more detail the \textsc{EA} and \textsc{MCMA} algorithms.

\subsection{\textsc{EA} Description}
\label{sec:ea}

Alg.~\ref{alg:evalalgo} reports the \textsc{EA} pseudo-code. This algorithm takes as input the number of deployed \acp{gNB} over the two frequencies, which are stored in the \texttt{num\_f1} and \texttt{num\_f2} parameters. \textsc{EA} then returns as output a warning flag, which is set to false if the installation have been unsuccessful (due to the fact the constraints are not ensured), true otherwise. In addition, the information about the obtained planning is returned in the $y$, $x$, $P^{\text{ADD-TS}}$, $P^{\text{ADD-NOTS}}$, $w$, $C^{\text{TOT}}$ variables. Initially (line 1), \textsc{EA} generates a random deployment with \texttt{num\_f1} $f1$ \acp{gNB} and \texttt{num\_f2} $f2$ \acp{gNB}. In the following, \textsc{EA} initializes a set of internal variables (line 2). Then, \textsc{EA} performs in line 3 the feasibility checks of: \textit{i}) power density limits outside exclusion zones, \textit{ii}) minimum distance from sensitive places, \textit{iii}) maximum number of \acp{gNB} installed in each site and \textit{iv}) site installation constraints. If all these constraints are ensured (line 4), \textsc{EA} performs the pixel-\ac{gNB} association (by first iterating over the $f1$ \acp{gNB} and then over the $f2$ \acp{gNB}) and then the solution is saved (lines 5-6).

\textbf{Computational Complexity.} The \textsc{generate\_sites} function has a complexity of $\mathcal{O}(|\mathcal{L}|)$. The complexity  of \textsc{initialize}, \textsc{install\_check}, \textsc{associate\_pixels} and \textsc{save\_sol} is reported in Tab.~\ref{tab:comp_compl}. Therefore, the total computational complexity of \textsc{EA} is in the order of $\mathcal{O}(|\mathcal{P}|\times|\mathcal{L}|^2\times|\mathcal{F}|)$.

\subsection{\textsc{MCMA} Description}

Alg.~\ref{alg:mcaalgo} reports the pseudo-code of \textsc{MCMA} algorithm. The algorithm requires as input the number of $f1$ \acp{gNB} to be installed. \textsc{MCMA} then produces as output a flag, indicating if a feasible solution has been found, and the problem variables $y$, $x$, $P^{\text{ADD-TS}}$, $P^{\text{ADD-NOTS}}$, $w$, $C^{\text{TOT}}$. The main intuition behind \textsc{MCMA} is to sequentially iterate over the number of $f2$ \acp{gNB} to be installed, i.e. from 1 up to \texttt{num\_f2\_max} (lines 5-21). In particular, \textsc{MCMA} generates the set of candidate \acp{gNB} from \texttt{num\_f1} and \texttt{num\_f2} (line 6), initializes the variables (line 7), runs the \textsc{install\_check} function (line 8) and eventually associates the pixels to the installed \acp{gNB} (line 10). The iteration stops if all the pixels have been served or the maximum number of $f2$ \acp{gNB} have been evaluated (line 13). Under this condition, the current solution is eventually saved and the algorithm is ended (lines 14-17). Otherwise, the current number of $f2$ \ac{gNB} is increased (line 18) and lines 5-21 are evaluated again.

\textbf{Computational complexity.} The functions employed by \textsc{MCMA} are the ones also used by \textsc{EA} (detailed in Sec.~\ref{sec:ea}) and \textsc{PLATEA} (detailed in Tab.~\ref{tab:comp_compl}). In addition, the while cycle in line 5 has a complexity of $\mathcal{O}(|\mathcal{L}|)$. Therefore, the total complexity of \textsc{MCMA} is in the order of $\mathcal{O}(|\mathcal{P}|\times|\mathcal{L}|^3\times|\mathcal{F}|)$.


\section{Additional Results}
\label{app:add_results}

In this section, we provide a set of additional results, in order to better position the \textsc{PLATEA} algorithm and the results presented in Sec.~\ref{sec:results}.

\subsection{Throughput Comparison}

Tab.~\ref{tab:compare_results_costs} in Sec.~\ref{sec:results} highlights that \textsc{PLATEA} achieves a better average throughput $T^{\text{AVG}}$ compared to \textsc{EA} and \textsc{MCMA}. However, no indication about the throughput of the single pixels (which we remind is denoted with $T_p$) is provided. Therefore, a natural question is: which is the performance of \textsc{PLATEA} when considering $T_p$ and not $T^{\text{AVG}}$? To answer this question, Fig.~\ref{fig:compare_throughput} reports the \ac{CDF} of the throughput $T_p$ obtained with \textsc{PLATEA}. In addition, the figure shows the \ac{CDF} obtained by running \textsc{EA}. Three considerations hold by analyzing Fig.~\ref{fig:compare_throughput}. First, \textsc{PLATEA} is able to guarantee more than 500~[Mbps] of throughput for more than 25\% of pixels. Second, the percentage of pixels receiving very low throughput is overall pretty limited (less than 10\%). Third, \textsc{EA} performs consistently worse, being its \ac{CDF} clearly moved on the left w.r.t. \textsc{PLATEA}.

\begin{figure}[t]
\centering
\includegraphics[width=7cm]{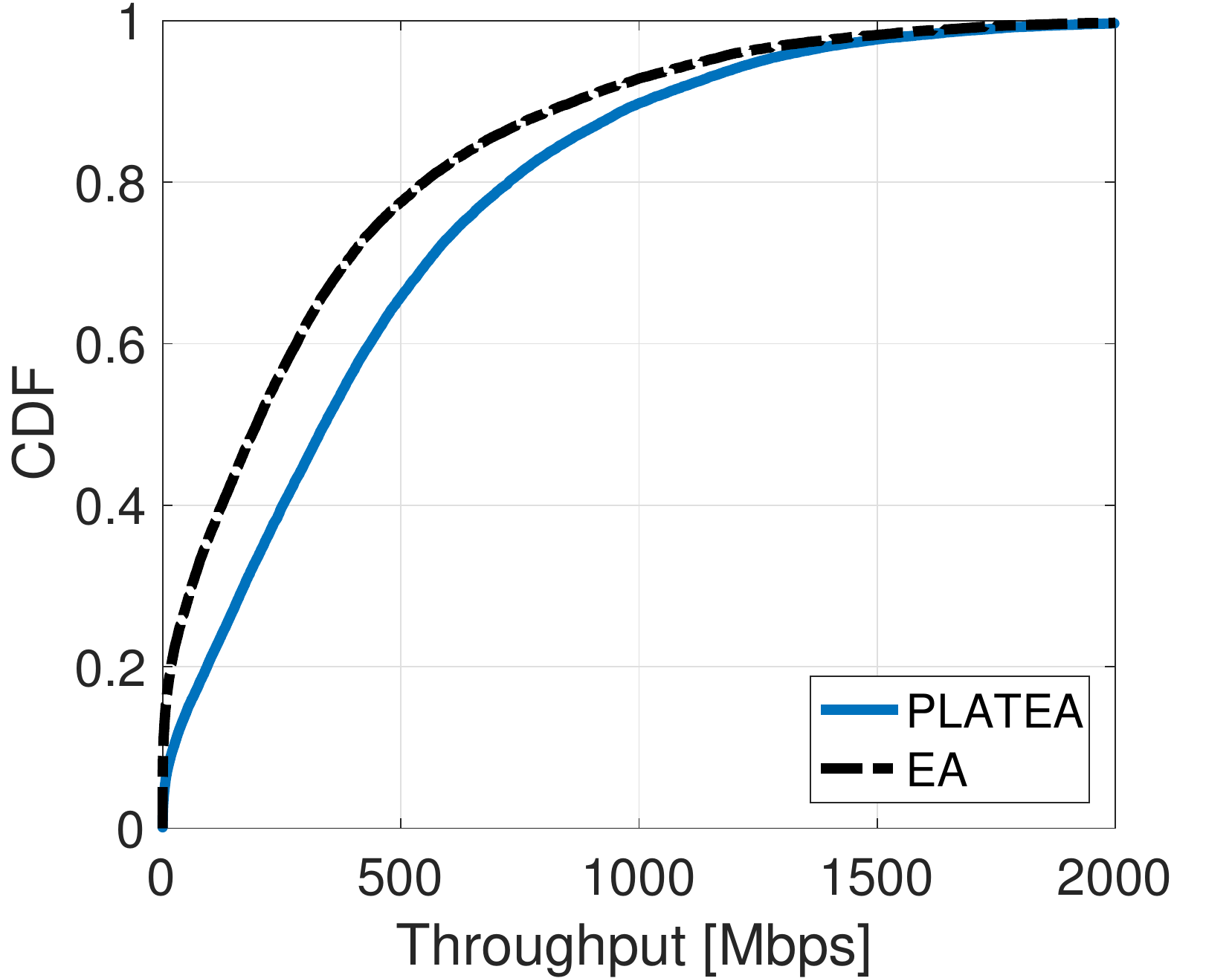}
\caption{CDF of the pixel throughput $T_p$ obtained by \textsc{PLATEA} and \textsc{EA}.}
\label{fig:compare_throughput}
\end{figure}


\subsection{Number of $f2$ \acp{gNB} selected by \textsc{MCMA}}

\begin{figure}[t]
\centering
\includegraphics[width=7cm]{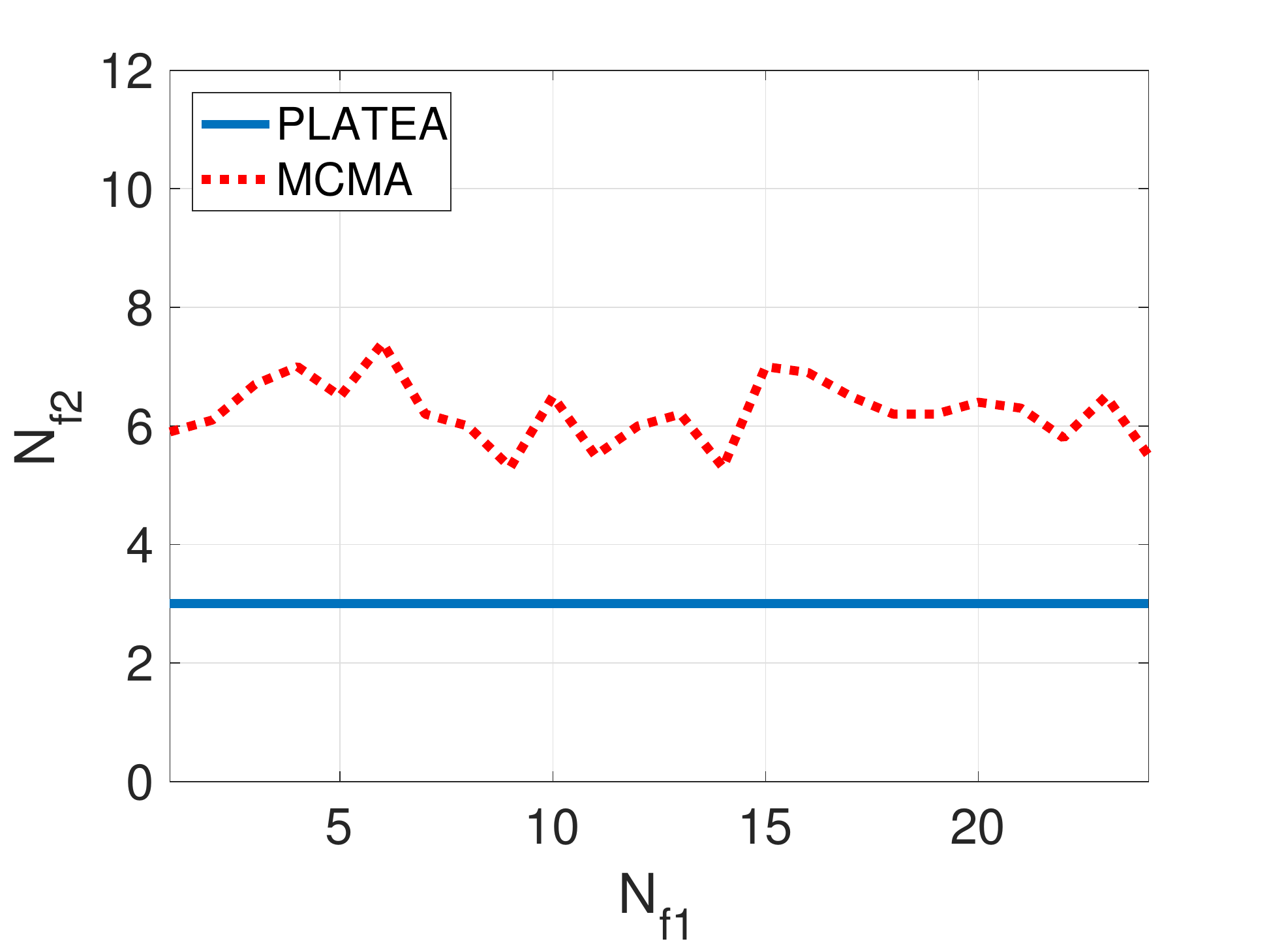}
\caption{Number of $f2$ \acp{gNB} installed by \textsc{PLATEA} and \textsc{MCMA} for different numbers of $f1$ \acp{gNB}.}
\label{fig:comparison_macro_micro}
\end{figure}

According to Tab.~\ref{tab:compare_results_costs} in Sec.~\ref{sec:results}, \textsc{MCMA} requires a consistently higher amount of $f2$ \acp{gNB} compared to \textsc{PLATEA}. However, a natural question is: Does \textsc{MCMA} always install more $f2$ \acp{gNB} w.r.t. \textsc{PLATEA}? To answer this question, we run \textsc{MCMA} by varying \texttt{num\_f1} between 1 and 24 (which corresponds to the maximum number of \ac{gNB} installed by \textsc{PLATEA} for very large values of $\alpha_{(l,f1)}$). For each value of \texttt{num\_f1}, we then perform 10 executions of \textsc{MCMA}, and then we collect the average number of installed $f2$ \acp{gNB} over the 10 runs. Fig.~\ref{fig:comparison_macro_micro} reports the variation of the average number of installed $f2$ \acp{gNB} vs. \texttt{num\_f1}. For completeness, the figure reports also the number of $f2$ \acp{gNB} that are installed by \textsc{PLATEA}. Interestingly, we can note that the average number of installed $f2$ \acp{gNB} is always higher than the one selected by \textsc{PLATEA}. This difference may be explained in the different planning policies adopted by the two solutions. \textsc{MCMA}, in fact, sequentially evaluates an increasing number of $f2$ \ac{gNB}, given the number of $f1$ \acp{gNB} that is passed as input parameter. On the other hand, \textsc{PLATEA} operates a wiser choice, by: i) evaluating a number of  {combinations} that increases with the number of \acp{gNB} that need to be installed, and ii) jointly varying the number of $f1$ \ac{gNB} and $f2$ \ac{gNB} when evaluating the problem constraints and the objective function.

\subsection{Impact of minimum distance constraint}
\label{app:min_distance}

\begin{figure}[t]
\centering
\subfigure[{$E^{\text{AVG}}$~[V/m]}]
{
    \includegraphics[width=42mm]{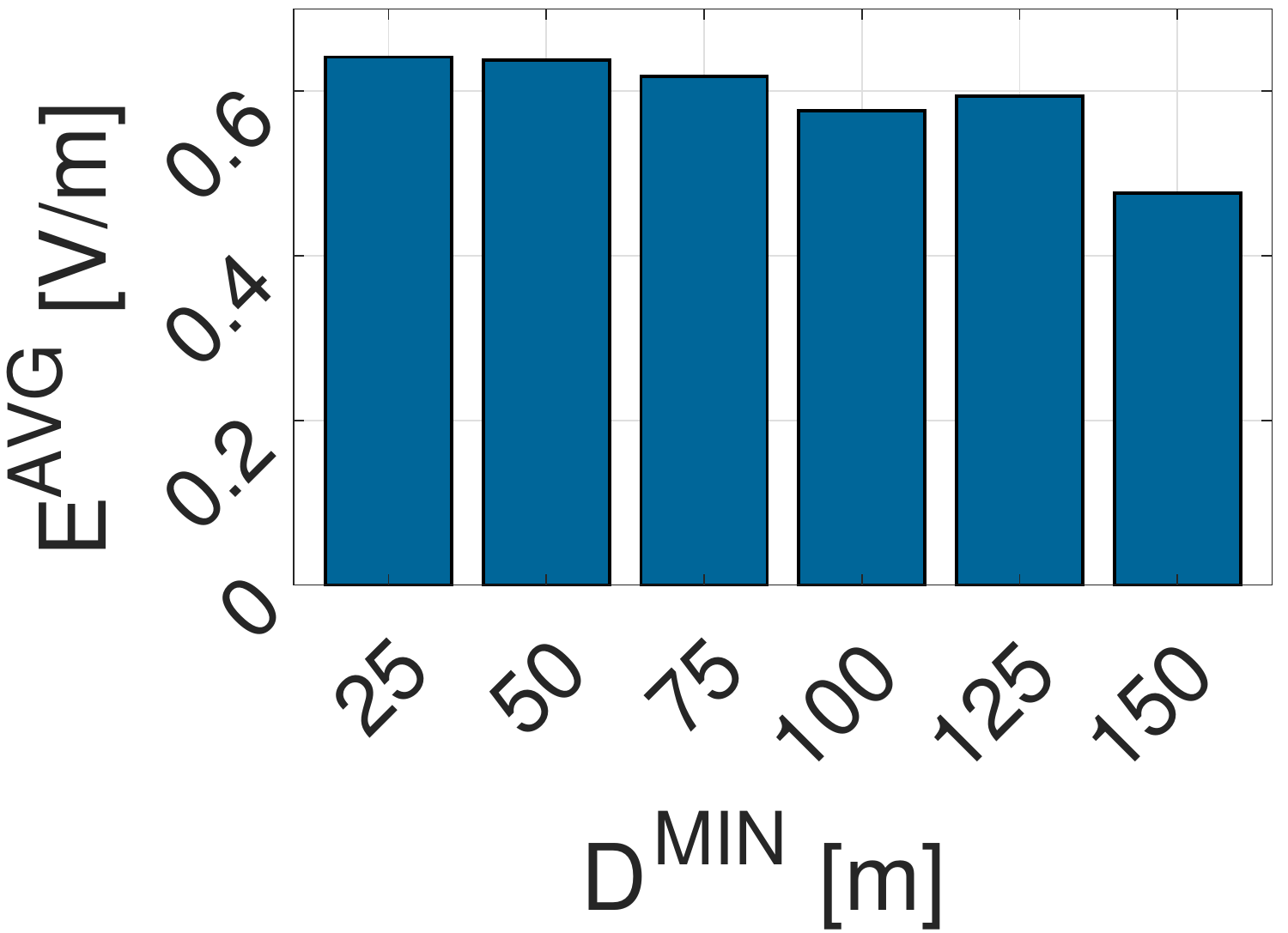}
    \label{fig:dmin_emf}
}
\subfigure[{$T^{\text{AVG}}$~[Mbps]}]
{
    \includegraphics[width=42mm]{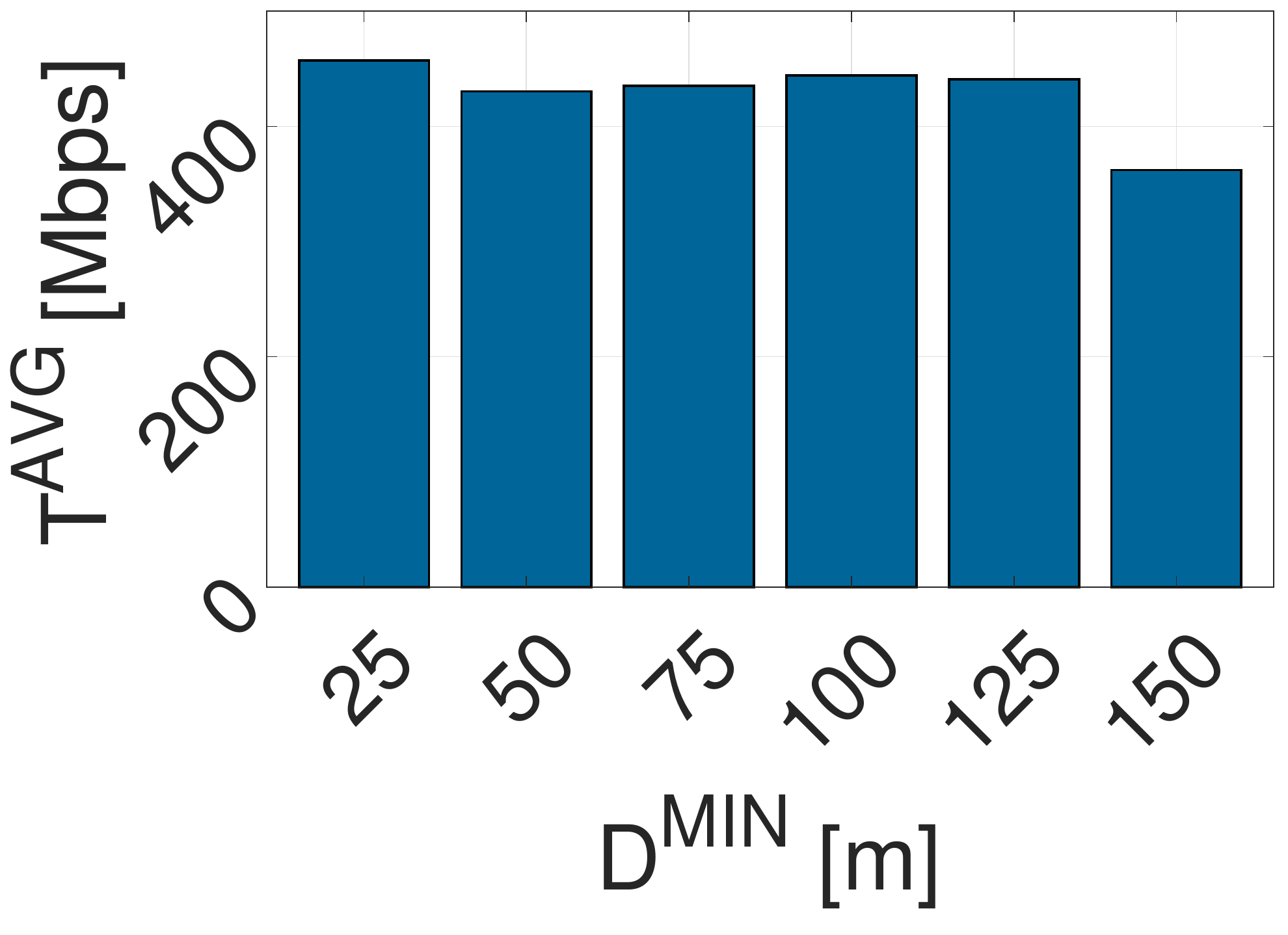}
    \label{fig:dmin_tavg}
}

\subfigure[$N_{\text{f1}}$]
{
    \includegraphics[width=42mm]{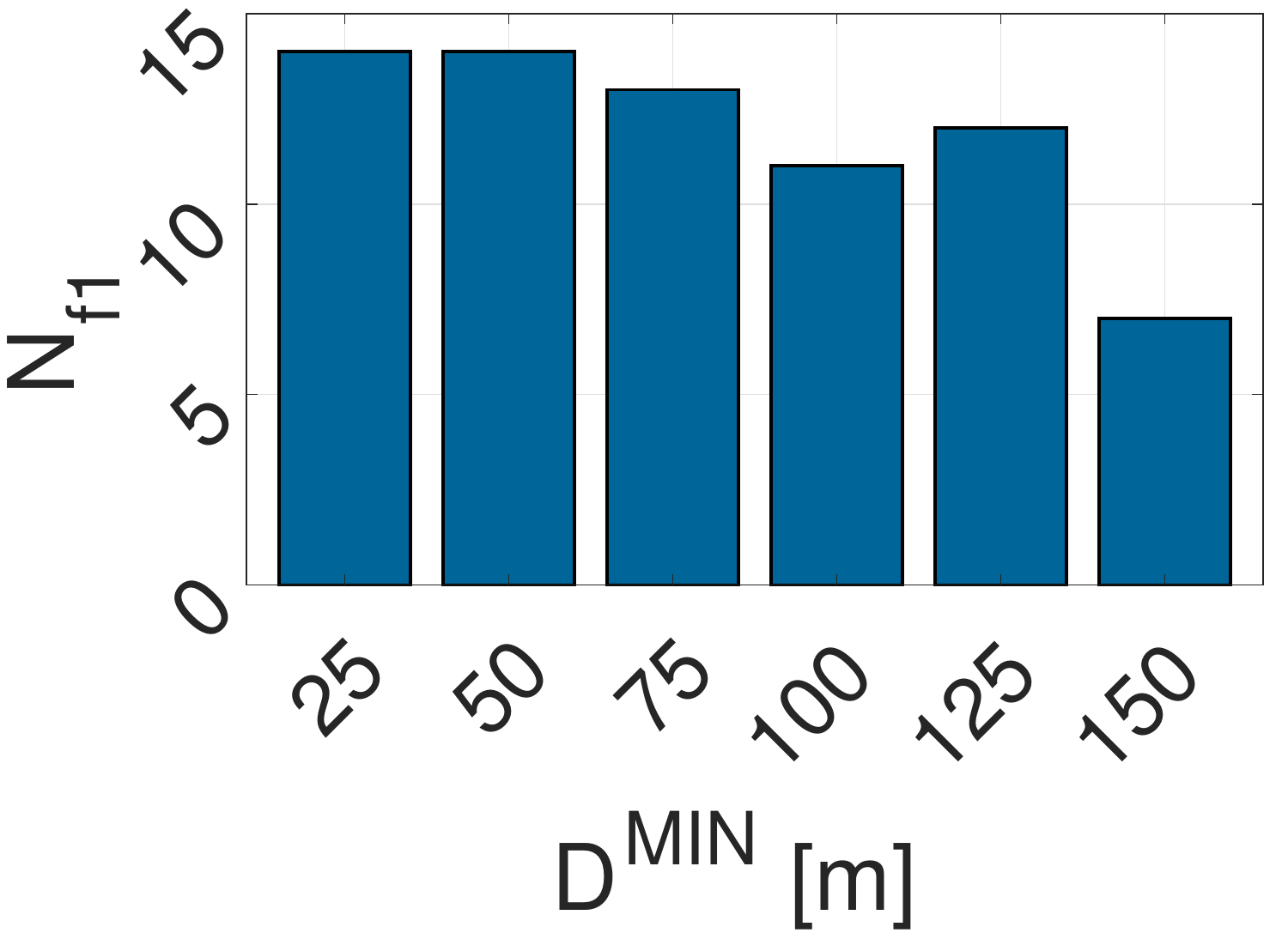}
    \label{fig:dmin_nf2}
}
\subfigure[{$X^{\text{NOT-SERVED}}$~[\%]}]
{
    \includegraphics[width=42mm]{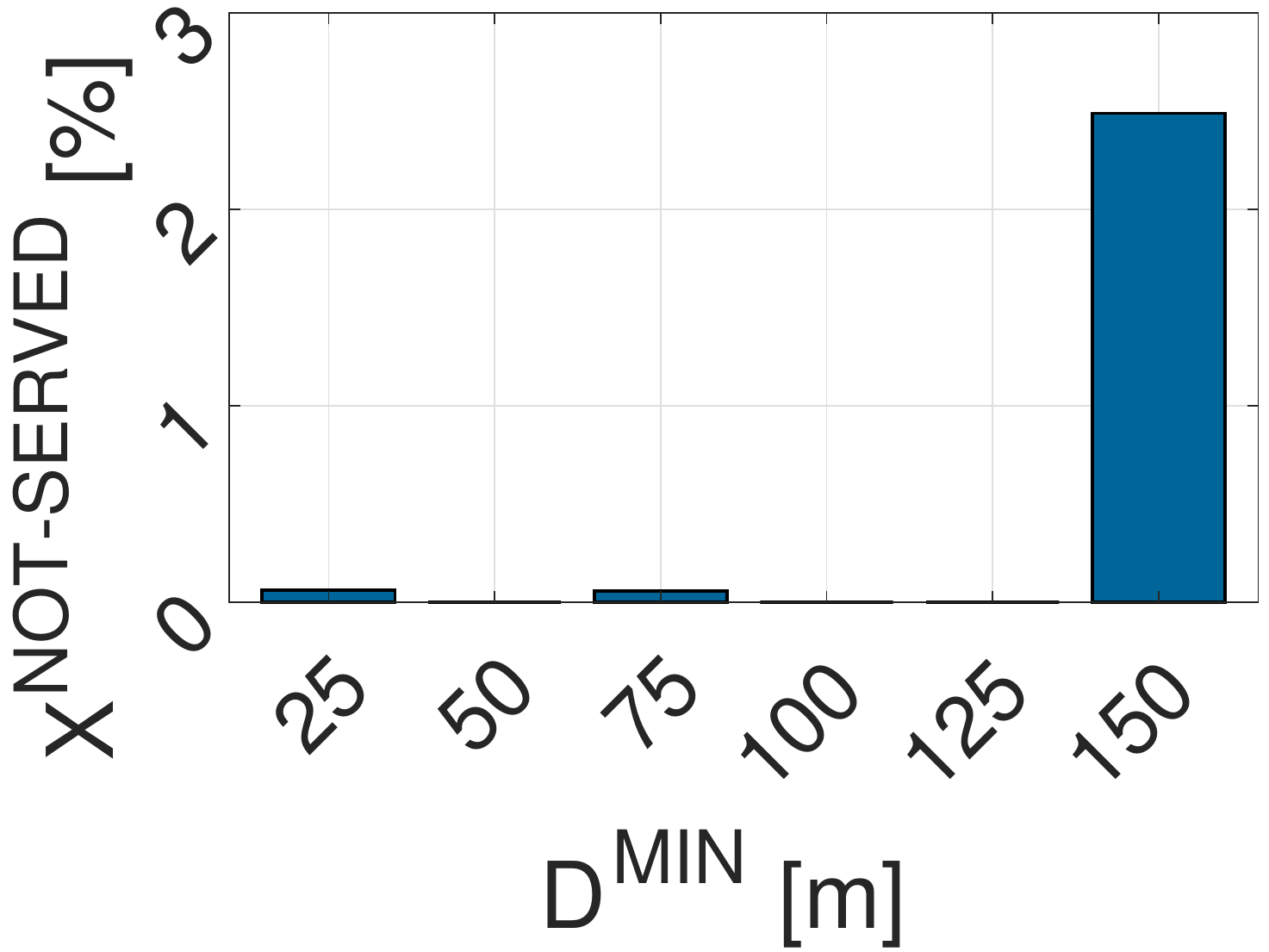}
    \label{fig:dmin_notserved}
}
\caption{Impact of $D^{MIN}$ variation on: \textit{i}) average electric field $E^{\text{AVG}}$, \textit{ii}) average throughput $T^{\text{AVG}}$, \textit{iii}) number $N_{f1}$ of $f1$ \acp{gNB} and \textit{iv}) percentage of not served pixels $X^{\text{NOT-SERVED}}$.}
\label{fig:var_dmin}
\end{figure}

The set of regulations taken under consideration in this work (namely \textit{R6} of Tab.~\ref{tab:regulation_comparison}) includes a minimum distance $D^{\text{MIN}}$ between each installed \acp{gNB} and each sensitive place. A natural question is then: What is the impact of $D^{\text{MIN}}$ variation on the planning? To answer this question, we have assumed that $D^{\text{MIN}}$ can take different values w.r.t. the ones reported in the regulations. In particular, we have considered the following range of values for $D^{\text{MIN}}=\{25,50,75,100,125,150\}$. We have then run \textsc{PLATEA} algorithm for each $D^{\text{MIN}}$ value, and we have collected the performance metrics. Fig.~\ref{fig:var_dmin} reports the obtained results in terms of: \textit{i}) average electric field $E^{\text{AVG}}$, \textit{ii}) average throughput $T^{\text{AVG}}$, \textit{iii}) number $N_{f1}$ of $f1$ \acp{gNB}  and \textit{iv}) percentage of not served pixels $X^{\text{NOT-SERVED}}$. We remind that $D^{\text{MIN}}=100$~[m] is the value currently enforced in the Rome regulations. Interestingly, when $D^{\text{MIN}}<100$~[m], $E^{\text{AVG}}$, $T^{\text{AVG}}$ and $N_{f1}$ tend to be increased, due to the fact that it is possible to install more \acp{gNB} over the territory while ensuring the minimum distance constraint. On the other hand, the opposite holds $D^{\text{MIN}}>100$~[m]. In particular, when $D^{\text{MIN}}=150$~[m], an abrupt decrease of $E^{\text{AVG}}$, $T^{\text{AVG}}$ and $N_{f1}$ is observed. These results are a direct consequence of the $D^{\text{MIN}}$, which prevents the installation of \acp{gNB} in many \ac{TMC} locations. In addition, the setting of $D^{\text{MIN}}=150$~[m] is also detrimental for \ac{UE}, since the percentage of not served pixels is abruptly increased to more than 2\%. Therefore, we can conclude that the variation of $D^{\text{MIN}}$ affects the selected planning, especially for values larger than 100~[m].

\subsection{Impact of pre-5G exposure levels} 
\label{app:pre-exposure}

\begin{table}[t]
\centering
\caption{\textsc{PLATEA} performance vs. different pre-5G exposure terms.}
\label{tab:compare_results_costs_background}
\begin{tabular}{|c|c|c|c|}
\hline
\rowcolor{Coral} & \multicolumn{3}{c|}{\textbf{Pre-5G Exposure}}\\[-0.05em]
\rowcolor{Coral} \multirow{-2}{*}{\textbf{Metric}} & \textbf{1.0~[V/m]} & \textbf{1.5~[V/m]} & \textbf{2.0~[V/m]} \\
\hline
& &  &\\[-0.95em]
         $C^{\text{TOT}}$~[k\euro] & 371.9 & 343.3 & 267.9 \\
\rowcolor{Linen}         $N_{f1}$ & 9.9 & 7.9 & 4 \\
         $N_{f2}$ & 3 & 3.1 & 3  \\
\rowcolor{Linen} & &  &\\[-0.95em]
\rowcolor{Linen}         $X^{\text{SERVED}}_{f1}$ & 8779 & 7409 & 4501\\
& &  &\\[-0.95em]
         $X^{\text{SERVED}}_{f2}$ & 15534  & 16869 & 19789 \\
\rowcolor{Linen} & &  &\\[-0.95em]
\rowcolor{Linen}         $X^{\text{NOT-SERVED}}$~[\%] & 0.01 & 0.16 & 0.11 \\
& &  &\\[-0.95em]
         $T^{\text{AVG}}$~[Mbps] &  {429.99} &  {411.74} &  {374.42} \\
\rowcolor{Linen} & &  &\\[-0.95em]
\rowcolor{Linen}         $E^{\text{AVG}}$~[V/m] & 1.18 & 1.62 &2.06\\
\hline
\end{tabular}
\end{table}

 {The goal of this part is to study the impact of} adding the pre-5G exposure term on the 5G planning. As detailed in Appendix~\ref{app:emf_measurements}, the actual electric field strength in \ac{TMC} hardly exceeds 1~[V/m], even for the pixels that are at the shortest distance and \ac{LOS} conditions w.r.t. the serving \ac{gNB}. On the other hand, the electric field rapidly decreases to negligible values (below 1~[V/m]) as the distance between the pixel and the radiating \acp{gNB} increases. However, in order to introduce a set of conservative (and worst case) scenarios, we assume: \textit{i}) three different settings of pre-5G exposure, namely 1~[V/m], 1.5~[V/m] and 2~[V/m], and \textit{ii}) a uniform term of pre-5G exposure for all the pixels in the \ac{TMC} scenario.\footnote{The application of a background exposure of 2~[V/m] is equivalent to the case in which the limit $L^{\text{RES}}$ equal to 4~[V/m], a value currently in use in many Swiss cantons and in Monaco. Alternatively, the background exposure can be also seen as a margin that is left for the deployment of post-5G networks.} 
As a consequence, the cumulative pre-5G power density is set as $\sum_{f \in \mathcal{F}}P^{\text{BASE}}_{(p,f)}=\{0.00265, 0.00597, 0.0106\}~[\text{W/m}^2], \quad \forall p \in \mathcal{P}$, respectively. In addition, since the same restrictive limit is applied for all the pixels of \ac{TMC}, Eq.~(\ref{eq:power_density_limit_res}) is rewritten as:
\begin{equation}
\label{eq:power_density_limit_res_revisited}
(1-w_p)\cdot\underbrace{\sum_{f \in \mathcal{F}} P^{\text{BASE}}_{(p,f)}}_{\text{pre-5G Exposure Term}} + \underbrace{\sum_{f \in \mathcal{F}} P^{\text{ADD-TS}}_{(p,f)}}_{\text{5G Exposure Term}}  \leq L^{\text{RES}}, \quad \forall p \in  \mathcal{P}^{\text{RES}} 
\end{equation}
where $L^{\text{RES}}=0.1$~[W/m$^2$] (in accordance to \textit{R6} of Tab.~\ref{tab:regulation_comparison}). Intuitively, the introduction of the pre-5G exposure term may limit the amount of 5G \acp{gNB} that are installed over the territory, since it is more challenging to ensure Eq.~(\ref{eq:power_density_limit_res_revisited}) compared to the case in which the pre-5G technologies are dismissed.

Tab.~\ref{tab:compare_results_costs_background} reports the performance metrics of \textsc{PLATEA} (averaged over 10 runs) vs. the different values of pre-5G exposure. When the pre-5G exposure is increased, we can note: \textit{i}) a reduction in the number of $f1$ \acp{gNB}, and consequently of total costs, \textit{ii}) an increase in the number of pixels served by $f2$ \acp{gNB},  \textit{iii}) a throughput decrease, and \textit{iv}) an \ac{EMF} increase, mainly due to the pre-5G exposure term. Overall, these results prove that the performance metrics are impacted by the level of background exposure. However, \textsc{PLATEA} is always able to retrieve a feasible planning, with a percentage of unserved pixels at most equal to 0.16\%. 



\subsection{ {Impact of frequency reuse scheme}} 
\label{app:freq_reuse}

 {Finally, we have considered the case in which the frequency reuse is different than unity. More formally, the frequency reuse factor can be easily changed by varying the value of $\epsilon^{\text{F-REUSE}}_f$, which appear in Eq.~(\ref{eq:capacity}), and therefore in the throughput metrics reported in Tab.~\ref{tab:metrics}. In the following, we consider the following variation of frequency reuse scheme: $\epsilon^{\text{F-REUSE}}_f=\{1,3,7\}$ (for both $f1$ and $f2$). In addition, we adopt the conservative assumption that the change of $\epsilon^{\text{F-REUSE}}_f$ does not affect the \ac{SIR} computation in Eq.~(\ref{eq:sir}).}\footnote{ {Intuitively, the SIR may be improved when the frequency reuse scheme is increased, because a lower number of interfering \acp{gNB} are counted in the \ac{SIR} denominator of Eq.~(\ref{eq:sir}). We refer the reader to} \cite{marzetta}  {for a thorough analysis of this aspect. In this work, we consider a worst-case scenario, in which the increase of $\epsilon^{\text{F-REUSE}}_f$ reduces the available bandwidth of Eq.~(\ref{eq:capacity}), while the SIR in Eq.~(\ref{eq:sir}) is not improved.}} 

\begin{table}[t]
\centering
\caption{ {Impact of the frequency reuse reuse scheme $\epsilon^{\text{F-REUSE}}_f$ on \textsc{PLATEA}.}}
\label{tab:frequency_reuse}
\begin{tabular}{|c|c|c|c|}
\hline
\rowcolor{Coral} & & &\\[-0.95em]
\rowcolor{Coral} \textbf{Metric} & $\epsilon^{\text{F-REUSE}}_f=1$ & $\epsilon^{\text{F-REUSE}}_f=3$ & $\epsilon^{\text{F-REUSE}}_f=7$  \\
\rowcolor{Coral} & & &\\[-0.95em]
\hline
\rowcolor{Linen} & & &\\[-0.95em]
\rowcolor{Linen}         $C^{\text{TOT}}$~[k\euro] & 386.1 & 354.3 & 320.2 \\
         $N_{f1}$ & 10.7 & 8.9 & 7 \\
\rowcolor{Linen}         $N_{f2}$ &  3 &  3 & 3 \\[0.05em]
 & & &\\[-0.95em]
        $X^{\text{SERVED}}_{f1}$ & 9103 & 7465.3 & 5999.5 \\[0.05em]
& & &\\[-0.95em]
\rowcolor{Linen}        $X^{\text{SERVED}}_{f2}$ & 15307 & 16845.3 & 18305.3 \\[0.05em]
 & & &\\[-0.95em]
         $X^{\text{NOT-SERVED}}$~[\%] &  0.03 & 0.03 & 0.05 \\
 & & &\\[-0.95em]
\rowcolor{Linen}         $T^{\text{AVG}}$~[Mbps]  & 428.3 & 144.1 & 60.4 \\
 & & &\\[-0.95em]
        $E^{\text{AVG}}$~[V/m]  &  0.57 & 0.53 & 0.48 \\
\hline
\end{tabular}
\end{table}

 {Tab.~\ref{tab:frequency_reuse} reports the main metrics when \text{PLATEA} solves the TMC scenario under 1, 3, and 7 frequency reuse schemes (with values averaged over 10 independent runs). As expected, the increase of $\epsilon^{\text{F-REUSE}}_f$ notably reduce the throughput $T^{\text{AVG}}$. In particular, since it is more challenging to ensure the minimum traffic constraint for the micro gNBs, \textsc{PLATEA} tends to associate more pixels to the macro gNBs. Consequently, the number of micro gNBs is decreased, and therefore the overall costs  $C^{\text{TOT}}$ are reduced. This is in turn beneficial for the average EMF levels $E^{\text{AVG}}$, which pass from $0.57$~[V/m] when $\epsilon^{\text{F-REUSE}}_f=1$ to 0.48~[V/m] when $\epsilon^{\text{F-REUSE}}_f=7$.}

\end{document}